\tikzset{main node/.style={circle,fill=blue!20,draw,minimum size=1cm,inner sep=0pt},}
\definecolor{darkblue}{RGB}{0, 0, 125} % change this to 0,0,0 to have comments in black, or to {0,0,125}
\newcommand\reviewedFirstRev[1]{#1}
\newcommand\reviewed[1]{#1}
\newtheorem{theorem}{Theorem}
\newtheorem{corollary}[theorem]{Corollary}
\newtheorem{lemma}[]{Lemma}
\newtheorem{definition}{Definition}%
\newcommand{\vr}{\boldsymbol{r}}
\newcommand{\vv}{\boldsymbol{v}}
\newcommand{\vu}{\boldsymbol{u}}
\newcommand{\vh}{\boldsymbol{h}}
\newcommand{\vg}{\boldsymbol{g}}
\newcommand{\vd}{\boldsymbol{d}}
\newcommand{\ve}{\boldsymbol{e}}
\newcommand\vp{\boldsymbol{p}}
\newcommand\vq{\boldsymbol{q}}
\newcommand{\vpi}{\boldsymbol{\pi}}
\newcommand{\vmu}{\boldsymbol{\mu}}
\newcommand{\valpha}{\boldsymbol{\alpha}}
\newcommand{\vdelta}{\boldsymbol{\delta}}
\newcommand{\vone}{\boldsymbol{1}}
\newcommand{\vzero}{\boldsymbol{0}}
\newcommand{\widx}{\lambda}
\newcommand\vy{\boldsymbol{y}}
\newcommand\vz{\boldsymbol{z}}
\newcommand{\ie}{\emph{i.e.}}
\def\gA{{\mathcal{A}}}
\def\gS{{\mathcal{S}}}
\def\gR{{\mathcal{R}}}
\newcommand\mX{\boldsymbol{X}}
\newcommand{\mP}{\boldsymbol{P}}
\newcommand{\mDelta}{\boldsymbol{\Delta}}
\newcommand\mA{\boldsymbol{A}}
\newcommand\mB{\boldsymbol{B}}
\newcommand\mC{\boldsymbol{C}}
\newcommand\mD{\boldsymbol{D}}
\newcommand\mE{\boldsymbol{E}}
\newcommand\mF{\boldsymbol{F}}
\newcommand\mG{\boldsymbol{G}}
\newcommand\mI{\boldsymbol{I}}
\newcommand\mU{\boldsymbol{U}}
\newcommand\mV{\boldsymbol{V}}
\newcommand\mW{\boldsymbol{W}}
\newcommand\bff[1]{\boldsymbol{#1}} % caligraphic
\newcommand\floor[1]{\left\lfloor#1\right\rfloor}
\DeclareMathOperator*{\argmax}{arg\,max}
\DeclareMathOperator*{\argmin}{arg\,min}
\newcommand\real{\mathbb{R}}
\newcommand\N{\mathbb{N}} % natural numbers
\begin{document}

\title[Whittle Index]{Testing Indexability and Computing Whittle and Gittins Index in Subcubic Time}

%%=============================================================%%
%% Prefix	-> \pfx{Dr}
%% GivenName	-> \fnm{Joergen W.}
%% Particle	-> \spfx{van der} -> surname prefix
%% FamilyName	-> \sur{Ploeg}
%% Suffix	-> \sfx{IV}
%% NatureName	-> \tanm{Poet Laureate} -> Title after name
%% Degrees	-> \dgr{MSc, PhD}
%% \author*[1,2]{\pfx{Dr} \fnm{Joergen W.} \spfx{van der} \sur{Ploeg} \sfx{IV} \tanm{Poet Laureate} 
%%                 \dgr{MSc, PhD}}\email{iauthor@gmail.com}
%%=============================================================%%

%\author*[1,2]{\fnm{First} \sur{Author}}\email{iauthor@gmail.com}
%
%\author[2,3]{\fnm{Second} \sur{Author}}\email{iiauthor@gmail.com}
%\equalcont{These authors contributed equally to this work.}
%
%\author[1,2]{\fnm{Third} \sur{Author}}\email{iiiauthor@gmail.com}
%\equalcont{These authors contributed equally to this work.}
%
%\affil*[1]{\orgdiv{Department}, \orgname{Organization}, \orgaddress{\street{Street}, \city{City}, \postcode{100190}, \state{State}, \country{Country}}}
%
%\affil[2]{\orgdiv{Department}, \orgname{Organization}, \orgaddress{\street{Street}, \city{City}, \postcode{10587}, \state{State}, \country{Country}}}
%
%\affil[3]{\orgdiv{Department}, \orgname{Organization}, \orgaddress{\street{Street}, \city{City}, \postcode{610101}, \state{State}, \country{Country}}}

\author{\fnm{Nicolas} \sur{Gast} }\email{nicolas.gast@inria.fr}%\orcidlink{0000-0001-6884-8698}
\author{\fnm{Bruno} \sur{Gaujal} }\email{bruno.gaujal@inria.fr}%\orcidlink{0000-0001-9081-8401}
\author{\Letter\ \fnm{Kimang} \sur{Khun}}\email{khun.kimang@gmail.com}
\affil{\orgname{Univ. Grenoble Alpes, Inria, CNRS, Grenoble INP*, LIG}, \postcode{38000}, \state{Grenoble}, \country{France}}
\affil[*]{\orgname{Institute of Engineering Univ. Grenoble Alpes}}

%\linenumbers

%%==================================%%
%% sample for unstructured abstract %%
%%==================================%%

\abstract{
    Whittle index is a generalization of Gittins index that provides very efficient allocation rules for restless multi-armed bandits. In this work, we develop an algorithm to test the indexability and compute the Whittle indices of any finite-state restless bandit arm. This algorithm works in the discounted and non-discounted cases, and can compute Gittins index. Our algorithm builds on three tools: (1) a careful characterization of Whittle index that allows one to compute recursively the $k$th smallest index from the $(k-1)$th smallest and to test indexability, (2) the use of the Sherman-Morrison formula to make this recursive computation efficient, and (3) a sporadic use of the fastest matrix inversion and multiplication methods to obtain a subcubic complexity. We show that an efficient use of the Sherman-Morrison formula leads to an algorithm that computes Whittle index in $(2/3)n^3 + o(n^3)$ arithmetic operations, where $n$ is the number of states of the arm.  The careful use of fast matrix multiplication leads to the first subcubic algorithm to compute Whittle or Gittins index: By using the current fastest matrix multiplication, the theoretical complexity of our algorithm is $O(n^{2.5286})$. We also develop an efficient implementation of our algorithm that can compute indices of Markov chains with several thousands of states in less than a few seconds.
}

\keywords{Whittle Index, Gittins Index, Restless Bandit, Multi-armed Bandit, Sherman-Morrison, Markov Decision Process, Fast Matrix Multiplication}

%%\pacs[JEL Classification]{D8, H51}

%%\pacs[MSC Classification]{35A01, 65L10, 65L12, 65L20, 65L70}

\maketitle

\section{Introduction}
\label{sec1}

Markovian bandits form a subclass of multi-armed bandit problems in which each arm has an internal state that evolves over time in a Markovian manner, as a function of the decision maker's actions.  In such a problem, at each time step, the decision maker observes the state of all arms and chooses which one to activate.
%If each arm has up to $n$ states, then an $M$-arm problem can be represented by a Markov decision process (MDP) with $n^M$ states.
When the state of an arm evolves only when this arm is chosen, one falls into the category of \emph{rested} Markovian bandits for which an optimal policy (in the discounted case) was found by Gittins \citep{gittinsBanditProcessesDynamic1979a}.  When the state of an arm can also evolve when the arm is not chosen, the problem is called a \emph{restless bandit} problem, and computing an optimal policy is computationally difficult \citep{papadimitriou1994complexity}.
% I would rather say "... computing an optimal policy is a NP-hard problem \citep{...}"

In his seminal paper \citep{whittle1988restless}, Whittle proposed a very efficient heuristic: For each arm, an index function maps each state of the arm to a real number. The Whittle index policy then consists in activating the arms having the highest index first. This heuristic generalizes Gittins index to restless bandits. Contrary to the rested case, the Whittle index policy is in general not optimal. Yet, this policy has been proven to be very efficient over the years: up to a condition called \emph{indexability}, Whittle index has been shown to be (in the undiscounted case) asymptotically optimal as the number of arms grows to infinity \reviewedFirstRev{under certain technical assumptions} \citep{verloop2016asymptotically, lott2000optimality, weber1990index}. Moreover, the heuristic performs extremely well in practice \citep{glazebrook2006some, ansell2003whittle, glazebrook2002index}. Restless bandits and Whittle index have been applied to many scheduling and resource allocation problems such as wireless communication \citep{aalto2019whittle, liu2010indexability}, web crawling \citep{avrachenkov2022whittle, nino2014dynamic}, congestion control \citep{avrachenkov2013congestion, avrachenkov2018impulsive}, queueing systems \citep{scully2018soap, aalto2011properties,aalto2009gittins,borkar2017whittle, larranaga2015asymptotically, archibald2009indexability, glazebrook2009index}, and clinical trials \citep{villar2015multi}. 

The above examples show that, when a problem is indexable, computing Whittle index is a very efficient way to construct \reviewedFirstRev{a} nearly-optimal heuristic. This raises a few important questions, that we study in this paper: 
\begin{itemize}
    \item Is testing indexability computationally hard? 
    \item Is there an efficient algorithm to compute Whittle index?
    \item Is Whittle index harder to compute than Gittins index? 
\end{itemize}

\textbf{Related work.}
The computation of Gittins index has received a lot of attention in the past, see for instance \citep{chen1986linear, katehakis1987multi, nino20072, sonin2008generalized} and the recent survey \citep{chakravorty2014multi}. For a $n$-state arm, \reviewedFirstRev{the algorithms having the smallest complexity perform $(2/3)n^3+O(n^2)$ arithmetic operations \citep{chakravorty2014multi}. Note that in page $4$ of \citep{nino2020fast} the author claims that it is unlikely that this complexity can be improved}. As we see later, we do improve upon this complexity. 

Concerning Whittle index, to the best of our knowledge, there are very few efficient general-purpose algorithms to test indexability, see \emph{e.g.} \citep{nino2010characterization}, and most papers studying Whittle index either assume that the studied model is indexable or focus on specific classes of restless bandits for which the structure of arms can be used to show indexability, see \emph{e.g.} \citep{aalto2011properties,akbarzadeh2019restless,akbarzadeh2021maintenance,borkar2017whittle}.  Assuming indexability, the computation of Whittle index has been considered by a few papers.
%\KK{To rephrase the second sentence of this paragraph: see Minor comments of Reviewer 2}

The most efficient numerical algorithm to compute Whittle index is recently presented in \citep{nino2020fast}. This algorithm, called \emph{fast-pivoting} algorithm,  performs $(2/3)n^3+O(n^2)$ arithmetic operations\footnote{multiplications and additions of real numbers, regardless of their values} if the initialization phase is excluded from the count. This is done by using the parametric simplex method and  exploiting the special structure of this linear system to  reduce the complexity of simplex pivoting steps. This fast-pivoting algorithm is an efficient implementation of adaptive-greedy algorithm \citep{nino2007dynamic}. 
%Also, the algorithm is only designed for discounted restless bandit and no explicit form of computations is given for non-discounted restless bandit, $\beta=1$.
Based on a geometric interpretation of Whittle index, the authors in \citep{akbarzadeh2020conditions} propose a refinement of the adaptive-greedy algorithm of \citep{nino2007dynamic} to compute Whittle indices of all indexable restless bandits. For a $n$-state arm, the refined algorithm achieves a  $O(n^3)$ complexity  by using the  Sherman-Morrison formula. The authors also propose a few checkable conditions to test indexability. However, those conditions are not necessary for indexability, which means that if an arm does not verify the conditions, we cannot conclude that the arm is non-indexable and an algorithm to check indexability is still needed. Also, no detailed description is given for adapting those conditions and their algorithm to restless bandit without discount. A thorough comparison between our algorithm and \citep{akbarzadeh2020conditions,nino2020fast} is given in Appendix~\ref{apx:comparison}.  For continuous-time $n$-state restless bandits, the work of \citep{ayesta2021computation} proposes an algorithm to check indexability and compute Whittle index with a complexity exponential in the number of states $n$ of each arm. According to Remark 4.1 of that paper, this complexity can be reduced to $O(n^5)$ if the restless bandit is known to be indexable and threshold-based policies are optimal. It is stated that their approach is not applicable for discounted restless bandits.

While computing the Whittle indices of a known arm's model is still a challenge, there is interesting work in trying to learn Whittle index when only the arm's simulator is given and the arm's model is unknown. For instance, \citep{gibson2021novel, avrachenkov2022whittle, fu2019towards} use Q-learning algorithm to estimate Whittle index as time evolves in finite-state restless bandits. Moreover,
the work of \citep{nakhleh2021neurwin} uses deep reinforcement learning framework to estimate Whittle indices of the arms with large state space or convoluted transition kernel, assuming a notion of strong indexability. 
%'' property introduced by the authors, deadline scheduling, recovering bandits, and wireless scheduling problems are considered and their method numerically outperforms or matches state-of-the-art control policies in each of the three problems.
For learning aspect, \reviewedFirstRev{the work of \citep{gibson2021novel} shows as to learn Whittle index in non-discounted case by maintaining two Q-functions, updating them using Q-learning algorithm, and deducing Whittle index from them when needed. The way that Whittle indices are computed is very close to our work but less efficient than our algorithm since the authors are more interested in learning the index. 
%In addition, there is no specification of the structure of MDPs in the paper. In multichain non-discounted MDPs, Q-function is generally not unique (see \citep[Chapter~9]{putermanMarkovDecisionProcesses1994}).
% The authors of \citep{multiArmedAction} consider a more general problem than non-discounted restless bandit and introduces a new index function that relies on Linear Programming solution in \emph{finite-horizon} setting and proposes an optimistic algorithm to learn their index function when the model is not given.}
}
% \KK{A discuter: Reviewer 2 says that we do not cite papers for learning in restless bandit. Here, we cite papers for learning Whittle index because our paper is about Whittle index.}
\medskip

\textbf{Contributions.} In this paper, we investigate Whittle index computation in restless multi-armed bandit problems and present four main contributions. 

\reviewedFirstRev{Our first contribution is to discuss the ambiguities in the classical definitions of indexability. Classical definitions assume that an arm is indexable if the optimal policy is a non-decreasing function of some penalty term $\lambda$. While this definition works for most practical cases, it is not always precise enough because the optimal policy is in general not unique. In our definition, we specify the notion of increasingness that should be used. Our definition guarantees the uniqueness of Whittle indices. Note that our definition is the same as the one used in some recent papers (\emph{e.g.}, \cite{nino2020fast}), but the ambiguity of the classical definition seems rarely mentioned.}

Our second contribution is to propose a unified algorithm that computes the Whittle indices for both discounted and non-discounted restless bandits. Our algorithm, which can be viewed as a refinement of the algorithm in \citep{akbarzadeh2020conditions}, tests whether the input arm is indexable or not, and computes Whittle index if the arm is. As a byproduct, our algorithm can compute Gittins index in rested bandits which are a subclass of restless bandits.  This algorithm computes the indices in increasing order, and relies on an efficient use of the Sherman-Morrison formula to compute Whittle index in $(2/3)n^3 + O(n^2)$ plus subcubic time \citep{strassen1969gaussian} to solve a linear system of order $n$. This algorithm can detect on the fly if a computed index violates the indexability condition, which adds an extra $(1/3)n^3+O(n^2)$ arithmetic operations. This later test is optional: the complexity of our algorithm is $n^3+o(n^3)$ when testing indexability and $(2/3)n^3 +o(n^3)$ without the test. These two complexities are comparable to the ones excluding the common initialization phase of reduced-pivoting indexability (RPI) and fast-pivoting adaptive greedy (FPAG) algorithms in \citep{nino2010characterization}. \reviewedFirstRev{For discounted problems, our algorithm works for any finite-state arm.  For non-discounted problems, our algorithm takes as an input any arm and can output three results: the arm is indexable, non-indexable, or multichain. % These three cases are not mutually exclusive.
We show the correctness of the algorithm which proves that for unichain arms, our algorithm have soundness and completeness properties.
The possible outputs of our algorithm are summarized in Figure~\ref{fig:possible_outputs}.}

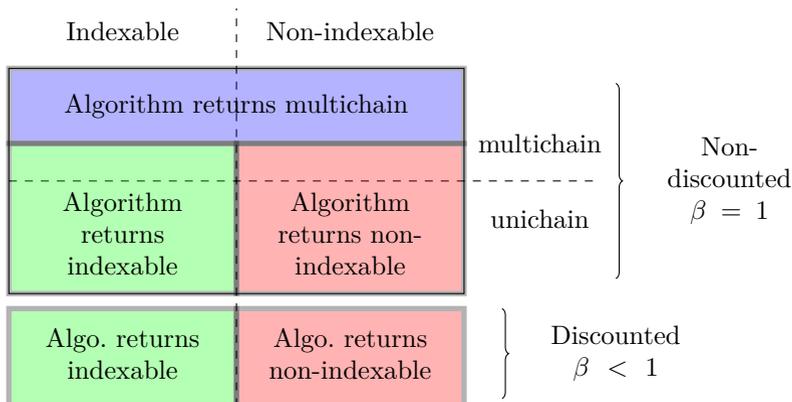
\begin{figure}[ht]
    \centering
    \begin{tikzpicture}
        \draw (0,1) rectangle (6,4);
        \draw[dashed] (3,-.5) -- (3,4.8) (0,2.5) -- (7.75,2.5);%  (-2,1) -- (6.5,1);
        \node at (1.5,4.5) {Indexable};
        \node at (4.5,4.5) {Non-indexable};
        \node at (7,3) {multichain};
        \node at (7,2) {unichain};
        % \node[text width=2cm, align=center] at (-1.5,0.5) {discounted problems};
        \draw[line width=2pt, fill=blue,opacity=0.3] (0,3) rectangle (6,4);
        \node at (3,3.5) {Algorithm returns multichain};
        \draw[line width=2pt, fill=green,opacity=0.3] (0,1) rectangle (3,3);
        \node[text width=2cm,align=center] at (1.5,1.8) {Algorithm returns indexable};
        \draw[line width=2pt, fill=red,opacity=0.3] (3,1) rectangle (6,3);
        \node[text width=2cm,align=center] at (4.5,1.8) {Algorithm returns non-indexable};
        \draw[line width=2pt, fill=green,opacity=0.3] (0,-0.5) rectangle (3,0.8);
        \node[text width=2.5cm,align=center] at (1.5,.2) {Algo. returns indexable};
        \draw[line width=2pt, fill=red,opacity=0.3] (3,-0.5) rectangle (6,.8);
        \node[text width=2.5cm,align=center] at (4.5,.2) {Algo. returns non-indexable};

        \draw [decorate,decoration = {brace}] (8,3.8) --  (8,1.2);
        \draw [decorate,decoration = {brace}] (6.5,.8) --  (6.5,-.4);
        \node[text width=2.3cm, align=center] at (9.5,2.5) {Non-discounted\\$\beta=1$};
        \node[text width=3cm, align=center] at (8,0.2) {Discounted\\$\beta<1$};
    \end{tikzpicture}
    \caption{Possible outputs of our algorithm: For unichain or discounted problems, our algorithm tests indexability and returns the index if and only if the problem is indexable. For some multichain problems, the algorithm can test indexability. For the others, it only returns that the problem is multichain.}
    \label{fig:possible_outputs}
\end{figure}

%\end{document}
%\KK{To be modified for partly unichain: ... if the arm is partly unichain, then our algo...}

Our third contribution is to show how to reduce the complexity of the above algorithm to obtain the first subcubic algorithm to compute Whittle index. This improvement is made possible by the fact that a linear system can be solved in subcubic time. By carefully reordering the computations, we show that it is possible to reduce the use of the Sherman-Morrison formula at the price of solving more linear systems. The subcubic complexity comes by striking  a good balance between having too many or too few linear systems to solve. By using the current fastest matrix multiplication method, our algorithm can test indexability and compute Whittle index in $O(n^{2.5286})$. Our algorithm is also the first subcubic algorithm to compute Gittins index.

Our fourth and last contribution is to provide an open-source implementation of our algorithm in Python with Numba, and to present an empirical evaluation of the performance of our implementation. Our results show that our algorithm is very efficient in computing Whittle index and testing indexability. Moreover, our simulations indicate that the subcubic version of our algorithm not only has an asymptotically small complexity but is also faster in practice than our original $(2/3)n^3$ algorithm. Testing the indexability and computing indices takes less than one second for $n=1000$ states and less than $10$ minutes for $n=15000$ states. This is $15$ to $20$ times faster than the original computation times reported in \citep{nino2020fast} (for a Matlab implementation), and about $5$ times faster than an optimized implementation of \citep{nino2020fast} (for a Julia implementation).
\medskip

\textbf{Road map.}
The paper is organized as follows. We introduce the problem and the definition of Whittle index in Section~\ref{sec:bandits}. In Section~\ref{sec:widx_compute}, we characterize Whittle index and provide a general idea of how to compute Whittle index. Then, we show, in Section~\ref{sec:formal_widx_algo}, how to use the Sherman-Morrison formula to compute the indices efficiently. We then show how to reduce the complexity of the algorithm by using fast matrix multiplication method in Section~\ref{sec:subcubic algorithm}. We compare the numerical result of different variants of our algorithm in Section~\ref{sec:numerical}. We show how to adapt this approach to the discounted case in Section~\ref{sec:discounted}.  Finally, we conclude in Section~\ref{sec:conclusion}.

%%%%%%%%%%%%%%%%%%%%%%%%%%%%%%%%%%%%%%%%%%%%%%%%%%%%%%%%%%%%%%%%%%%%%%%%%%%%%%%%%%%%%%%%%%%%%%%%%%%
%%%%%%%%%%%%%%%%%%%%%%%%%%%%%%%%%%%%%%%%%%%%%%%%%%%%%%%%%%%%%%%%%%%%%%%%%%%%%%%%%%%%%%%%%%%%%%%%%%%
%%%%%%%%%%%%%%%%%%%%%%%%%%%%%%%%%%%%%%%%%%%%%%%%%%%%%%%%%%%%%%%%%%%%%%%%%%%%%%%%%%%%%%%%%%%%%%%%%%%
\section{Restless bandits and indexability}
\label{sec:bandits}

%In this section, we introduce the Restless Markovian bandit problem. We also recall the notion of Whittle index policy...

%%%%%%%%%%%%%%%%%%%%%%%%%%%%%%%%%%%%%%%%%%%%%%%%%%%%%%%%%%%%%%%%%%%%%%%%%%%%%%%%%%%%%%%%%%%%%%%%%%%
%%%%%%%%%%%%%%%%%%%%%%%%%%%%%%%%%%%%%%%%%%%%%%%%%%%%%%%%%%%%%%%%%%%%%%%%%%%%%%%%%%%%%%%%%%%%%%%%%%%
\subsection{Restless bandit arms and multi-armed bandit}
\label{ssec:def}

In this paper, a restless bandit arm (that we denote later by RB) is a Markov decision process (MDP) with discrete state space $[n]:=\{1,\dots, n\}$ and binary action space $\{0, 1\}$, where $0$ denotes the action ``rest'' and $1$ denotes the action ``activate''. The time is discrete and the evolution is Markovian: If the MDP is in state $i$ and action $a$ is chosen, the decision maker earns an instantaneous reward ${r}^a_i$ and the arm transitions to a new state $j$ with probability ${P}^a_{ij}$. We denote this MDP by the pair $(r,P)$.  The name ``restless'' comes from the fact that an arm put at rest may still transition to a new state. 

A restless multi-armed bandit (RMAB) problem is a finite collection of \reviewedFirstRev{$M\in\N^*$} independent RB arms. At time $t$, the decision maker observes the state of all arms, and can choose \reviewedFirstRev{up to $m$ arms to activate, where $m\le M$ is a fixed constant}. The decision maker then earns a reward that is the sum of the rewards of all arms.
\reviewedFirstRev{The objective of the decision maker is to identify an allocation rule that maximizes the average reward earned over an infinite number of time steps.}
Such a problem is notoriously difficult to solve, as its complexity grows exponentially with the number of arms \cite{papadimitriou1994complexity}.
 \reviewedFirstRev{
In his seminal paper \cite{whittle1988restless}, Whittle proposes the following approach: \textit{if all arms verify a technical condition known as \emph{indexability}, then each state $i$ of each arm is associated with a real number $\widx_{i}$, that is now known as the \emph{Whittle index} of state $i$. At each time, the decision maker activates the $m$ arms whose Whittle index of their current states are the $m$ greatest indices}.  As mentioned earlier, this heuristic performs extremely well in practice, see \emph{e.g.}, \citep{glazebrook2006some, ansell2003whittle, glazebrook2002index}.  This shows that if the $M$ arms are all indexable, then one can derive a very efficient allocation rule for RMAB problems by computing the Whittle indices of all arms. The Whittle indices of an arm do not depend on the other arms. This shows that the computational cost of Whittle index policy is linear in the number of arms multiplied by the time to compute the indices for a single arm. Hence, in the remaining of the paper, we focus on a single arm and present a new algorithm to test indexability and compute the index of a given arm. }

\subsection{Indexability and Whittle index}

%We consider that a decision maker seeks to maximize the average gain over an infinite horizon (the discounted case will be discussed in Section~\ref{sec:discounted}). To avoid the dependence on initial states, we assume that the arm is unichain\footnote{a MDP is unichain if all policies induce a Markov chain with a unique stationary distribution (see \citep{putermanMarkovDecisionProcesses1994})}.

\reviewedFirstRev{
    For the remaining of the paper, we consider a single arm $(r, P)$ that has $n$ states. % and that is assumed to be \emph{weakly communicating} (see definition below).
    In this section, we introduce the notion of indexability and discuss some ambiguities that we have found when using the definition of indexability defined in previous works.
}

\subsubsection{Policy and arm structure}

\reviewedFirstRev{
    An arm is a two-action MDP. Hence, a policy $\pi$ is a subset of the state space, $\pi\subseteq[n]$, such that the policy chooses to activate the arm in state $i$ if $i\in\pi$. We say that $\pi$ is the set of \emph{active} states, and we say that state $i$ is \emph{passive} if $i\not\in\pi$. By abuse of notation, we will write $\pi_i=1$ if $i\in\pi$ and $\pi_i=0$ if $i\not\in\pi$, and we denote by $\mP^\pi$ the transition matrix corresponding to the policy $\pi$, \emph{i.e.}, $P^\pi_{ij}=P^{\pi_i}_{ij}$.

Following the classical definitions in the literature \cite{putermanMarkovDecisionProcesses1994}, we say that:
\begin{itemize}
    \item A policy $\pi$ is \emph{unichain} if the transition matrix $\mP^\pi$ induced by $\pi$ has a unique recurrent class. A policy that is not unichain is called \emph{multichain}.
    \item An arm is unichain if all policies $\pi\subseteq[n]$ are unichain. An arm is multichain if it is not unichain, \emph{i.e.}, if there exists a policy $\pi\subseteq[n]$ that is multichain.
    %\item An arm is \emph{weakly communicating} if there exists a set of states $\calS\subseteq[n]$ with each state in $\calS$ accessible from every other state of $\calS$ and each state in $[n]\setminus\calS$ is transient under every policy.
\end{itemize}
%These definitions are important when working with average reward MDPs.
%In particular, the notion of weakly communicating guarantees that the average reward of the optimal policy, that we will define in the following section, does not dependent on the initial state \cite{putermanMarkovDecisionProcesses1994}. This notion is strictly weaker than the notion of unichain. 
%Indeed, weakly communicating characterizes patterns of states that are accessible from each other under some policy and not the chain structure induced by the policy.
%So, a weakly communicating MDP can be unichain or multichain.
%
%Testing if a policy is unichain is computationally feasible and can be solved in $O(n^2)$ by using Tarjan's strongly connected component algorithm. Testing if an arm is unichain is, however, NP-hard \cite{tsitsiklis2007np}. Testing if an arm is weakly communicating can also be done in $O(n^2)$: Indeed, an arm is weakly communicating if and only if\footnote{This comes from two facts. First, a state is accessible from another state if it is accessible in a Markov chain whose transition matrix is $(\mP^0 +\mP^1)/2$. Second, a state is transient for any policy if it is transient in a Markov chain whose transition matrix is $(\mP^0 +\mP^1)/2$.} the transition matrix $(\mP^0 +\mP^1)/2$ is unichain.
}

\subsubsection{Gain optimality and Bellman optimality}
\label{ssec:bellman_optimal}

Following a policy $\pi$, we denote by $g^\pi_i$ the long-run average reward that a decision maker would obtain when starting in state $i$. In the remainder of the paper, we use the term ``gain'' to denote the long-run average reward. Let $g^*_i=\max_\pi g^\pi_i$ be the maximal gain starting from state $i$.
%As we assume that the arm is weakly communicating, it is shown in \citep[Chapter 8]{putermanMarkovDecisionProcesses1994} that the quantity $g^*_i$ does not depend on $i$ and we simply denote it by $g^*$. We say that a policy is gain optimal if $g^\pi_i=g^*$ for all state $i$.
From \citep[Chapter~9]{putermanMarkovDecisionProcesses1994}, $\vg^*$ is uniquely defined. %and if some policies that achieve $\vg^*$ are unichain, then the quantity $g^*_i$ does not depend on $i$.
%In such a case, we simply denote the maximal gain by $g^*$.
We say that a policy $\pi$ is gain optimal if $g^\pi_i=g^*_i$ for all state $i$.
%Given a unichain policy $\pi$, we say that $\pi$ is gain optimal if $g^\pi=g^*$.

It is shown in \citep[Chapter~9]{putermanMarkovDecisionProcesses1994} that $\vg^*$ is the optimal gain if and only if there exists a vector $\vh^*$, called optimal bias vector that satisfies the Bellman \emph{optimality} equations:  for all $i\in[n]$,
\begin{align}
    g^*_i &= \max_{a\in\{0,1\}} \Bigl(\sum_{j=1}^n P^{a}_{ij}g^*_j\Bigr) \label{eq:gain_opt} \\
    g^*_i + h^*_i &= \max_{a\in\{0,1\}} \Big( r^{a}_i + \sum_{j=1}^n P^{a}_{ij}h^*_j \Big).  \label{eq:bias_opt}
\end{align}
We say that a policy  $\pi$ is \emph{Bellman optimal} if there exists\footnote{If the MDP is unichain, then the bias vector $\vh^*$ is unique up to an additive constant.  This is in general not the case for multichain MDPs. % (even if the MDP is weakly communicating).
} a bias vector $\vh^*$ that is a solution of \eqref{eq:bias_opt} and such that $\pi$ attains the maximum in \eqref{eq:bias_opt}, \emph{i.e.}: for all $i$,% ${\pi_i=\argmax_{a\in\{0,1\}} \big(r^{a}_i + \sum_{j=1}^n P^{a}_{ij}h^*_j\big)}$.
\begin{align}
    \sum_{j=1}^nP^{\pi_i}_{ij}g^*_j=g^*_i \text{ and } \pi_i\in\argmax_{a\in\{0,1\}} \Big(r^{a}_i + \sum_{j=1}^n P^{a}_{ij}h^*_j\Big).
    \label{eq:pi_argmax}
\end{align}
The notion of Bellman optimality is stronger than the notion of gain optimality: A Bellman optimal policy is gain optimal, but the converse is not true in general. Note that the distinction between gain optimal and Bellman optimal policies is only important for the average reward criterion. This distinction disappears for the discounted case that we discuss in Section~\ref{sec:discounted}. The notion of Bellman optimality is equivalent to the notion of canonical optimality, that characterize policies that are optimal for any finite horizon, see \cite{yushkevich1974class}.

\subsubsection{\texorpdfstring{$\lambda$-p}{P}enalized MDP and definition of indexability}
\label{sssec:penal_mdp}

For each $\lambda\in\real$, we define a $\lambda$-penalized MDP\footnote{not to be confused with $\beta$-discounted MDPs, where the discount is on rewards  and not on actions.} whose transition matrices are the same as in the original MDP and whose reward at time $t\ge0$ when taking action $a_t$ in state $s_t$ is $r^{a_t}_{s_t} - \lambda a_t$. The quantity $\lambda$ is a penalty for taking action ``activate''. For $\lambda$-penalized MDPs, we define the gain and bias functions as in Section~\ref{ssec:bellman_optimal}, but these quantities now depend on $\lambda$. Hence, we will write them as functions of $\lambda$: For instance, the optimal gain is $\vg^*(\lambda)$, and we will use the notation $\vh^*(\lambda)$ to denote an optimal bias and $\pi^*(\lambda)$ to denote an optimal policy.

%\subsubsection{Definition of indexability}

The classical definition of indexability use in the literature \citep{akbarzadeh2020conditions,gibson2021novel,nakhleh2021neurwin} says that an arm is indexable if and only if the optimal policy $\pi^*(\lambda)$ is non-increasing in $\lambda$ (for the inclusion order). If an arm is indexable, these papers define the Whittle index of a state $i$ as a real number $\lambda_i$ such that ${\pi^*(\lambda)=\{i\in[n]: \lambda_i\ > \lambda\}}$.  This definition is ambiguous for two reasons: First, optimal policies are in general not unique. Hence, the notion of $\pi^*(\lambda)$ being non-increasing is unclear: should all optimal policies be non-increasing or at least one? Second, the notion of optimality for a policy  is also unclear: should it mean ``gain optimal'', ``bias optimal'' or another notion of optimality? 

To solve these ambiguities, in this paper, we use the following definition of indexability.  
\begin{definition}
    \label{defn:indexability}
    Given a finite-state arm, let $\Pi^*(\lambda)$ be the set of Bellman optimal policies for a penalty $\lambda$.
    %Let $(r,P)$ be a weakly communicating arm and let $\Pi^*(\lambda)$ be the set of Bellman optimal policies for a penalty $\lambda$.
    We say that the arm is indexable if for all $\lambda<\lambda'$, and all policies $\pi\in\Pi^*(\lambda)$ and $\pi'\in\Pi^*(\lambda')$, then $\pi\supseteq\pi'$.
\end{definition}
This definition says that the function $\pi^*(\lambda)\supseteq\pi^*(\lambda')$ regardless of the choice of Bellman optimal policies. As we show next, it guarantees that the Whittle indices are uniquely defined when they exist. As we detail in Appendix~\ref{apx:discussion_index}, this is not necessarily the case when we consider other interpretations of the classical definition.

Note that for discounted problems, this definition coincide with the one used in \citep{nino2020fast}. For undiscounted MDPs, we add in addition that the criterion for optimality should be the Bellman optimality.

\subsection{Definition of Whittle index and characterization of indexability}

The proposition below shows that Definition~\ref{defn:indexability} implies that Whittle index is well defined and proposes a characterization of any indexable arm, that we will later use to derive our algorithm. 
\begin{lemma}
    \label{lem:indexable}
    In a $n$-state arm, the following three properties are equivalent: 
    \begin{enumerate}
        \item[(i)] The arm is indexable.
        \item[(ii)] For all state $i\in[n]$, there exists a unique penalty $\lambda_i$ -- called the Whittle index of state $i$ -- such that if $\pi\in\Pi^*(\lambda)$ is any Bellman optimal policy for the penalty $\lambda$, then $\pi_i=1$ if $\lambda<\lambda_i$ and $\pi_i=0$ if $\lambda>\lambda_i$. 
        \item[(iii)] There is a non decreasing sequence of penalties \reviewedFirstRev{$\mu_{\min}^0:=-\infty\le\mu^1_{\min}\le\mu^2_{\min}\le\dots\le\mu^n_{\min}\le\mu^{n+1}_{\min}:=+\infty$} and a sequence of policies $\pi^1:=[n]\supsetneq\pi^2\supsetneq\dots\supsetneq\pi^{n+1}:=\emptyset$ such that:
        \begin{itemize}
            \item If $\lambda\in(\mu^{k-1}_{\min},\mu^{k}_{\min})$, there exists a unique Bellman optimal policy $\pi^{k}$.
            \item If $k$ is such that $\mu^{k-1}_{\min}<\mu^{k}_{\min}$, then all Bellman optimal policies for the penalty $\mu^{k-1}_{\min}$ contain $\pi^{k}$, and $\pi^k$ contains all Bellman optimal policies for the penalty $\mu^k_{\min}$.
        \end{itemize}
        %Moreover, if the arm is indexable, then the index of state $i$ is $\mu^k_{\min}$, where $k\ge1$ is such that $i \in \pi^k\setminus \pi^{k+1}$.
    \end{enumerate}
    % If the arm is indexable, the Whittle index $\lambda_i$ of state $i$ is $\mu^k_{\min}$, where $k\ge1$ is the unique value such that $\pi^k\setminus \pi^{k+1}=\{i\}$.
\end{lemma}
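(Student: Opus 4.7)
The plan is to prove the two equivalences (i) $\Leftrightarrow$ (ii) and (ii) $\Leftrightarrow$ (iii). Item (iii) is essentially a re-indexing of the Whittle indices asserted in (ii), so the substantive content lies in (i) $\Leftrightarrow$ (ii), and within that the implication (i) $\Rightarrow$ (ii) is the main obstacle.

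First I would dispose of the easy implications. For (ii) $\Rightarrow$ (i): if $\lambda<\lambda'$, $\pi\in\Pi^*(\lambda)$, $\pi'\in\Pi^*(\lambda')$ and $i\in\pi'$, then (ii) applied at $\lambda'$ forces $\lambda'\le\lambda_i$, so $\lambda<\lambda_i$, and (ii) at $\lambda$ gives $i\in\pi$, hence $\pi'\subseteq\pi$. For (iii) $\Rightarrow$ (ii), I would associate to each state $i$ the unique $k$ with $i\in\pi^k\setminus\pi^{k+1}$ (well defined because the $\pi^k$ strictly decrease from $[n]$ to $\emptyset$), set $\lambda_i:=\mu^k_{\min}$, and use the two bullets of (iii) to check the required behavior: any $\lambda<\lambda_i$ lies either in an open interval $(\mu^{j-1}_{\min},\mu^j_{\min})$ with $j\le k$ whose unique Bellman optimal policy $\pi^j\supseteq\pi^k$ contains $i$, or at an endpoint where every Bellman optimal policy contains some $\pi^j\supseteq\pi^k$; the case $\lambda>\lambda_i$ is symmetric. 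Conversely, for (ii) $\Rightarrow$ (iii), I would sort the indices $\lambda_{i_1}\le\dots\le\lambda_{i_n}$ (ties broken arbitrarily), put $\pi^k:=\{i_k,\dots,i_n\}$ and $\mu^k_{\min}:=\lambda_{i_k}$, and verify using (ii) that $\pi^k$ is the unique Bellman optimal policy on each open interval $(\mu^{k-1}_{\min},\mu^k_{\min})$ and that the sandwich condition holds at every strict transition $\mu^{k-1}_{\min}<\mu^k_{\min}$; existence of at least one Bellman optimal policy at each $\lambda$ is standard for weakly communicating MDPs.

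The main step is (i) $\Rightarrow$ (ii). Fixing a state $i$, I would partition $\real$ into the set $A_i$ of penalties for which every Bellman optimal policy activates $i$, the set $B_i$ for which every Bellman optimal policy rests $i$, and the mixed set $C_i$. Indexability immediately implies that $A_i$ is downward-closed, $B_i$ is upward-closed, and every element of $A_i$ is strictly smaller than every element of $B_i$: otherwise a pair $\pi\in\Pi^*(\lambda)$ and $\pi'\in\Pi^*(\lambda')$ with $\lambda<\lambda'$ would violate $\pi\supseteq\pi'$ at coordinate $i$. The main obstacle is showing $|C_i|\le 1$: if $\lambda_1<\lambda_2$ both lay in $C_i$, one could extract $\pi_2\in\Pi^*(\lambda_2)$ with $i\in\pi_2$ and $\pi'_1\in\Pi^*(\lambda_1)$ with $i\notin\pi'_1$, and indexability would force $\pi'_1\supseteq\pi_2\ni i$, a contradiction. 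Combining $|C_i|\le 1$ with the monotone structure of $A_i$ and $B_i$, the interval $(\sup A_i,\inf B_i)$ must be empty (otherwise it lies entirely in $C_i$ and is uncountable), so $\sup A_i=\inf B_i$; this common value is the unique threshold $\lambda_i$ satisfying (ii), regardless of whether $\lambda_i$ itself belongs to $A_i$, $B_i$, or $C_i$.
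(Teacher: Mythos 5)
Your proposal is correct and follows essentially the same route as the paper: the paper also defines $\lambda_i$ as the supremum/infimum of the penalties at which Bellman optimal policies rest or activate state $i$ and derives everything from the monotonicity in Definition~\ref{defn:indexability}, proving the cycle $(i)\Rightarrow(ii)\Rightarrow(iii)\Rightarrow(i)$ rather than your two separate equivalences. Your treatment of $(i)\Rightarrow(ii)$ via the sets $A_i$, $B_i$, $C_i$ and the cardinality argument for $\lvert C_i\rvert\le 1$ is simply a more careful elaboration of the paper's one-line justification that the threshold is well defined.
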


In the above lemma, we use a subscript ``min'' in the penalties $\mu^{k}_{\min}$ in order to be consistent with the same quantities used in Algorithm~\ref{algo:whittle_informal} and \ref{algo:whittle_n3}.  The signification of this ``min'' is because it will be a minimum of values of the form $\mu^k_i$.  We should stress that these quantities (as well as the Whittle index $\lambda_i$) can either be finite or infinite. When we say that ``a policy $\pi$ is optimal for the penalty $+\infty$'', this means ``there exists a penalty $\bar{\lambda}$ such that $\pi$ is optimal for all $\lambda\ge\bar{\lambda}$''.  Also, the last part of the lemma implies that $\pi^k$ is the unique Bellman optimal policy for all penalty $\lambda\in(\mu^{k-1}_{\min}, \mu^{k}_{\min})$.

\begin{proof}
    The lemma is a direct consequence of the definition of indexability.

    \reviewedFirstRev{
        $(i)\Rightarrow(ii)$ -- Assume first that the arm is indexable. Let $i\in[n]$ be a state and let $\lambda_i=\sup\{\lambda : \exists\pi\in\Pi^*(\lambda)\text{ such that }\pi_i=0\}$. By Definition~\ref{defn:indexability}, if $\pi'$ is a Bellman optimal policy for a penalty $\lambda>\lambda_i$, then $\pi'\subseteq\pi$, which in turn implies that $\pi'_i=0$. Similarly, if $\lambda<\lambda_i$, then $\pi'_i=1$. This implies (ii).
    
        $(ii)\Rightarrow(iii)$ -- Assume (ii) and let $\sigma^k$ be the state with the $k$th smallest index (where ties are broken arbitrarily). Let $\mu^k_{\min}:=\lambda_{\sigma^k}$ be the index of the state $\sigma^k$ and let $\lambda\in(\mu^{k-1}_{\min},\mu^{k}_{\min}$). By (ii), any Bellman optimal policy for the penalty $\lambda<\lambda_{\sigma^{k-1}}$ contains $\pi^{k}:=[n]\setminus\{\sigma^1,\dots, \sigma^{k-1}\}$. Similarly, $\pi^{k}$ contains any Bellman optimal policy for the penalty $\lambda>\lambda_{\sigma^{k-1}}$.
        This implies that the policy $\pi^{k}$ is the unique optimal policy for all $\lambda\in(\mu^{k-1}_{\min},\mu^{k}_{\min})$. 

        $(iii)\Rightarrow(i)$ -- The property (iii) implies that implies that $\pi^k$ is the unique Bellman optimal policy for all $\lambda\in(\mu^{k-1}_{\min},\mu^k_{\min})$. 
    }
\end{proof}

\section{Condition for indexability and basic algorithm}
\label{sec:widx_compute}
%%%%%%%%%%%%%%%%%%%%%%%%%%%%%%%%%%%%%%%%%%%%%%%%%%%%%%%%%%%%%%%%%%%
%%%%%%%%%%%%%%%%%%%%%%%%%%%%%%%%%%%%%%%%%%%%%%%%%%%%%%%%%%%%%%%%%%%
%%%%%%%%%%%%%%%%%%%%%%%%%%%%%%%%%%%%%%%%%%%%%%%%%%%%%%%%%%%%%%%%%%%

This section aims at providing a basic algorithm to detect whether an arm is indexable or not and if it is the case, to compute the Whittle index of all states.  This algorithm tries to construct a sequence of \emph{unichain} policies $\pi^1\supsetneq \pi^2\supsetneq\dots$ that satisfy the conditions of Lemma~\ref{lem:indexable}.  We prove the correctness of our algorithm: if it can construct such a sequence, then the problem is indexable and the computed indices are correct. If the algorithm cannot compute such a sequence of policies, this is either because the problem is not indexable, or because the arm is multichain.

% \reviewedFirstRev{
%     Lemma~\ref{lem:indexable} requires unicity of Bellman optimal policy for some values of the penalty $\lambda$.
%     We show in Lemma~\ref{lem:unique_imply_unichain} that if there is a single Bellman optimal policy, then that policy is unichain. 
%     So from now on, we only consider the Bellman optimal policies that are unichain.
% }

%%%%%%%%%%%%%%%%%%%%%%%%%%%%%%%%%%%%%%%%%%%%%%%%%%%%%%%%%%%%%%%%%%%
%%%%%%%%%%%%%%%%%%%%%%%%%%%%%%%%%%%%%%%%%%%%%%%%%%%%%%%%%%%%%%%%%%%
%\subsection{Bias of a unichain policy and condition for optimality}
\subsection{Condition for optimality}

In this section, we provide two technical lemmas that we will use in our algorithm. %They provide conditions on the bias of a policy to verify whether this policy is Bellman optimal and if yes, whether it is the unique Bellman optimal policy. 
They provide conditions to verify when a given unichain policy is Bellman optimal and if yes, when it is the unique Bellman optimal policy. 

Let $\pi\subseteq[n]$ be a unichain policy and fix a penalty $\lambda$. By \cite[Chapter~8]{putermanMarkovDecisionProcesses1994}, there exists a gain $g^\pi$ and a bias vector $\vh^\pi$ such that $\pi$ satisfies Bellman \emph{evaluation} equations: for all $i\in[n]$,
\begin{equation}
    g^\pi(\lambda) + h^\pi_i(\lambda) = r^{\pi_i}_i -\lambda \pi_i + \sum_{j=1}^n P^{\pi_i}_{ij}h^\pi_j(\lambda).  \label{eq:bias_eval}
\end{equation}
%For this policy $\pi$, we denote by $\alpha^{\pi}_i$ the \emph{Bellman advantage} of the action activate over the action \emph{rest}. It equals:
We denote by  $\alpha^{\pi}_i$  the \emph{active advantage} in state $i$ under policy $\pi$, which is the difference between the value in state $i$ of action activate and the one of action rest.
%In short, we call $\alpha^{\pi}_i$ \emph{active advantage}.
It is defined by:
\begin{equation}
    \label{eq:advantage}
    \alpha^\pi_i(\lambda):=r^1_i -r^0_i -\lambda +\sum_{j=1}^n (P^1_{ij} -P^0_{ij})h^\pi_j(\lambda).
\end{equation}
For a unichain policy, Equation \eqref{eq:bias_eval} uniquely determines the vector $\vh^\pi(\lambda)$ up to an additive constant $c\mathbf{1}$ (see \cite[Chapter 8]{putermanMarkovDecisionProcesses1994}). Hence the active advantage vector $\valpha^\pi(\lambda)$ is uniquely determined for a unichain policy $\pi$. As we will see later, the function $\valpha^\pi(\lambda)$ is affine in $\lambda$.
Note that despite the name ``advantage'', $\valpha^{\pi}$ can be negative.

%Our algorithm computes the Whittle index by increasing order, by trying to add or remove states.
Our algorithm computes the Whittle index in increasing order, by trying to eliminate states one by one.
The following lemma shows that to compute the next Bellman optimal policy, one should look at when the active advantage of a state is equal to $0$.
In this lemma, $\pi\ominus\{i\}$ denotes the symmetric difference between $\pi$ and $\{i\}$, \emph{i.e.}, $\pi\ominus\{i\}=\pi\setminus\{i\}$ if $i\in\pi$ and $\pi\ominus\{i\}=\pi\cup\{i\}$ if $i\not\in\pi$.
% \begin{lemma}
%     \label{lem:BO_implication}
%     Let 
% \end{lemma}
Also, the active advantage provides necessary and sufficient condition for a unichain policy to be Bellman optimal, and/or to be the unique Bellman optimal policy as shown in the following lemma.
\begin{lemma}
    \label{lem:unicity_binary}
    In a finite-state arm, let $\pi$ be a unichain policy. Then, for any penalty $\lambda$:
    \begin{enumerate}[label=(\roman*)]
        \item \label{it:binary_opt1} $\pi$ is Bellman optimal if and only if $\alpha^\pi_i(\lambda)\ge0$ for all $i\in\pi$ and $\alpha^\pi_i(\lambda)\le0$ for all $i\notin\pi$.
        \item \label{it:binary_optX} Suppose that $\pi$ is Bellman optimal and $\alpha^\pi_i(\lambda)=0$. Then, $\pi\ominus\{i\}$ is also Bellman optimal. If, in addition, $\pi\ominus\{i\}$ is unichain, then $\valpha^\pi(\lambda)=\valpha^{\pi\ominus\{i\}}(\lambda)$.
        \item \label{it:binary_opt2} $\pi$ is the unique Bellman optimal policy if and only if $\alpha^\pi_i(\lambda)>0$ for all $i\in\pi$ and $\alpha^\pi_i(\lambda)<0$ for all $i\notin\pi$.
    \end{enumerate}
\end{lemma}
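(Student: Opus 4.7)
The plan rests on the observation that $\alpha^\pi_i(\lambda)$ is exactly the gap between the two action-values at state $i$ evaluated under the bias $\vh^\pi(\lambda)$. Setting $Q^\pi(i,a) := r^a_i - \lambda a + \sum_j P^a_{ij} h^\pi_j(\lambda)$, one has $\alpha^\pi_i(\lambda) = Q^\pi(i,1) - Q^\pi(i,0)$, and the Bellman evaluation equation \eqref{eq:bias_eval} reads simply $g^\pi(\lambda) + h^\pi_i(\lambda) = Q^\pi(i,\pi_i)$. Item (i) is then immediate: Bellman optimality of $\pi$ means $\vh^\pi$ satisfies \eqref{eq:bias_eval_opt} with $\pi$ attaining the max, which, via the identity above, is equivalent to $Q^\pi(i,\pi_i) \ge Q^\pi(i,1-\pi_i)$ for every $i$. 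Splitting on whether $i \in \pi$ and rewriting in terms of $\alpha^\pi_i$ yields the stated sign conditions.

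For item (ii), the hypothesis $\alpha^\pi_i(\lambda)=0$ says $Q^\pi(i,0)=Q^\pi(i,1)$, so at state $i$ both actions attain the max under $\vh^\pi$. Consequently, $\pi \ominus \{i\}$ also attains the max at every state with the same bias $\vh^\pi$, so $\vh^\pi$ itself witnesses the Bellman optimality of $\pi \ominus \{i\}$. If in addition $\pi \ominus \{i\}$ is unichain, then $\vh^\pi$ solves its Bellman evaluation equation, and by uniqueness of the bias of a unichain policy up to an additive constant \cite[Chapter 8]{putermanMarkovDecisionProcesses1994} we obtain $\vh^{\pi \ominus \{i\}}(\lambda) = \vh^\pi(\lambda) + c\mathbf{1}$ for some constant $c$. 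Since each row of $\mP^1-\mP^0$ sums to zero, this constant cancels in \eqref{eq:advantage}, giving $\valpha^{\pi \ominus \{i\}}(\lambda) = \valpha^\pi(\lambda)$.

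For item (iii), the direction $(\Rightarrow)$ is the contrapositive of (ii): if some $\alpha^\pi_i(\lambda)=0$, then by (ii) the policy $\pi \ominus \{i\} \neq \pi$ is also Bellman optimal, so $\pi$ is not unique. For $(\Leftarrow)$, assume all strict inequalities; (i) already gives Bellman optimality of $\pi$ and actually says that $\pi$ is the \emph{unique} maximizer in \eqref{eq:bias_eval_opt} when the bias is $\vh^\pi$. To upgrade this to uniqueness among all Bellman optimal policies, take any Bellman optimal $\pi'$ with bias $\vh^{\pi'}$; the weakly communicating hypothesis forces $g^{\pi'} = g^\pi$. Applying \eqref{eq:bias_eval_opt} at $\vh^{\pi'}$ with the action $\pi_i$ and subtracting the evaluation equation of $\pi$ produces $(\mI - \mP^\pi)(\vh^{\pi'} - \vh^\pi) \ge 0$; since $\pi$ is unichain, a standard Markov-chain argument (passing to the invariant distribution of $\mP^\pi$) forces $\vh^{\pi'} - \vh^\pi$ to be constant on the recurrent class of $\pi$. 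Plugging this back into \eqref{eq:bias_eval_opt} at $\vh^{\pi'}$ shows that $\pi$ remains the strict maximizer at every such recurrent state, so $\pi' = \pi$ there, and the weakly communicating hypothesis then rules out persistent deviations on the transient part.

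The main obstacle is precisely this last step: controlling the possible discrepancy between $\vh^{\pi'}$ and $\vh^\pi$ on the states transient under $\pi$, where $\vh^{\pi'}$ enjoys extra freedom. Items (i) and (ii) are essentially direct rewritings of the Bellman equations, so the real care is concentrated in the uniqueness part of (iii), where the weakly communicating assumption on the arm is used to preclude alternative Bellman optimal behaviors supported outside the recurrent class of $\pi$.
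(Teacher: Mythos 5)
Your items (i) and (ii), and the easy direction of (iii), are correct and essentially identical to the paper's treatment: they are read off from the evaluation and optimality equations, using that a unichain policy has a bias unique up to an additive constant which cancels in \eqref{eq:advantage}. The genuine gap is in the $(\Leftarrow)$ direction of (iii), exactly at the step you yourself flag as the ``main obstacle'' but do not actually overcome. From $(\mI-\mP^\pi)(\vh^{\pi'}-\vh^\pi)\ge 0$ and unichainness of $\pi$ you correctly deduce that $w:=\vh^{\pi'}-\vh^\pi$ is constant on the recurrent class $\gR^\pi$ and that $\pi_i$ attains the maximum in \eqref{eq:bias_eval_opt} at $\vh^{\pi'}$ for $i\in\gR^\pi$. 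But the next claim --- that $\pi_i$ remains the \emph{strict} maximizer at $\vh^{\pi'}$ on $\gR^\pi$ --- does not follow: the advantage of the deviating action $1-\pi_i$ evaluated at $\vh^{\pi'}$ equals $\pm\alpha^\pi_i(\lambda)+\sum_j(P^{1}_{ij}-P^{0}_{ij})w_j$, and the deviating action may transition to states that are \emph{transient} under $\pi$, where $w$ is uncontrolled (iterating your inequality only yields $w\ge c\vone$ on transient states, which bounds the perturbation in the wrong direction). This is precisely the extra freedom that a weakly communicating but multichain arm leaves in the set of optimal biases. The closing sentence, ``the weakly communicating hypothesis then rules out persistent deviations on the transient part,'' is an assertion, not an argument: even granting $\pi'=\pi$ on $\gR^\pi$, nothing written excludes a Bellman optimal $\pi'$ differing from $\pi$ on states transient under $\pi$.

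The paper closes this gap by a different route (Lemmas~\ref{lem:opt_pol}, \ref{lem:equi_bias} and \ref{lem:unicity_BO} in the appendix): instead of trying to control $\vh^{\pi'}$ on the states transient under $\pi$, it works with the recurrent class of the \emph{competing} policy $\theta$ and the Cesaro limit $\bar{\mP}^\theta$. Gain optimality of $\theta$ forces the advantage $B^{\theta_i}_i(\vh)$ to vanish for every optimal bias $\vh$ --- in particular for $\vh^\pi$ --- at every state recurrent under $\theta$; hence either $\theta$ deviates from $\pi$ at some $i\in\gR^\theta$ and then $\alpha^\pi_i(\lambda)=0$ immediately, or $\theta$ coincides with $\pi$ on $\gR^\theta$, so $\gR^\theta=\gR^\pi$, $\theta$ is unichain, and a two-sided sandwich ($\vh^\theta-\vh^\pi\le\bar{\mP}^\theta(\vh^\theta-\vh^\pi)$ together with the symmetric inequality for $\pi$) shows the two biases differ by a constant on \emph{all} states, whence the advantage vanishes wherever $\theta$ deviates. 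You would need to supply an argument of this kind to complete (iii); as written, your proof of the hard direction does not go through.
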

\begin{proof}
    %The first two points are direct consequences of the definition of Bellman optimality: a policy is Bellman optimal if and only if its bias $\vh^\pi$ satisfies \eqref{eq:bias_opt}. The definition of the active advantage implies that the active advantage of a Bellman optimal policy should be non-negative for active states $i\in\pi$, and non-positive for passive states $i\not\in\pi$. Moreover, if there exists a state $i$ whose active advantage is null, $\alpha^\pi_i(\lambda)=0$, then the policy $\pi\ominus\{i\}$ is also Bellman optimal.  
    For the first point \ref{it:binary_opt1}, one direction of the equivalence is direct: If policy $\pi$ is Bellman optimal, then $\alpha^\pi_i(\lambda)\ge0$ for all $i\in\pi$ and $\alpha^\pi_i(\lambda)\le0$ for all $i\notin\pi$.
    This is because a bias vector $\vh^\pi$ that is a solution of Bellman evaluation equations \eqref{eq:bias_eval} satisfies Bellman optimality equations \eqref{eq:bias_opt}.

    We now prove the other direction of Point \ref{it:binary_opt1}: If $\alpha^\pi_i(\lambda)\ge0$ for all $i\in\pi$ and $\alpha^\pi_i(\lambda)\le0$ for all $i\notin\pi$, then policy $\pi$ is Bellman optimal.
    Since $\pi$ is unichain, its gain $g^\pi$ is state independent and satisfies the optimality equations \eqref{eq:gain_opt}.
    If $\alpha^\pi_i(\lambda)\ge0$ for all $i\in\pi$ and $\alpha^\pi_i(\lambda)\le0$ for all $i\notin\pi$, then any bias vector $\vh^\pi$ that is a solution of \eqref{eq:bias_eval} also satisfies \eqref{eq:bias_opt}.
    In consequence, $g^\pi\vone$ and $\vh^\pi$ form a solution of the optimality equations \eqref{eq:gain_opt} and \eqref{eq:bias_opt}.
    From \citep[Chapter~9]{putermanMarkovDecisionProcesses1994}, $\vg^*$ is uniquely defined by \eqref{eq:gain_opt} and \eqref{eq:bias_opt}.
    Thus, $g^*_i=g^\pi$ for all $i\in[n]$ and the first condition of Bellman policy characterization equation \eqref{eq:pi_argmax} is satisfied.
    Finally, the fact that bias vector $\vh^\pi$ satisfies \eqref{eq:bias_opt} fulfills the second condition of \eqref{eq:pi_argmax}.
    That concludes the proof.

    %The first point \ref{it:binary_opt1} is a consequence of Bellman optimality definition: a policy $\pi$ is Bellman optimal if and only if $\pi$ satisfies \eqref{eq:pi_argmax} for some $\vh^*$ that is a solution of \eqref{eq:bias_opt}.
    %An instance of such a vector $\vh^*$ is a bias vector $\vh^\pi$ that satisfies \eqref{eq:bias_eval}.
    %That is, $\pi$ is Bellman optimal if and only if a bias vector $\vh^\pi$ that satisfies \eqref{eq:bias_eval} is a solution of \eqref{eq:bias_opt}.
    %Then, definition of $\valpha^\pi(\lambda)$ in \eqref{eq:advantage} concludes the proof.
    For the second point \ref{it:binary_optX}, since $\pi$ is unichain, the optimal gain $g^*_i=g^\pi$ for all $i\in[n]$.
    So, $\pi\ominus\{i\}$ satisfies the first condition of \eqref{eq:pi_argmax}.
    Moreover, $\alpha^\pi_i(\lambda)=0$ implies that policy $\pi\ominus\{i\}$ satisfies evaluation equations \eqref{eq:bias_eval} for some $\vh^\pi$.
    Since $\pi$ is Bellman optimal, $\vh^\pi$ is a solution of $\eqref{eq:bias_opt}$.
    So, $\pi\ominus\{i\}$ satisfies the second condition of \eqref{eq:pi_argmax}.
    We conclude that $\pi\ominus\{i\}$ is Bellman optimal.
    Last but not least, if, in addition, $\pi\ominus\{i\}$ is unichain, then $\vh^\pi$ is a solution of the evaluation equations \eqref{eq:bias_eval} for policy $\pi\ominus\{i\}$. Consequently, $\valpha^\pi(\lambda)=\valpha^{\pi\ominus\{i\}}(\lambda)$.
    
    Points~\ref{it:binary_opt1} and \ref{it:binary_optX} also show one direction of the equivalence of \ref{it:binary_opt2}: If policy $\pi$ is the unique Bellman optimal policy, then for all state $i$, $\alpha^\pi_i(\lambda)\ne0$. The non-trivial property is the other direction of the equivalence. This is a consequence of Lemma~\ref{lem:unicity_BO} that we prove in Appendix~\ref{apx:unicity_BO}.
\end{proof}

Note that the main difficulty in proving Lemma~\ref{lem:unicity_BO} is that we do not assume the arm to be unichain: for a unichain arm, the bias of the optimal policy is unique up to a constant vector (see \cite{schweitzer1978functional} and \cite[Section 8.4]{putermanMarkovDecisionProcesses1994}).
%This implies that if $\alpha^*_i(\lambda)\ne0$ for all $i$, then the optimal policy is unique.
This implies that if $\pi$ is an optimal policy and $\alpha^\pi_i(\lambda)\ne0$ for all $i$, then $\pi$ is the unique optimal policy.
The proof of Lemma~\ref{lem:unicity_BO} that we do in Appendix~\ref{apx:unicity_BO} does not require the MDP to be unichain, but only the Bellman optimal policy $\pi$ to be unichain. \reviewed{Lemma~\ref{lem:unicity_BO} shows that  $\alpha^\pi_i(\lambda)\ne0$ implies that no other policy can be Bellman optimal (not even multichain policies)}. In Appendix~\ref{apx:unicity_BO}, we prove that this holds not only for two-action MDPs but also for any MDP with finite state and action spaces.

\begin{figure}[ht]
    \begin{tabular}{cc}
        \begin{subfigure}[t]{0.48\linewidth}
            \includegraphics[width=\linewidth]{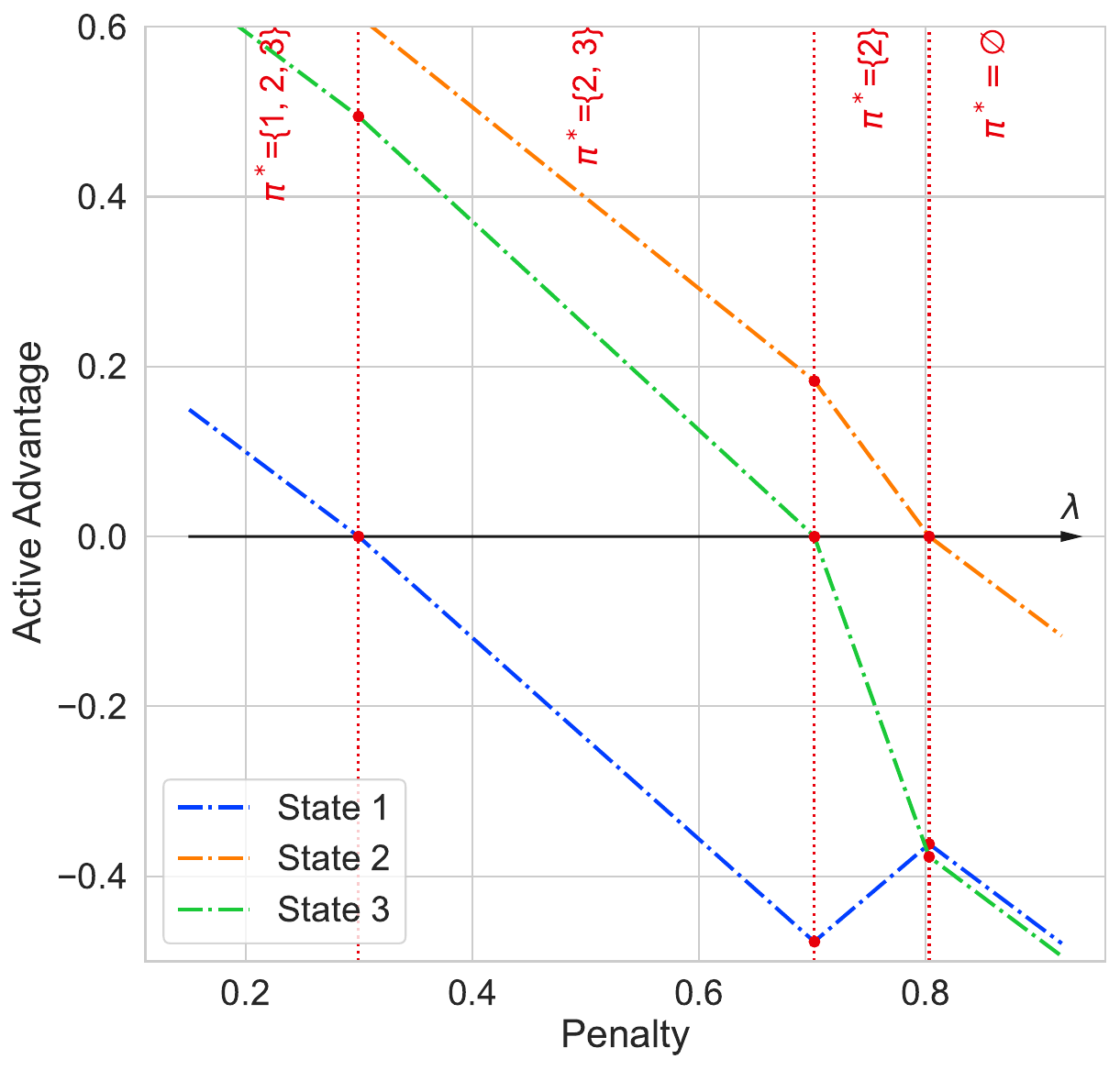}
            \caption{Indexable arm with $3$ states}
            \label{fig:illustrate_vf}
        \end{subfigure}
        &\begin{subfigure}[t]{0.48\linewidth}
            \includegraphics[width=\linewidth]{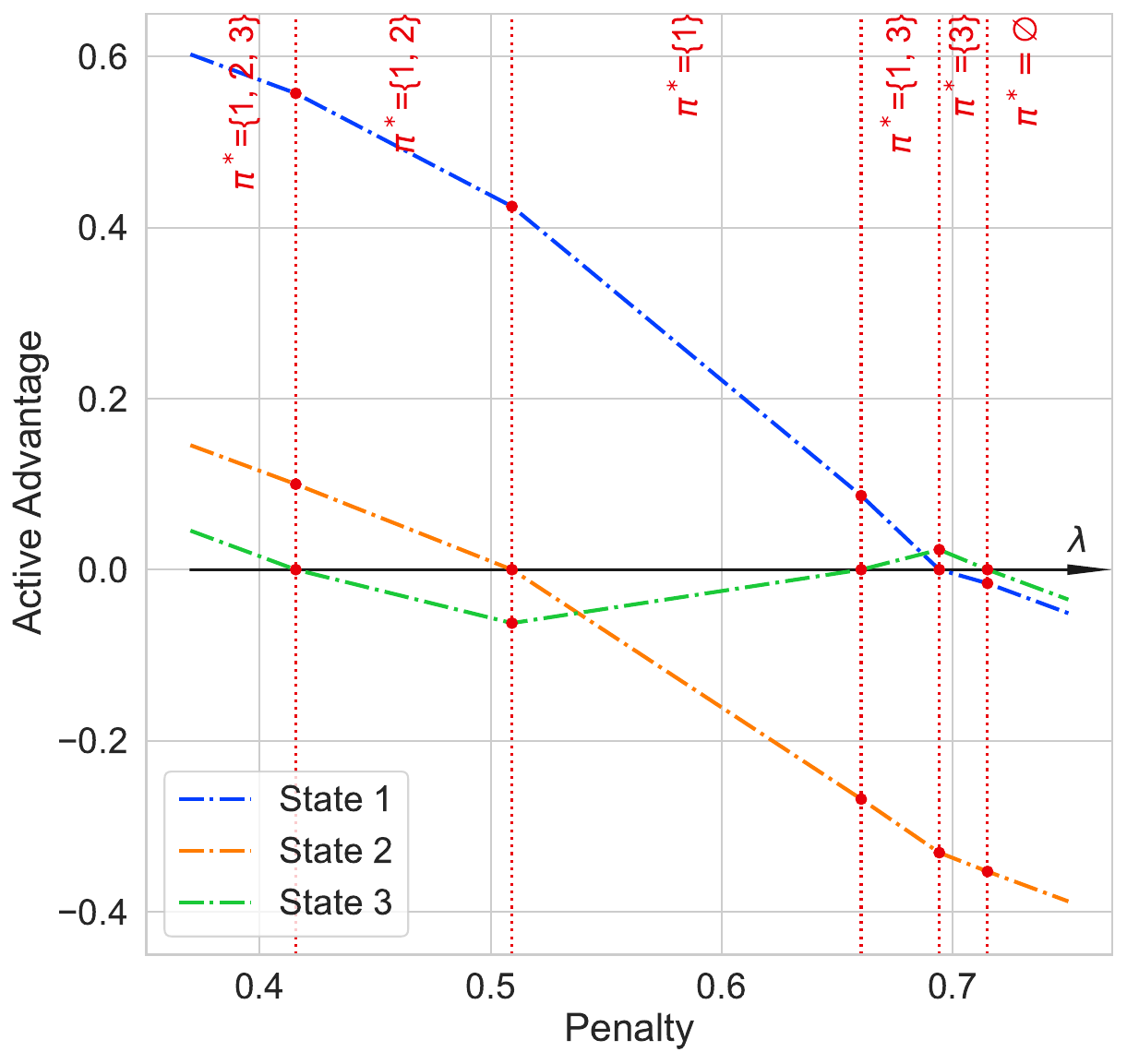}
            \caption{Non-indexable arm with $3$ states}
            \label{fig:illustrate_non_indexable}
        \end{subfigure}            
    \end{tabular}
    \caption{
        The active advantage as a function of penalty for two \reviewedFirstRev{unichain} examples, one is indexable (Figure~\ref{fig:illustrate_vf}) and the other is not (Figure~\ref{fig:illustrate_non_indexable}).
        \reviewedFirstRev{The red dots mark where the lines change their slope. The parameters of both examples are provided in Appendix \ref{apx:non_indexable_example}.}
    }
    \label{fig:illustrate_indexability}
\end{figure}

To illustrate Lemma~\ref{lem:indexable} and Lemma~\ref{lem:unicity_binary}, we consider two three-state arms. For each model, we plot in Figure~\ref{fig:illustrate_indexability} the active advantage $\alpha^{\pi^*(\lambda)}_i(\lambda)$ as a function of the penalty $\lambda$ for all states $i\in\{1,2,3\}$.  By Lemma~\ref{lem:unicity_binary}, we know that an optimal policy should activate all states having a positive advantage and rest all states having a negative advantage. Combined with the characterization of Lemma~\ref{lem:indexable}, this shows that:
\begin{itemize}
    \item The model presented in Figure~\ref{fig:illustrate_vf} is indexable: the optimal policy is a non-increasing function of $\lambda$ and the indices are $\lambda_1\approx0.3$, $\lambda_2\approx0.8$ and $\lambda_3\approx0.7$.
    \item The model presented in Figure~\ref{fig:illustrate_non_indexable} is not indexable: the optimal policy $\pi^*(0.7)=\{3\}$ is not included in $\pi^*(0.6)=\{1\}$. 
\end{itemize}

% (one indexable and one non-indexable) with 

% : one indexable (on the left) and one non-indexable (on the right). \reviewedFirstRev{As the arm is unichain, the average reward of any policy is state independent and we simply denote it by $g^\pi(\lambda)$.} In both figures, we plot in dashed-dot black line the average reward $g^\pi(\lambda)$ as a function of $\lambda$ for all policies $\pi\subseteq[n]$ (there are $2^4 = 16$ of them). We also plot in solid red line the average reward of the optimal policy. In Figure~\ref{fig:illustrate_vf}), we observe that the sequence of optimal policies is decreasing: $\{1,2,3,4\}\supset\{1,2,3\}\supset\{1,2\}\supset\{2\}\supset\emptyset$, which means that this arm is indexable. The vertical red lines represent the indices of all states: for a given $\lambda$, the optimal policy (written in red) is to activate the arm at all states whose $\lambda_i$ is greater than the penalty $\lambda$.
% For Figure~\ref{fig:illustrate_non_indexable}, the sequence of optimal policies is $\{1,2,3\}\supset\{1,2\}\supset\{1\}\subset\{1,3\}\supset\{3\}\supset\emptyset$: this example is not indexable because of what happens around $\lambda\approx 0.7$: it is optimal to put $3$ at rest when $\lambda\in[0.41,0.66]\cup[0.71,+\infty]$ but not when $\lambda\in[0.66,0.71]$.

%%%%%%%%%%%%%%%%%%%%%%%%%%%%%%%%%%%%%%%%%%%%%%%%%%%%%%%%%%%%%%%%%%%
%%%%%%%%%%%%%%%%%%%%%%%%%%%%%%%%%%%%%%%%%%%%%%%%%%%%%%%%%%%%%%%%%%%
\subsection{Overview of the algorithm}
\label{ssec:informal_widx_algo}

Our algorithm computes Whittle index in increasing order by navigating through \reviewedFirstRev{unichain Bellman optimal policies} and using the characterization provided by Lemma~\ref{lem:indexable}. It follows the graphical construction given in Figure \ref{fig:illustrate_vf}. It uses the following facts:
\begin{itemize}
    %\item For a given policy $\pi$, the average reward \reviewedFirstRev{of state $i$, $g_i^\pi(\lambda)$,} is an affine function of $\lambda$. As a result, it is possible to compute the penalty $\lambda$ at which two curves \reviewedFirstRev{$g_i^{\pi}(\lambda)$ and $g_i^{\pi'}(\lambda)$} intersect.
    \item If policy $\pi^1:=[n]$ is unichain, then $\valpha^{\pi^1}$ is decreasing in $\lambda$
    \item Similarly, if policy $\pi^{n+1}:=\emptyset$ is unichain, then $\valpha^{\pi^{n+1}}$ is decreasing in $\lambda$
    \item For an indexable arm, computing the index can be done by a greedy algorithm that constructs a sequence of penalties $\mu^1_{\min}\le\mu^2_{\min}\le\dots\le\mu^n_{\min}$ and a sequence of unichain policies $\pi^1\supsetneq\dots\supsetneq\pi^{n}$ by looking at where \reviewedFirstRev{$\alpha^{\pi^k}_i(\lambda)$ intersects horizontal axis for all $i\in\pi^k$}.
    \item The arm is indexable \reviewedFirstRev{if and only if} for all $k$ such that $\mu^{k-1}_{\min}<\mu^{k}_{\min}$, the constructed $\pi^k$ is the largest Bellman optimal policy for the penalty $\mu^k_{\min}$.
\end{itemize}
In order to compute Whittle index and test indexability, our algorithm needs that all policies $\pi^k$ constructed by the algorithm to be unichain. It does not require the arm to be unichain.

This leads to Algorithm~\ref{algo:whittle_informal}, that we write in pseudo-code. % The algorithm will be made efficient that each of this point can be implemented at low cost.
This algorithm relies on two subroutines: on Line~\ref{algo1:next_mu}, to compute the next index and on Line~\ref{algo1:test1} to test if a policy is Bellman optimal. We will describe later in the paper how to implement these functions in an efficient manner.  Note that in all the paper, we use the superscript $k$ (\emph{e.g.}, $\pi^k, \mu^k, \sigma^k$) to refer to the quantities computed at iteration $k$. We use the subscripts $i$ or $j$ (\emph{e.g.}, $\pi_i, \widx_i, \mu_i, \pi_j$) to refer to the quantities related to states $i$ or $j$.

\begin{algorithm}[ht]
    \caption{Given a $n$-state arm, test indexability and compute Whittle index (if indexable).}\label{algo:whittle_informal}
    \begin{algorithmic}[1]
        \State Set $\pi^1:=[n]$, $\mu^0_{\min}:=-\infty$
        \If{$\pi^1$ is multichain}
            \State \Return{the arm is multichain}
        \EndIf
        \For{$k=1$ to $n$}
            \State Compute $\valpha^{\pi^k}(\lambda)$
            \State Let $\mu_{\min}^k := \inf\{\lambda\ge\mu^{k-1}_{\min} : \exists i\in\pi^k, \alpha^{\pi^k}_i(\lambda)=0\}$ \label{algo1:next_mu}
            \If{$\mu^{k-1}_{\min}{<}\mu^k_{\min}$ and for some $i\notin\pi^k$, $\alpha^{\pi^k}_i(\mu^k_{\min})\ge0$ \label{algo1:test1}}
                \State \Return{the arm is not indexable}
            \EndIf
            \If{$\mu^k_{\min}=+\infty$}
                \State Set $\lambda_i:=+\infty$ for all $i\in\pi^k$
                \State \Return{the arm is indexable and the indices are $\{\lambda_i\}_{i\in[n]}$.}
            \EndIf
            \State Let $\sigma^k\in\pi^k$ be such that $\alpha^{\pi^k}_{\sigma^k}(\mu_{\min}^k)=0$ and $\widx_{\sigma^k}=\mu^k_{\min}$ \label{algo1:sigma^k}
            \State Set $\pi^{k+1} := \pi^{k} \setminus\{\sigma^k\}$ \label{algo1:pik}
            \If{$\pi^{k+1}$ is multichain}
                \State \Return{the arm is multichain}
            \EndIf
        \EndFor
        \State \Return{the arm is indexable and the indices are $\{\lambda_i\}_{i\in[n]}$.}
    \end{algorithmic}
\end{algorithm}

\reviewedFirstRev{
    Note that when $\mu^k_{\min}=+\infty$, the quantity $\alpha^{\pi^k}_i(\mu^k_{\min})$ defined in Line~\ref{algo1:next_mu} of Algorithm~\ref{algo:whittle_informal} should be understood as $\lim_{\lambda\to\infty}\alpha^{\pi^k}_i(\lambda)\in\real\cup\{-\infty,+\infty\}$. These limits are well defined because the functions $\alpha$s are affine in $\lambda$.
}

\begin{figure}[ht]
    \begin{tabular}{cc}
        \begin{subfigure}[t]{0.48\linewidth}
            \includegraphics[width=\linewidth]{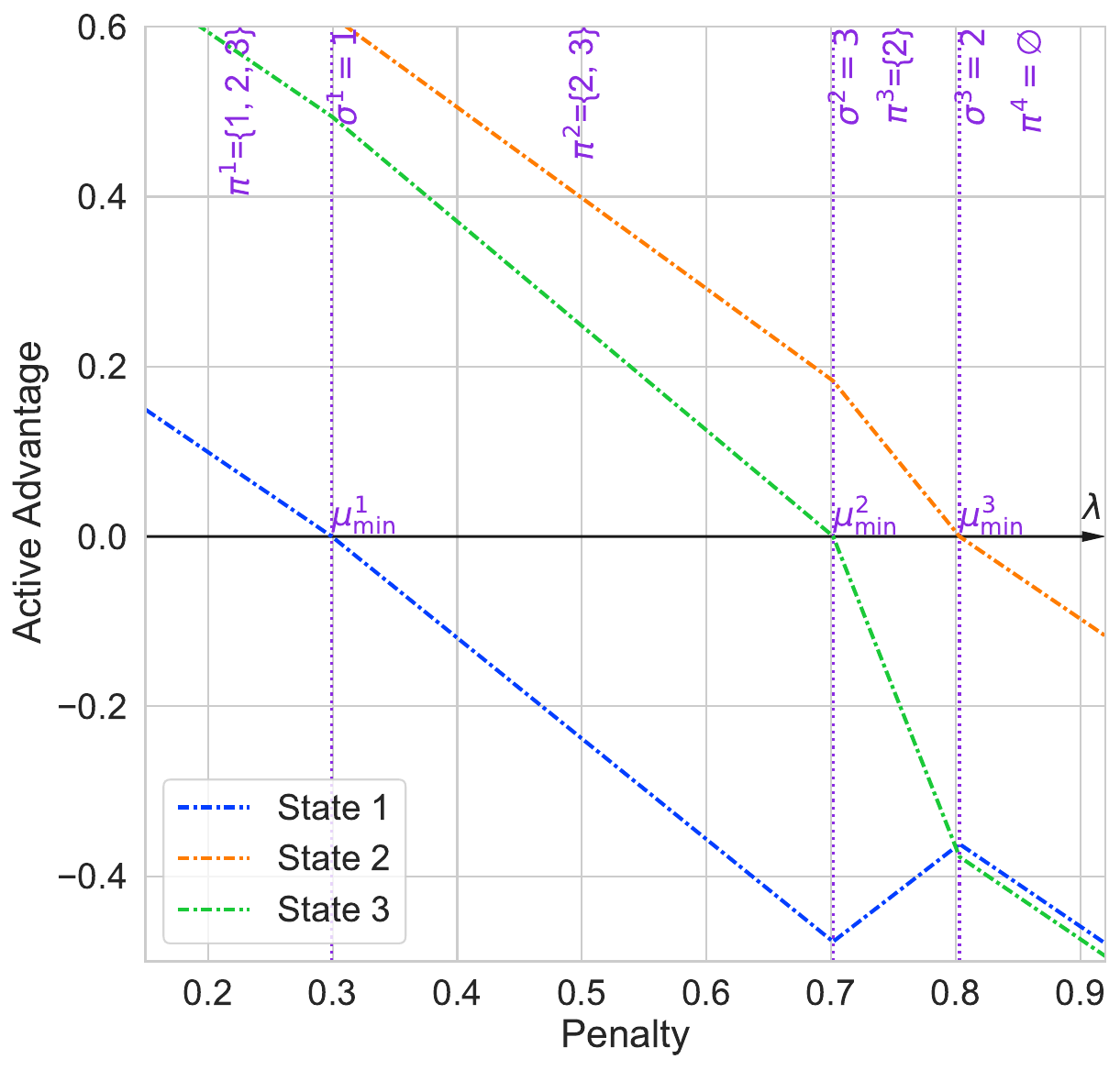}
            \caption{Indexable arm with $3$ states. Note that the algorithm does not compute $\valpha^{\pi^4}$. It checks if policy $\pi^4{=}\emptyset$ is unichain or not. If it is, then $\alpha^{\pi^4}_i$ is decreasing in $\lambda$ for each $i$.}
            \label{fig:illustrate_algo_ind}
        \end{subfigure}
        &\begin{subfigure}[t]{0.48\linewidth}
            \includegraphics[width=\linewidth]{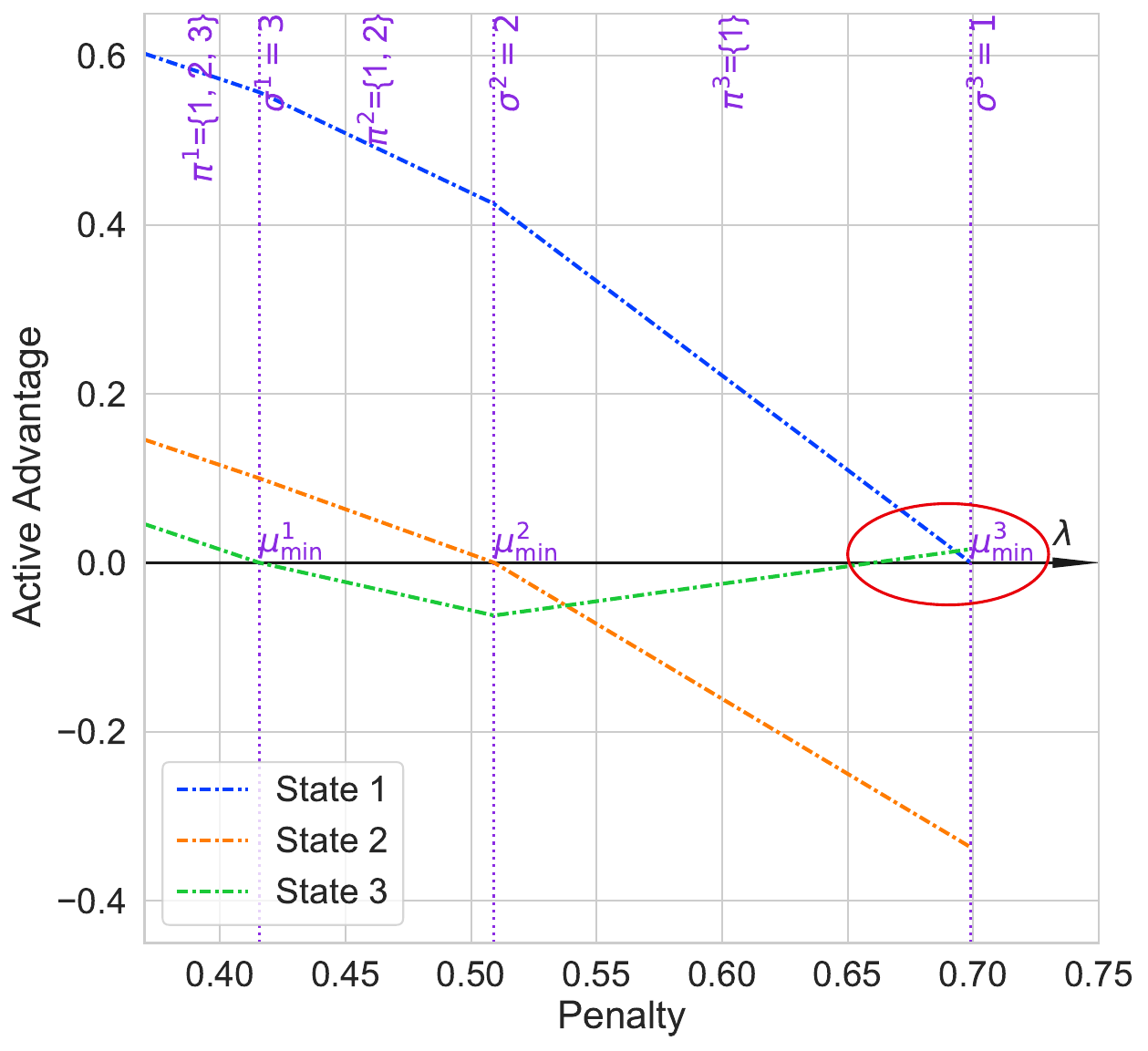}
            \caption{Non-indexable arm with $3$ states. The algorithm stops at iteration $3$ because $\alpha^{\pi^3}_3(\mu^3_{\min})>0$ (the green line at the zone circled with red ellipse).}
            \label{fig:illustrate_algo_nind}
        \end{subfigure}            
    \end{tabular}
    \caption{
        The active advantage $\alpha^{\pi^k}_i(\lambda)$ computed by the algorithm, for the two examples of Figure~\ref{fig:illustrate_indexability}.
    }
    \label{fig:illustrate_algo}
\end{figure}

\medskip

To illustrate how the algorithm works, we plot in Figure~\ref{fig:illustrate_algo} the values computed by the algorithm for the two arms represented in Figure~\ref{fig:illustrate_indexability}. For both models (indexable and non-indexable), the algorithm starts with the policy $[n]$ for which the derivative of the active advantage with respect to $\lambda$ is $-1$ for all states. It then computes $\mu^1_{\min}$ which is the potential index of State~$\sigma^1=1$ for \ref{fig:illustrate_algo_ind} and of State~$\sigma^1=3$ for \ref{fig:illustrate_algo_nind}. The algorithm then moves to iteration 2 and computes $\mu^2_{\min}>\mu^1_{\min}$ for both models and observes that for both models $\alpha^{\pi^2}_{\sigma^1}(\mu^2_{\min})<0$ for both models. Then the algorithm moves to iteration $3$ and computes $\mu^3_{\min}>\mu^2_{\min}$. There are now two cases: 
\begin{itemize}
    \item For \ref{fig:illustrate_algo_ind}, the algorithm verifies that $\pi^4:=\emptyset$ is unichain ($\alpha^{\emptyset}_i$ is decreasing in $\lambda$ for all $i$), terminates, and returns that the arm is indexable.
    \item For \ref{fig:illustrate_algo_nind}, the algorithm realizes that $\alpha^{\pi^3}_{\sigma^1}(\mu^3_{\min})>0$ which shows that this model is not indexable. 
\end{itemize}
% The optimal policy for small $\lambda$ is $\pi=[n]$ so the first iteration starts with $\pi^1=\{1,2,3,4\}$ and computes the equation of the affine function \reviewedFirstRev{$\alpha^{\pi^1}_i(\cdot)$ for each $i\in\{1,2,3,4\}$.} 
% This is represented in Figure~\ref{fig:iteration1}.
% We observe that \reviewedFirstRev{$\alpha^{\pi^1}_4(\cdot)$ is the first affine function to cross the horizontal axis at $\mu^1_{min}=\lambda_4\approx0.08$.} 
% After that, we set $\pi^2=\{1,2,3\}$ and compute the equation of the affine functions \reviewedFirstRev{$\alpha^{\pi^2}_i(\cdot)$} and use them to compute $\mu^2_{min}=\lambda_3\approx0.21$ (Figure~\ref{fig:iteration2}). 
% We then continue with the affine functions \reviewedFirstRev{$\alpha^{\{1,2\}}_i(\cdot)$.}
% We find $\mu^3_{min}=\lambda_1\approx0.29$ (\figurename~\ref{fig:iteration3}). We are left with the last iteration (not shown here) to compute the affine functions $\alpha^{\{2\}}_i(\cdot)$ and $\alpha^{\{2\}}_2(\cdot)$ crosses horizontal axis at $\mu^4_{min}=\lambda_2\approx0.58$ (see Figure~\ref{fig:illustrate_vf}).

%Note that even for the indexable example of Figure~\ref{fig:illustrate_algo_ind}, the active advantage function is not a decreasing function of $\lambda$.  Hence, this example is neither PCL-indexable (defined in \citep[Definition~3]{nino2020fast}) nor strongly-indexable (defined in \cite{nakhleh2021neurwin}). This does not prevent our algorithm to work. 
\noindent Note that for the indexable example of Figure~\ref{fig:illustrate_algo_ind}, the active advantage function is not a decreasing function of $\lambda$. Hence, this example is neither PCL-indexable (defined in \citep[Definition~3]{nino2020fast}) nor strongly-indexable (defined in \cite{nakhleh2021neurwin}). However, this does not prevent our algorithm from working.

\subsection{Correctness of Algorithm~\ref{algo:whittle_informal}}

%\KK{
%    We say that if Line~\ref{algo1:test2} happens the arm is indexable and the index of active states are $+\infty$. However, we do not know the structure the arm in advanced. What if the arm is unichain?
%    In unichain arm, Line~\ref{algo1:test2} is satisfied when $\alpha^{\pi^k}_i(\lambda)$ is increasing in $\lambda$ for all $i\in\pi^k$.
%}

The following result shows that Algorithm~\ref{algo:whittle_informal} is correct.

\begin{theorem}
    \label{thm:correct}
    Given a $n$-state arm:
    \begin{enumerate}[label=(\roman*)]
        \item \label{it:idx_proof} if Algorithm~\ref{algo:whittle_informal} outputs ``the arm is indexable and the indices are $\{\lambda_i\}_{i\in[n]}$'', then the arm is indexable and each $\lambda_i$ is the Whittle index of state $i$;
        \item \label{it:non_idx_proof} if Algorithm~\ref{algo:whittle_informal} outputs ``non-indexable'', then the arm is non-indexable;
        \item \label{it:multi_chain} if Algorithm~\ref{algo:whittle_informal} outputs ``multichain'', then the arm is multichain.
    \end{enumerate}
\end{theorem}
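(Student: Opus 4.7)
Part (iii) is immediate from the algorithm's design: the only return paths labelled ``multichain'' are guarded by an explicit unichain check on a policy $\pi^k$ (or on $\pi^1 = [n]$ before the loop), and by definition of an arm being multichain this is enough.

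For parts (i) and (ii), my plan is an induction on the iteration counter $k$ maintaining the invariant: upon entering iteration $k$, the policy $\pi^k$ is unichain, is the unique Bellman optimal policy on the open interval $(\mu^{k-1}_{\min}, \mu^k_{\min})$ whenever this interval is non-empty, and $\valpha^{\pi^k}(\mu^{k-1}_{\min})$ already satisfies the weak sign conditions of Lemma~\ref{lem:unicity_binary}\ref{it:binary_opt1}. The base case $k=1$ falls out from $\pi^1 = [n]$ being unichain (checked by the algorithm) and $\valpha^{[n]}$ being affine and decreasing in $\lambda$ (the stated structural fact used in Section~\ref{ssec:informal_widx_algo}), so $\alpha^{\pi^1}_i(\lambda) > 0$ on $(-\infty, \mu^1_{\min})$ for every $i$, and Lemma~\ref{lem:unicity_binary}\ref{it:binary_opt2} yields unique Bellman optimality.

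The heart of the argument is the inductive step. If the indexability test on Line~\ref{algo1:test1} fires at iteration $k$, I would argue as follows: the invariant together with continuity of $\valpha^{\pi^k}$ forces $\alpha^{\pi^k}_i(\mu^k_{\min}) \leq 0$ for $i \notin \pi^k$, so the test ``$\alpha^{\pi^k}_i(\mu^k_{\min}) \geq 0$'' firing means equality; Lemma~\ref{lem:unicity_binary}\ref{it:binary_optX} then produces $\pi^k \cup \{i\}$ as another Bellman optimal policy at $\mu^k_{\min}$. Because the invariant makes $\pi^k$ the unique Bellman optimal policy at any $\lambda \in (\mu^{k-1}_{\min}, \mu^k_{\min})$ and $\pi^k \cup \{i\} \not\subseteq \pi^k$, Definition~\ref{defn:indexability} is violated, giving (ii). If the test passes, the algorithm sets $\pi^{k+1} = \pi^k \setminus \{\sigma^k\}$ with $\alpha^{\pi^k}_{\sigma^k}(\mu^k_{\min}) = 0$; Lemma~\ref{lem:unicity_binary}\ref{it:binary_optX} makes $\pi^{k+1}$ Bellman optimal at $\mu^k_{\min}$ with $\valpha^{\pi^{k+1}}(\mu^k_{\min}) = \valpha^{\pi^k}(\mu^k_{\min})$ once $\pi^{k+1}$ is verified unichain, which supplies the endpoint sign condition at iteration $k+1$. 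Unique optimality on the open interval $(\mu^k_{\min}, \mu^{k+1}_{\min})$ then follows from affineness of each $\alpha^{\pi^{k+1}}_i$ in $\lambda$: for $i \in \pi^{k+1}$, non-vanishing on the open interval (by definition of $\mu^{k+1}_{\min}$) combined with non-negativity at the left endpoint forces strict positivity, and for $i \notin \pi^{k+1}$, non-positivity at the left endpoint (from the current invariant plus Lemma~\ref{lem:unicity_binary}\ref{it:binary_optX}) together with non-positivity at the right endpoint (which is precisely what the next iteration's test will validate) forces non-positivity on the closed interval, hence strict negativity on the open one by affine monotonicity. Applying Lemma~\ref{lem:unicity_binary}\ref{it:binary_opt2} closes the induction.

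The main obstacle I anticipate is bookkeeping the degenerate boundary cases: consecutive collapses $\mu^{k-1}_{\min} = \mu^k_{\min}$ when several Whittle indices coincide (the interval is empty and the uniqueness clause of the invariant is vacuous, but I must still propagate the endpoint sign condition via repeated applications of Lemma~\ref{lem:unicity_binary}\ref{it:binary_optX}), and the terminating branch $\mu^k_{\min} = +\infty$ where each $\alpha^{\pi^k}_i(\mu^k_{\min})$ must be read as the affine limit at $+\infty$; the same monotone-on-affine-intervals argument survives with $+\infty$ as an endpoint, and the algorithm's assignment $\lambda_i = +\infty$ for $i \in \pi^k$ is then justified by $\pi^k$ remaining the unique Bellman optimal policy on $(\mu^{k-1}_{\min}, +\infty)$. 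Once these edge cases are uniformly handled, the sequence $\pi^1 \supsetneq \pi^2 \supsetneq \dots$ and $-\infty = \mu^0_{\min} \leq \mu^1_{\min} \leq \dots \leq \mu^n_{\min} \leq \mu^{n+1}_{\min} = +\infty$ produced by the algorithm matches the characterization of Lemma~\ref{lem:indexable}(iii), and (i) follows with each $\lambda_i$ equal to the Whittle index of state $i$.
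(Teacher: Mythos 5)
Your parts (i) and (iii) follow essentially the same route as the paper: an induction showing that each constructed $\pi^k$ is the unique Bellman optimal policy on $(\mu^{k-1}_{\min},\mu^k_{\min})$ and that $\pi^{k+1}$ is Bellman optimal at $\mu^k_{\min}$ with $\valpha^{\pi^{k+1}}(\mu^k_{\min})=\valpha^{\pi^k}(\mu^k_{\min})$, followed by an appeal to Lemma~\ref{lem:indexable}(iii); your treatment of the degenerate cases $\mu^{k-1}_{\min}=\mu^k_{\min}$ and $\mu^k_{\min}=+\infty$ also matches the paper's. One small imprecision there: for a passive state, non-positivity of an affine $\alpha^{\pi^{k+1}}_i$ at both endpoints does not by itself give strict negativity on the open interval (the function could vanish identically); you should use the \emph{strict} inequality $\alpha^{\pi^{k+1}}_i(\mu^{k+1}_{\min})<0$ that a passed test actually delivers.

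Part (ii) has a genuine gap. You assert that ``the invariant together with continuity of $\valpha^{\pi^k}$ forces $\alpha^{\pi^k}_i(\mu^k_{\min})\le 0$ for $i\notin\pi^k$, so the test firing means equality.'' This is circular: the clause of your invariant stating that $\pi^k$ is the unique Bellman optimal policy on all of $(\mu^{k-1}_{\min},\mu^k_{\min})$ is precisely what cannot be established until the test has passed. Before the test, all that is known for a passive state $j\notin\pi^k$ is $\alpha^{\pi^k}_j(\mu^{k-1}_{\min})\le 0$; being affine, $\alpha^{\pi^k}_j$ may cross zero strictly inside the interval and satisfy $\alpha^{\pi^k}_j(\mu^k_{\min})>0$ \emph{strictly} --- this is exactly the situation the test on Line~\ref{algo1:test1} is designed to catch, and it is what occurs in the paper's own non-indexable example (Figure~\ref{fig:illustrate_algo_nind}, where $\alpha^{\pi^3}_3(\mu^3_{\min})>0$). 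In that case $\pi^k$ is not Bellman optimal at $\mu^k_{\min}$, Lemma~\ref{lem:unicity_binary}\ref{it:binary_optX} does not apply, and you have exhibited no Bellman optimal policy at a penalty larger than $\mu^{k-1}_{\min}$ that fails to be contained in $\pi^k$, so no violation of Definition~\ref{defn:indexability} has been produced. The paper closes this case with a separate contradiction argument: assuming every Bellman optimal policy for $\lambda\in(\mu^{k-1}_{\min},\mu^k_{\min}]$ is contained in $\pi^k$, it takes the smallest $\tilde{\lambda}$ in that range at which some $\pi\subsetneq\pi^k$ is Bellman optimal; any $i\in\pi^k\setminus\pi$ then has $\alpha^{\pi^k}_i(\tilde{\lambda})=0$, which forces $\tilde{\lambda}=\mu^k_{\min}$ by the definition of $\mu^k_{\min}$ and hence $\pi^k$ Bellman optimal at $\mu^k_{\min}$, contradicting $\alpha^{\pi^k}_j(\mu^k_{\min})>0$. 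You need this (or an equivalent) argument to handle the strict case; your proof as written only covers the equality case.
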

A direct consequence of Theorem~\ref{thm:correct} is that for unichain arms, Algorithm~\ref{algo:whittle_informal} provides a full characterization of indexability. 
\begin{corollary}
    \label{coro:correct_unichain}
    Given a unichain arm with finite states, Algorithm~\ref{algo:whittle_informal} outputs ``the arm is indexable'' if and only if it is indexable. 
\end{corollary}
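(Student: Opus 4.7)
The plan is to derive the corollary as an immediate consequence of Theorem~\ref{thm:correct}, the key observation being that for a unichain arm the ``multichain'' branch of Algorithm~\ref{algo:whittle_informal} is never reached, so the only possible outputs are ``indexable'' or ``non-indexable'', and the two relevant parts of the theorem then pin down the correct output.

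First, I would invoke the definition of a unichain arm recalled in Section~\ref{sec:bandits}: an arm is unichain if every policy $\pi\subseteq[n]$ induces a transition matrix with a unique recurrent class. In Algorithm~\ref{algo:whittle_informal}, the tests ``$\pi^1$ is multichain'' and ``$\pi^{k+1}$ is multichain'' operate on policies $\pi^k\subseteq[n]$, so under the unichain hypothesis both tests evaluate to false at every iteration. Consequently, the algorithm cannot output ``multichain'', and must terminate by either returning ``indexable'' (through one of the two \textsc{return}s that deliver the indices $\{\widx_i\}_{i\in[n]}$) or ``non-indexable'' (through the test on Line~\ref{algo1:test1}).

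For the ``only if'' direction, suppose the algorithm returns ``indexable''. Then Theorem~\ref{thm:correct}\ref{it:idx_proof} directly gives that the arm is indexable and the returned $\lambda_i$ are the Whittle indices. For the converse, assume the arm is indexable. By the paragraph above, the algorithm returns either ``indexable'' or ``non-indexable''. If it returned ``non-indexable'', Theorem~\ref{thm:correct}\ref{it:non_idx_proof} would imply that the arm is non-indexable, contradicting the assumption. Hence the algorithm returns ``indexable''.

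There is essentially no obstacle beyond carefully invoking the definition of a unichain arm to rule out the multichain branch; all the substantive work is already contained in Theorem~\ref{thm:correct}. The only point to be mindful of is that being unichain is a property of all policies (not merely of the two extreme policies $[n]$ and $\emptyset$), so that every intermediate $\pi^{k+1}=\pi^k\setminus\{\sigma^k\}$ produced by the loop on Line~\ref{algo1:pik} is also unichain, which is exactly what allows the multichain test inside the loop to be bypassed throughout the execution.
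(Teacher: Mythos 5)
Your proposal is correct and matches the paper's intent exactly: the paper states the corollary as a direct consequence of Theorem~\ref{thm:correct}, and the argument you spell out — that the unichain hypothesis rules out the ``multichain'' branches for every policy $\pi^k$ encountered, so the algorithm must return either ``indexable'' or ``non-indexable'', after which parts \ref{it:idx_proof} and \ref{it:non_idx_proof} of the theorem yield the equivalence — is precisely the intended reasoning.
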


Note that Algorithm~\ref{algo:whittle_informal} does not require the arm to be unichain to work. The required condition is that the Bellman optimal policies $\{\pi^k\}_{k\ge1}$ that Algorithm~\ref{algo:whittle_informal} uses are all unichain. In particular, there exist examples of arms that are multichain and indexable and for which the algorithm returns ``indexable''. Similarly, there exist examples of arms that are multichain and non-indexable and for which the algorithm returns ``non-indexable''. \reviewed{We provide such examples in Appendix~\ref{apx:multichain2}.}

Our algorithm works by exploring solely unichain policies. It does so because (in general) the bias of multichain policy is not unique. The characterization of Bellman optimal policies is much more difficult for multichain models, and the notion of indexability becomes more elusive (see Example \ref{fig:ambiguous_example} in Appendix \ref{apx:discussion_index}). When Algorithm~\ref{algo:whittle_informal} returns ``multichain'', it means that the algorithm is unable to decide whether the arm is indexable or not (but the algorithm knows that the arm is multichain because it has just found a multichain policy).

%\NG{J'ai modifié la discussion sur le cas multichain et je l'ai mis ici : à relire.}

\begin{proof}[Proof of Theorem~\ref{thm:correct}]
    \textbf{Proof of \ref{it:idx_proof}} -- We first prove by induction on $k$ that:\\
    \fbox{\parbox{.98\linewidth}{If Algorithm~\ref{algo:whittle_informal} completes iteration $k\ge1$, then $\pi^k$ and $\pi^{k+1}$ are unichain and
    \begin{enumerate}[label=(\Alph*)]
        \item \label{eq:induc1} %if $\mu_{\min}^{k-1}<\mu_{\min}^{k}$, then $\alpha^{\pi^{k}}_i(\mu_{\min}^{k-1})>0$ for all $i\in\pi^{k}$ and 
        $\pi^{k}$ is the unique Bellman optimal policy for all $\lambda\in(\mu_{\min}^{k-1}, \mu_{\min}^{k})$;
        \item \label{eq:induc2} $\pi^{k+1}$ is Bellman optimal for $\mu^k_{\min}$ and $\valpha^{\pi^{k+1}}(\mu^{k}_{\min})=\valpha^{\pi^{k}}(\mu^{k}_{\min})$.
        %and $\alpha^{\pi^{k+1}}_i(\mu^{k}_{\min})\ge0$ for $i\in\pi^{k}$ and $\alpha^{\pi^{k+1}}_i(\mu^{k}_{\min}) <0$ for $i\notin\pi^{k}$.
    \end{enumerate}
    }}
    Base case $k=1$: As we prove later in \eqref{eq:b_pi_one}, $\frac{\partial \alpha^{\pi^1}_i}{\partial \lambda} =-1$. So, for each $i\in\pi^1$, $\alpha^{\pi^1}_i(\lambda)$ is decreasing in $\lambda$. By definition, $\mu^1_{\min}$ is the smallest $\lambda$ such that one of the $\alpha^{\pi^{1}}_i(\lambda) = 0$. Hence, for all $\lambda<\mu^1_{\min}$: $\alpha^{\pi^{1}}_i(\lambda)>0$. By Lemma~\ref{lem:unicity_binary}, this shows that $\pi^1$ is the unique Bellman optimal policy for all $\lambda\in (\mu^0_{\min}, \mu^1_{\min})$, so \ref{eq:induc1} is true. Moreover, since $\pi^1$ is Bellman optimal for the penalty $\mu^1_{\min}$ and $\pi^2$ is unichain, Lemma~\ref{lem:unicity_binary} implies that $\pi^{2}$ is Bellman optimal for the penalty $\mu^{1}_{\min}$ and $\valpha^{\pi^2}(\mu^1_{\min}) =\valpha^{\pi^1}(\mu^1_{\min})$.
    This shows \ref{eq:induc2}.
    
    \begin{figure}[ht]
        \centering
            \includegraphics[width=.8\linewidth]{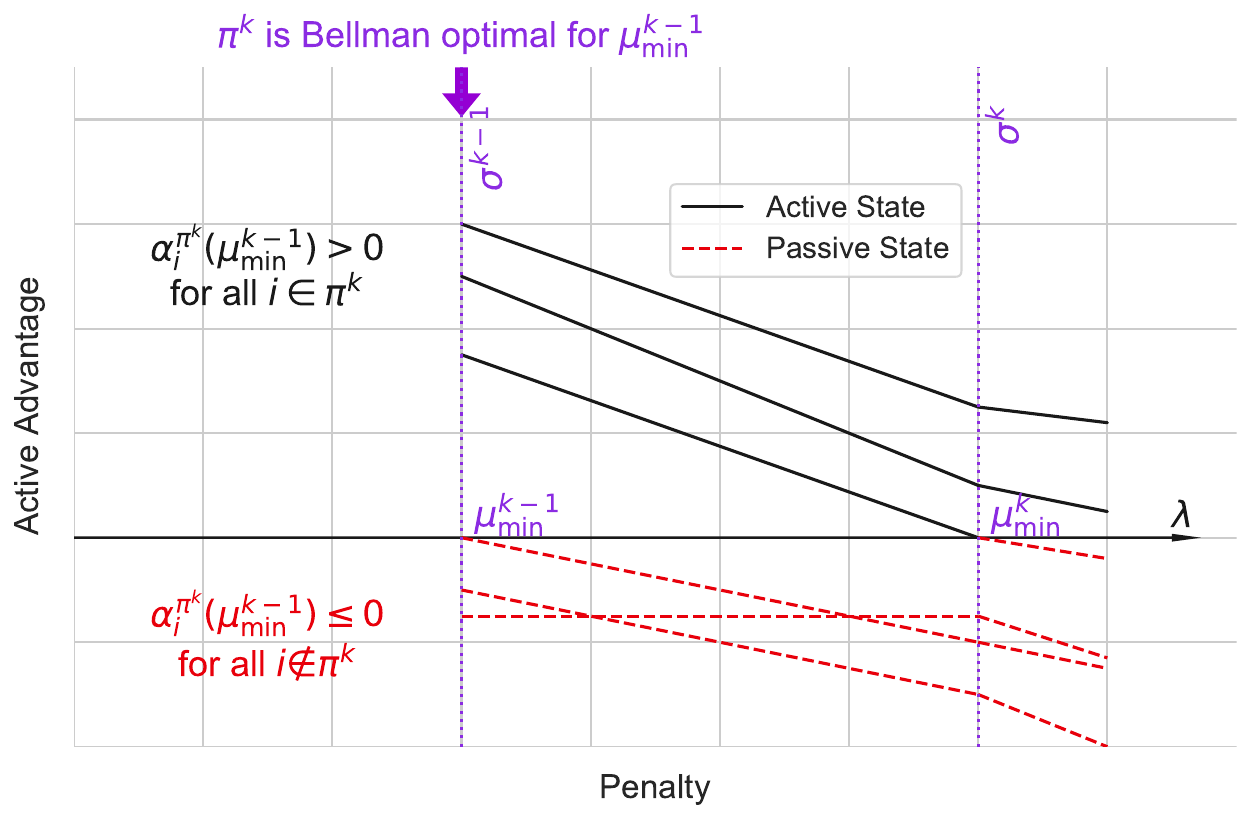}
        \caption{Illustration of what happens when $\mu^{k-1}_{\min}<\mu^k_{\min}$ and when the test of Line~\ref{algo1:test1} is successful (recall that the function $\alpha^{\pi^{k}}_i(\lambda)$ is affine in $\lambda$). The black lines are the advantage functions $\alpha^{\pi^k}_i(\lambda)$ of active state $i\in\pi^k$. The dashed red lines are the advantage functions $\alpha^{\pi^k}_i(\lambda)$ of passive state $i\not\in\pi^k$.}
        \label{fig:proof_algo}
    \end{figure}

    Suppose that the induction is true until iteration $k-1$ and that the algorithm completes iteration $k$.  If $\mu^{k-1}_{\min}=\mu^k_{\min}$, \ref{eq:induc1} is trivial and \ref{eq:induc2} is a direct consequence of the definition of $\mu^k_{\min}$ and of Lemma~\ref{lem:unicity_binary}. Consider now that $\mu^{k-1}_{\min}<\mu^k_{\min}$ and observe what happens in Figure~\ref{fig:proof_algo}. By the induction hypothesis, $\pi^k$ is Bellman optimal for the penalty $\mu^{k-1}_{\min}$. Moreover, by definition of $\mu^k_{\min}$, together with $\mu^{k-1}_{\min}<\mu^k_{\min}$, $\alpha^{\pi^k}_i(\mu^{k-1}_{\min})\ne0$ for all $i\in\pi^k$. Hence:
    \begin{align*}
        \alpha^{\pi^{k}}_i(\mu^{k-1}_{\min})>0 \text{ for $i\in\pi^k$ \qquad and \qquad } \alpha^{\pi^{k}}_i(\mu^{k-1}_{\min})\le0\text{ for $i\not\in\pi^k$}.
    \end{align*}
    Finally, thanks to the test on Line~\ref{algo1:test1} of the algorithm, we have:
    \begin{align}
        \label{eq:test2_algo}
        \alpha^{\pi^{k}}_i(\mu^{k}_{\min})\ge0 \text{ for $i\in\pi^k$ \qquad and \qquad } \alpha^{\pi^{k}}_i(\mu^{k}_{\min})<0\text{ for $i\not\in\pi^k$}.
    \end{align}
    
    % \begin{align*}
    %     \begin{cases}
    %         \text{for } i\in\pi^{k}, \alpha^{\pi^{k}}_i(\mu^{k-1}_{\min})>0 \text{ and }\alpha^{\pi^{k}}_i( \mu^{k}_{\min}) \ge 0\\
    %         \text{for } i\notin\pi^{k}, \alpha^{\pi^{k}}_i(\mu^{k-1}_{\min}) \le 0 \text{ and }\alpha^{\pi^{k}}_i(\mu^{k}_{\min}) < 0.
    %     \end{cases}
    % \end{align*}

    % Since $\alpha^{\pi^k}_i(\mu^{k-1}_{\min})\ne0$.     
    % Since $\alpha^{\pi^{k}}_i(\mu^{k-1}_{\min})\ge0$ for all $i\in\pi^{k-1}$ and $\pi^{k-1}\supsetneq\pi^k$, the definition of $\mu^{k}_{\min}$ implies that $\alpha^{\pi^{k}}_i(\mu^{k}_{\min})\ge0$ for all $i\in\pi^{k}$. So, if $\mu^k_{\min}>\mu^{k-1}_{\min}$, then the linearity of $\valpha^{\pi^k}(\lambda)$ in $\lambda$ implies that $\alpha^{\pi^{k}}_i(\mu^{k-1}_{\min})>0$ for all $i\in\pi^{k}$. The test done at Line~\ref{algo1:test1} of the algorithm ensures that for each state $i\notin\pi^{k}$, $\alpha^{\pi^{k}}_i(\mu^{k}_{\min})<0$. Hence, we have

    In consequence, for each $\lambda\in(\mu^{k-1}_{\min}, \mu^{k}_{\min})$, $\alpha^{\pi^{k}}_i(\lambda) >0$ for $i\in\pi^{k}$ and $\alpha^{\pi^{k}}_i(\lambda) <0$ for $i\notin\pi^{k}$. Lemma~\ref{lem:unicity_binary} implies that $\pi^{k}$ is the unique Bellman optimal policy for each $\lambda\in(\mu^{k-1}_{\min}, \mu^{k}_{\min})$. 
    This shows \ref{eq:induc1}.
    Also, \eqref{eq:test2_algo} implies that $\pi^{k}$ is Bellman optimal for the penalty $\mu^{k}_{\min}$. Combine this with the fact that $\pi^{k+1}$ is unichain, Lemma~\ref{lem:unicity_binary} implies that $\pi^{k+1}$ is Bellman optimal for $\mu^{k}_{\min}$ and $\valpha^{\pi^{k+1}}(\mu^{k}_{\min}) =\valpha^{\pi^{k}}(\mu^{k}_{\min})$.
    This shows \ref{eq:induc2}.
    So, the induction is also true for iteration $k$.

    This shows that the induction property is true for all $k\in[n]$.
    In particular, when $\pi^{n+1}:=\emptyset$ is unichain, $\alpha^{\pi^{n+1}}_i$ is decreasing in $\lambda$ as we prove later in \eqref{eq:b_pi_n1} that $\frac{\partial \alpha^{\pi^{n+1}}_i}{\partial \lambda} =-1$.
    Combine this with the fact that $\alpha^{\pi^{n+1}}_i(\mu^{n}_{\min}) =\alpha^{\pi^{n}}_i(\mu^{n}_{\min})\le0$ for all $i$, Lemma~\ref{lem:unicity_binary} implies that $\pi^{n+1}$ is the unique Bellman optimal policy for $\lambda>\mu^n_{\min}$.
    We simply set $\mu^{n+1}_{\min}:=+\infty$.
    To sum up, there are two cases for which the algorithm outputs that the arm is indexable:
    \begin{enumerate}
        \item if the algorithm goes until the end of iteration $n$, then the sequence of values $\{\mu^k_{\min}\}_{k\in[n+1]}$ and of policies $\{\pi^k\}_{k\in[n+1]}$ satisfies the conditions of Lemma~\ref{lem:indexable}(iii) and the arm is indexable.
        \item if the algorithm stops at iteration $k$ because $\mu^k_{\min}=+\infty$, then one can set $\mu^{k+1}_{\min}:=\dots:=\mu^{n+1}_{\min}:=+\infty$ and define a sequence of policies $\pi^{k+1}\supsetneq\dots\supsetneq\pi^{n+1}:=\emptyset$ by eliminating all states of $\pi^k$ in an arbitrary order. These sequences satisfies the conditions of Lemma~\ref{lem:indexable}(iii) and the arm is indexable.
    \end{enumerate}
    
    \medskip

    \textbf{Proof of \ref{it:non_idx_proof}} -- 
    Our algorithm outputs non-indexable if there exists an iteration $k$ and a state $j\notin\pi^k$, such that $\alpha^{\pi^k}_j(\mu^k_{\min})\ge0$ when $\mu^{k-1}_{\min}<\mu^k_{\min}$.  We know that $\pi^k$ is Bellman optimal for $\mu^{k-1}_{\min}$, otherwise the algorithm would have stopped before. Assume that:
    \begin{align}
        \label{eq:contradiction}
        \text{All Bellman optimal policies for any $\lambda\in(\mu^{k-1}_{\min},\mu^k_{\min}]$ are included in $\pi^k$.}
    \end{align}
    We will see that this assumption leads to a contradiction. We distinguish two possibilities: 
    \begin{enumerate}
        \item $\pi^k$ is Bellman optimal for the penalty $\mu^k_{\min}$ -- This implies that $\alpha^{\pi^k}_i(\mu^k_{\min})\le0$ for all $i\not\in\pi^k$ which, together with $\alpha^{\pi^k}_j(\mu^k_{\min})\ge0$, implies that $\alpha^{\pi^k}_j(\mu^k_{\min})=0$. By Lemma~\ref{lem:unicity_binary}, this would imply that $\pi^k\cup\{j\}$ is Bellman optimal for $\mu^k_{\min}$. This is in contradiction with \eqref{eq:contradiction}.
        \item $\pi^k$ is not Bellman optimal for $\mu^k_{\min}$ -- In this case, we denote by $\tilde{\lambda}$ the smallest penalty $\lambda\in[\mu^{k-1}_{\min},\mu^{k}_{\min}]$ such that there exists $\pi\subsetneq\pi^k$ that is Bellman optimal for $\tilde{\lambda}$ (it exists because $\pi^k$ is not Bellman optimal for $\mu^k_{\min}$ and we assumed \eqref{eq:contradiction}).
            %As we assume \eqref{eq:contradiction}, for the penalty $\tilde{\lambda}$, all Bellman policies are included in $\pi^k$. This shows that there exists $\pi\subsetneq\pi^k$ such that $\pi$ and $\pi^k$ are both Bellman optimal for $\tilde{\lambda}$.
            By definition of $\tilde{\lambda}$, $\pi$ and $\pi^k$ are both Bellman optimal for the penalty $\tilde{\lambda}$.
            Let $i\in\pi^k\setminus\pi$. By Lemma~\ref{lem:unicity_binary}, this implies that $\alpha^{\pi^k}_i(\tilde{\lambda})=0$. The problem is that by definition, $\mu^k_{\min}$ is the smallest penalty $\lambda$ %(greater than $\mu^{k-1}_{\min}$)
            for which there exists $i\in\pi^k$ such that $\alpha^{\pi^k}_i(\lambda)=0$. This implies that $\tilde{\lambda}=\mu^k_{\min}$ which in turn implies that $\pi^k$ is optimal for $\mu^k_{\min}$. This leads to a contradiction.
    \end{enumerate}
    This shows that neither case 1 nor 2 are possible. So, \eqref{eq:contradiction} cannot be true.
    In consequence, the negation of \eqref{eq:contradiction} is true: there exists $\lambda>\mu^{k-1}_{\min}$ and $\pi\not\subseteq \pi^k$ such that $\pi$ is Bellman optimal for $\lambda$.
    This contradicts Definition~\ref{defn:indexability} and therefore implies that the arm is not indexable.

    \medskip
    \textbf{Proof of \ref{it:multi_chain}} -- if our algorithm outputs multichain, then the arm is multichain.
    This is straightforward based on the definition of multichain MDP.
\end{proof}

We should note that \reviewedFirstRev{by Line~\ref{algo1:sigma^k}, it is possible to have $\mu^k_{\min}=\mu^{k-1}_{\min}$. This happens when several states have the same value of Whittle index. This is not problematic because we are sure that $\sigma^k\neq\sigma^{k-1}$ by Line~\ref{algo1:pik}.}
    %\begin{itemize}
        %\item if $\mu^k_i<\mu^{k-1}_{\min}$ for all $i\in\pi^k$, then $\pi^k$ remains optimal for all \reviewedFirstRev{penalty $\widx\ge\mu^{k-1}_{\min}$}. This is not possible for indexable arms because for an infinite penalty the optimal strategy is to put all states at rest. That is why Line~\ref{algo1:test1} returns that the arm is not indexable
        %\item by Line~\ref{algo1:sigma^k}, it is possible to have $\mu^k_{\min}=\mu^{k-1}_{\min}$. This happens when several states have the same value of Whittle index. This is not problematic because we are sure that $\sigma^k\neq\sigma^{k-1}$ by Line~\ref{algo1:pik}.
        %\item if $\mu^k_{\min}=+\infty$ then $\pi^k$ cannot be optimal for $\mu^k_{\min}$ because, again, when the arm is ergodic, the only optimal policy for penalty $\widx=+\infty$ is to put all states at rest, $\pi^*(\infty)=\emptyset$. Thus, Line~\ref{algo1:test1} returns the arm is not indexable
    %\end{itemize}
%\end{proof}

\reviewedFirstRev{
In the proof of Theorem~\ref{thm:correct}\ref{it:idx_proof}, we showed that when policy $[n]$ is unichain, the function $\alpha^{\pi^1}_{i}(\lambda)$ is decreasing in $\lambda$ which implies that it crosses the line $0$ at some finite value $\mu^1_{i}$. This implies that for an indexable arm, if $[n]$ is unichain then the Whittle index are all strictly larger than $-\infty$. A symmetric argument shows that if policy $\emptyset$ is unichain, then all Whittle index are strictly smaller than $+\infty$. This implies the following result. 
\begin{corollary}
    Given a unichain arm with $n$ states, if the arm is indexable, then the indices of the $n$ states are finite: $\lambda_i\not\in\{-\infty,+\infty\}$ for all $i\in[n]$.
\end{corollary}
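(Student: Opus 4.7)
The plan is to leverage the observation already made in the proof of Theorem~\ref{thm:correct}\ref{it:idx_proof}: both boundary policies $\pi^1 = [n]$ and $\pi^{n+1} = \emptyset$ have the property that $\partial \alpha^{\pi}_i / \partial \lambda = -1$ for every state $i$, so that the active advantage functions are affine with slope $-1$. Since the arm is assumed to be unichain, every policy is unichain, in particular $[n]$ and $\emptyset$; thus the bias $\vh^\pi$ and the advantage $\valpha^\pi$ are well defined for these two policies, and the ``decreasing'' property really does apply.

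First, I would rule out $\lambda_i = -\infty$. Run Algorithm~\ref{algo:whittle_informal} starting from $\pi^1 = [n]$: since $\alpha^{\pi^1}_i(\lambda)$ is affine with slope $-1$, it takes the value $0$ at a finite penalty $\mu^1_i \in \real$ for every $i \in [n]$. Consequently, $\mu^1_{\min} := \min_i \mu^1_i$ is finite, the first index $\lambda_{\sigma^1} = \mu^1_{\min}$ is finite, and, because indices are computed in non-decreasing order (with $\mu^1_{\min} \le \mu^2_{\min} \le \dots$), all indices satisfy $\lambda_i \ge \mu^1_{\min} > -\infty$.

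Symmetrically, to rule out $\lambda_i = +\infty$, I would run the same argument ``from the other end''. The key point is that $\alpha^{\emptyset}_i(\lambda)$ is affine with slope $-1$, so it also crosses $0$ at a finite penalty, meaning that for every sufficiently large $\lambda$ the policy $\emptyset$ is the unique Bellman optimal policy. By Lemma~\ref{lem:indexable}(iii), the decreasing sequence of policies ends with $\pi^{n+1} = \emptyset$ at some finite $\mu^n_{\min}$, so the largest Whittle index $\lambda_{\sigma^n} = \mu^n_{\min}$ is finite, and therefore every $\lambda_i \le \mu^n_{\min} < +\infty$.

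Combining both bounds yields $\lambda_i \in \real$ for every $i \in [n]$. The only subtle step is justifying that the ``slope $-1$'' claim applies to $\pi^1$ and $\pi^{n+1}$ in the unichain setting — but this is exactly what the proof of Theorem~\ref{thm:correct} already provides (via equations \eqref{eq:b_pi_one} and \eqref{eq:b_pi_n1}, referenced in that proof), so no new work is required beyond assembling these facts.
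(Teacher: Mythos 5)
Your proof is correct and follows essentially the same route as the paper: the paper's justification is precisely that, since $[n]$ and $\emptyset$ are unichain, the advantage functions $\alpha^{\pi^1}_i$ and $\alpha^{\pi^{n+1}}_i$ have slope $-1$ (equations \eqref{eq:b_pi_one} and \eqref{eq:b_pi_n1}), hence cross zero at finite penalties, bounding all indices away from $-\infty$ and $+\infty$ respectively. No gap to report.
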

This is not necessarily true for multichain arms (see the discussion in Appendix~\ref{apx:multichain}.)
}

\subsection{Naive implementation of Algorithm~\ref{algo:whittle_informal} \texorpdfstring{(in $O(n^4)$)}{}}
\label{ssec:implem}

For a given penalty $\lambda$, \reviewedFirstRev{we consider a policy $\pi$ that is Bellman optimal and unichain. Recall that $g^{*}(\lambda)$ is the maximal gain, and $\vh^{\pi}(\lambda)\in\real^n$ is a solution of \eqref{eq:bias_eval}.
We consider $\vh^{\pi}(\lambda)$ such that $h^\pi_1(\lambda)=0$.
Recall from \eqref{eq:bias_eval} that for all $i\in[n]:$}
\begin{align}
    \label{eq:bellman}
    g^*(\widx) + h^{\pi}_i(\widx) = r_i^{\pi_i} -\widx\pi_i + \sum_{j=1}^nP^{\pi_i}_{ij}h^{\pi}_j(\widx).
\end{align}
The above system is a system of $n+1$ linear equations with $n+1$ variables (the additional equation begins with $h_1^\pi(\lambda)=0$). As $\pi$ is unichain, the maximal gain $g^*(\lambda)$ and $\vh^\pi(\lambda)$ are uniquely determined by the system of linear equations \eqref{eq:bellman}, together with the condition that $h_1^\pi(\lambda)=0$.  Note that in \eqref{eq:bellman} the sum is for $j=1$ to $n$. Since $h_1^\pi(\lambda)=0$, it can be transformed into a sum from $j=2$ to $n$. 

Let us define the vector $\vv^\pi(\lambda):=[ g^*(\lambda)\ h^\pi_2(\lambda)\ \dots\ h^\pi_n(\lambda)]^\top$ which is similar to the vector $\vh^\pi(\lambda)$ in which we replaced $h^\pi_1(\lambda)$ by $g^*(\lambda)$. We can write Equation~\eqref{eq:bellman} under a matrix form as:
\begin{align}
    \mA^{\pi} \vv^\pi(\widx)
    %\vu^\pi(\lambda)
    = \vr^{\pi} - \lambda \vpi, \label{eq:pre_lin_eq}
\end{align}
where $\vr^{\pi}$ is the reward vector under $\pi$: $\vr^{\pi}{:=}[r^{\pi_1}_1\ \dots\ r^{\pi_n}_n]^\top$, $\vpi{:=}[\pi_1\ \dots\ \pi_n]^\top$, and $\mA^{\pi}$ is the following square matrix:
\begin{align}
    {\mA^{\pi} := \left[\begin{array}{cccccc}
        1 & & & &\\
        1 & 1 &  &\\
        1 & & \ddots & \\
        1 & &  &1\\
    \end{array}
\right] {-}
        \left[\begin{array}{ccccc}
                0 & P_{12}^{\pi_1} &\dots & P^{\pi_1}_{1n}\\
                0 & P_{22}^{\pi_2} &\dots & P^{\pi_2}_{2n}\\
                  &\vdots&&\vdots\\
                0 & P_{n2}^{\pi_n} &\dots & P^{\pi_n}_{nn}\\
        \end{array}
        \right]}
        {=} {\left[\begin{array}{ccccc}
                1 & - P_{12}^{\pi_1} & \dots & -P_{1n}^{\pi_1}\\
                1 & 1-P_{22}^{\pi_2} & \dots & -P_{2n}^{\pi_2}\\
                \vdots\\
                1 &  -P_{n2}^{\pi_n} & \dots &1-P_{nn}^{\pi_n}
        \end{array}\right]} \label{eq:def_A}
\end{align}
As we show in Lemma~\ref{lem:invertible}, \reviewedFirstRev{the matrix $\mA^{\pi}$ is invertible if and only if policy $\pi$ is unichain}. In consequence, $\vv^{\pi}$ is an affine function of $\lambda$:
%As we show in Lemma~\ref{lem:invertible}, in unichain MDP, the matrix $\mA^{\pi}$ is invertible for any policy $\pi$. In consequence, $\vv^{\pi}$ is an affine function of $\lambda$:
\begin{align}
    \label{eq:h is linear}
    \vv^{\pi}(\lambda) = (\mA^{\pi})^{-1}(\vr^{\pi} - \lambda \vpi) =(\mA^{\pi})^{-1}\vr^\pi - \lambda (\mA^{\pi})^{-1}\vpi
\end{align}

%The following lemma gives a characterization of the indifference between activating or resting a state under a given penalty. Its proof is in Appendix \ref{app:lem:mu^k_i}.

%\begin{lemma}
%    \label{lem:mu^k_i}
%    Let $\pi$ be a \reviewedFirstRev{unichain} policy, $i\in\pi$ be any active state and $\lambda$ be a penalty. Then the two policies $\pi$ and $\pi\setminus\{i\}$ have the same average reward for the penalty $\lambda$ (\emph{i.e.}, $\reviewedFirstRev{\vv^\pi(\lambda)=\vv^{\pi\setminus\{i\}}(\lambda)}$) if and only if
%    \begin{align}
%        \delta_i - \lambda + \Delta_i \vv^\pi(\lambda) = 0,\label{eq:mu^k_i}
%    \end{align}
%    where $\delta_i=r^1_i - r^0_i$ and $\Delta_i$ is the vector such that $\Delta_{i1}=0$ and $\Delta_{ij}=P^1_{ij}-P^0_{ij}$ for $j\in\{2,\dots, n\}$.
%\end{lemma}
%\reviewedFirstRev{Note that while $\pi$ is unichain, $\vv^\pi(\lambda){=}\vv^{\pi\setminus\{i\}}(\lambda)$ is not sufficient to conclude that the average reward of policy $\pi\setminus\{i\}$ is state independent. Graphically, what happens is that for penalty $\widx$ that verifies \eqref{eq:mu^k_i} we have $g_j^{\pi\setminus\{i\}}(\widx)=g_{j'}^{\pi\setminus\{i\}}(\widx)=g^\pi(\widx)$ for any pair $(j,j')\in[n]^2$ but for other value of penalty, the later might not true. That is, policy $\pi\setminus\{i\}$ might be multichain but the average reward of each state intersects each other for $\widx$ that verifies \eqref{eq:mu^k_i}.} 

For a state $i$, let $\delta_i:=r^1_i - r^0_i$, $\Delta_{i1}:=0$ and $\Delta_{ij}:=P^1_{ij}-P^0_{ij}$ for $j\in\{2,\dots, n\}$. By definition of the advantage function of \eqref{eq:advantage}, we have: 
\begin{align*}
    \valpha^\pi(\lambda)=\vdelta -\lambda\vone +\mDelta\vv^\pi(\lambda).
\end{align*}

% Since $\valpha^{\pi^k}(\lambda)$ is affine in $\lambda$, for each $i\in\pi^k$ we denote $\mu^k_i$ the penalty such that $\alpha^{\pi^k}_i(\mu^k_i)=0$.
% Then, Line~\ref{algo1:next_mu} of Algorithm~\ref{algo:whittle_informal} can be done by ${\mu^k_{\min}=\min_{i\in\pi^k: \mu^k_i\ge\mu^{k-1}_{\min}}\mu^k_i}$.
For each active state $i$, we want to find the smallest penalty $\mu^k_i\ge\mu^{k-1}_{\min}$ such that $\alpha^{\pi^k}_i(\mu^k_i)=0$.
Suppose that $\pi^{k-1}$ and $\pi^k$ are unichain and $\pi^{k-1}$ is Bellman optimal for $\mu^{k-1}_{\min}$.
By Lemma~\ref{lem:unicity_binary}, $\valpha^{\pi^k}(\mu^{k-1}_{\min})= \valpha^{\pi^{k-1}}(\mu^{k-1}_{\min})$.
%which in turn implies ${\vv^{\pi^k}(\mu^{k-1}_{\min})=\vv^{\pi^{k-1}}(\mu^{k-1}_{\min})}$.
Let $\vd^{\pi^k}:=-(\mA^{\pi^k})^{-1}{\vpi^k}$.
%By \eqref{eq:h is linear}, $\vv^{\pi^k}(\lambda)$ is a linear function of $\lambda$ whose derivative is $\vd^{\pi^k}$.
By \eqref{eq:h is linear}, $\valpha^{\pi^k}(\lambda)$ is a linear function of $\lambda$ whose derivative is $-(\vone-\mDelta\vd^{\pi^k})$.
%In particular, we have $\vv^{\pi^k}(\lambda) = \vv^{\pi^k}(\mu^{k-1}_{\min}) + (\lambda-\mu^{k-1}_{\min})\vd^{\pi^k}$.
%Hence, with some development, we get $\valpha^{\pi^k}(\lambda)=\valpha^{\pi^{k-1}}(\mu^{k-1}_{\min}) -(\lambda-\mu^{k-1}_{\min})(\vone -\mDelta\vd^{\pi^k})$.
In particular, $\valpha^{\pi^k}(\lambda)=\valpha^{\pi^{k-1}}(\mu^{k-1}_{\min}) -(\lambda-\mu^{k-1}_{\min})(\vone -\mDelta\vd^{\pi^k})$.
%Thus, $\alpha^{\pi^k}_i(\lambda)=0$ is satisfied if and only if $\lambda=\mu^k_i$ where
%\begin{align}
%    \label{eq:mu_i_k_from_d}
%    \mu^k_i := \mu^{k-1}_{\min} +\frac{ \alpha^{\pi^{k-1}}_i(\mu^{k-1}_{\min})}{1 -\sum_{j=2}^n\Delta_{ij} d_j^{\pi^k}}.
%\end{align}
Thus, $\alpha^{\pi^k}_i(\lambda)=0$ if and only if
\begin{align}
    \label{eq:alpha_null}
    \alpha^{\pi^{k-1}}_i(\mu^{k-1}_{\min}) = (\lambda -\mu^{k-1}_{\min})(1 -\sum_{j=2}^n\Delta_{ij} d_j^{\pi^k}).
\end{align}
Recall that $\alpha^{\pi^{k-1}}_i(\mu^{k-1}_{\min})$ is non-negative for active state $i\in\pi^k$. The value $\mu^k_i$ is the smallest $\lambda\ge\mu^{k-1}_{\min}$ that satisfies Equation~\eqref{eq:alpha_null}. There are three cases: 
%is has different solutions depending on the value of $\alpha^{\pi^{k-1}}_i(\mu^{k-1}_{\min})$:
\begin{enumerate}
    \item if $\alpha^{\pi^{k-1}}_i(\mu^{k-1}_{\min})=0$, then $\mu^k_i:=\mu^{k-1}_{\min}$;
    %(when $1 -\sum_{j=2}^n\Delta_{ij} d_j^{\pi^k}=0$, any value of $\lambda$ satisfies $\alpha^{\pi^k}_i(\lambda)=0$. In such case, we still choose $\mu^k_i:=\mu^{k-1}_{\min}$ due to Line~\ref{algo1:next_mu});
    \item if $\alpha^{\pi^{k-1}}_i(\mu^{k-1}_{\min})>0$ and
        \begin{enumerate}
            %\item if $1 -\sum_{j=2}^n\Delta_{ij} d_j^{\pi^k}=0$, then we let $\mu^k_i:=+\infty$;
            %\item if $1 -\sum_{j=2}^n\Delta_{ij} d_j^{\pi^k}<0$, then the solution of Equation~\eqref{eq:alpha_null} is smaller than $\mu^{k-1}_{\min}$;
            \item if $1 -\sum_{j=2}^n\Delta_{ij} d_j^{\pi^k}>0$, then
            \begin{align}
                \label{eq:mu_i_k_from_d}
                \mu^k_i:=\mu^{k-1}_{\min} +\frac{ \alpha^{\pi^{k-1}}_i(\mu^{k-1}_{\min})}{1 -\sum_{j=2}^n\Delta_{ij} d_j^{\pi^k}};
            \end{align}
            \item if $1 -\sum_{j=2}^n\Delta_{ij} d_j^{\pi^k}\le0$, then $\mu^k_i:=+\infty$.
        \end{enumerate}
\end{enumerate}

%Using \reviewedFirstRev{Lemma~\ref{lem:mu^k_i}}, computing a value $\mu^k_i$ at Line~\ref{algo1:next_mu} of Algorithm~\ref{algo:whittle_informal} can be constructed by finding a penalty that satisfies \eqref{eq:mu^k_i} for the policy $\pi^k$, which is a (one-dimensional) linear equation. Indeed, let $\vd^{\pi^k}=-(\mA^{\pi^k})^{-1}\vpi^k$. By \eqref{eq:h is linear}, $\vv^{\pi^k}(\lambda)$ is a linear function of $\lambda$ whose derivative is $\vd^{\pi^k}$. In particular, one has: $\vv^{\pi^k}(\lambda) = \vv^{\pi^k}(\mu^{k-1}_{\min}) + (\lambda-\mu^{k-1}_{\min})\vd^{\pi^k}$. Using \reviewedFirstRev{Lemma~\ref{lem:mu^k_i}} for $\mu^{k-1}_{min}$, we have ${\vv^{\pi^k}(\mu^{k-1}_{\min})=\vv^{\pi^{k-1}}(\mu^{k-1}_{\min})}$. Hence, Equation~\eqref{eq:mu^k_i} can be written as $\delta_i - \lambda + \Delta_i (\vv^{\pi^{k-1}}(\mu^{k-1}_{\min}) + (\lambda-\mu^{k-1}_{\min})\vd^{\pi^k}_i)=0$ which is satisfied if and only if $\lambda=\mu^k_i$, where 
%\begin{align}
%    \label{eq:mu_i_k_from_d}
%    \mu^k_i := \frac{\delta_i + \Delta_i \vv^{\pi^{k-1}}(\mu^{k-1}_{\min}) - \mu^{k-1}_{\min} \Delta_i\vd^{\pi^k}}{1-\Delta_i \vd^{\pi^k}}.
%\end{align}
This shows that, for a given $k$, computing \reviewedFirstRev{$\mu^k_{\min}$} of Line~\ref{algo1:next_mu} can be done in $O(n^3)$: A first part in  $O(n^3)$ to compute the inverse of matrix $\mA^{\pi}$ and to compute $\vd^\pi$, plus some smaller order terms to compute the solutions of \eqref{eq:alpha_null}.
Similarly, the test in Line~\ref{algo1:test1} of Algorithm~\ref{algo:whittle_informal} can also be implemented in $O(n^3)$
\reviewedFirstRev{
    by using $\valpha^{\pi^k}(\mu^k_{\min}) =\valpha^{\pi^{k-1}}(\mu^{k-1}_{\min}) -(\mu^k_{\min} -\mu^{k-1}_{\min})(\vone -\mDelta\vd^{\pi^k})$ with the convention that when $\mu^k_{\min} = +\infty$, $+\infty\times0=0$ and $+\infty\times x=\mathrm{sign}(x)\infty$ for any $x\ne0$. 
}
This leads to an overall complexity of $O(n^4)$ for Algorithm~\ref{algo:whittle_informal} that contains $n$ loops each having a $O(n^3)$ complexity.
\reviewedFirstRev{
    If at some iteration $k$ the matrix $\mA^{\pi^k}$ is not invertible, then Lemma~\ref{lem:invertible} implies that $\pi^k$ is multichain.
    In consequence, the algorithm outputs multichain and stops.
    We integrate this in the newer version of our algorithm below.
}

%\begin{lemma}
%    \label{lem:test_optimality}
%    Let $\pi$ be a policy, $\lambda$ be a penalty and $\vv^\pi(\lambda)$ as in \eqref{eq:h is linear}. Then the policy $\pi$ is optimal for the penalty $\lambda$ if and only if for all state $i$: 
%    \begin{align}
%        \label{eq:policy_improvement}
%        \delta_i - \lambda + \Delta_i \vv^\pi(\lambda) \le0 \text{ if $i\not\in\pi$}\quad \text{ and }\quad
%        \delta_i - \lambda + \Delta_i \vv^\pi(\lambda) \ge0 \text{ if $i\in\pi$}
%    \end{align}
%    where $\delta_i$ and $\Delta_i$ are as in Lemma~\ref{lem:mu^k_i}. 
%\end{lemma}
%\begin{proof}
%    The proof is a direct consequence of {\it Policy Improvement} as described in Theorem~8.4.4 of \citep{putermanMarkovDecisionProcesses1994}: a policy that satisfies \eqref{eq:policy_improvement} cannot be improved and is therefore optimal.
%\end{proof}

%%%%%%%%%%%%%%%%%%%%%%%%%%%%%%%%%%%%%%%%%%%%%%%%%%%%%%%%%%%%%%%%%
%%%%%%%%%%%%%%%%%%%%%%%%%%%%%%%%%%%%%%%%%%%%%%%%%%%%%%%%%%%%%%%%%
%%%%%%%%%%%%%%%%%%%%%%%%%%%%%%%%%%%%%%%%%%%%%%%%%%%%%%%%%%%%%%%%%
\section{The \texorpdfstring{$(2/3)n^3 + o(n^3)$}{2n3/3+o(n3)} algorithm}
\label{sec:formal_widx_algo}
%%%%%%%%%%%%%%%%%%%%%%%%%%%%%%%%%%%%%%%%%%%%%%%%%%%%%%%%%%%%%%%%%
%%%%%%%%%%%%%%%%%%%%%%%%%%%%%%%%%%%%%%%%%%%%%%%%%%%%%%%%%%%%%%%%%
%%%%%%%%%%%%%%%%%%%%%%%%%%%%%%%%%%%%%%%%%%%%%%%%%%%%%%%%%%%%%%%%%

This section describes a way to implement Algorithm~\ref{algo:whittle_informal}  efficiently using  $O(n^3)$ operations.
\reviewedFirstRev{
    The main idea is to use the Sherman-Morrison formula to compute in $O(n^2)$ the active advantage vector $\valpha^{\pi^k}(\lambda)$ associated to $\pi^k$ from the one associated to $\pi^{k-1}$.
}
This leads to a $O(n^3)$ algorithm. Once this main idea is in place, we show how to avoid unnecessary computations to obtain an algorithm that performs $(2/3)n^3+o(n^3)$ arithmetic operations.

\subsection{Additional notations}

In order to obtain a more efficient and compact algorithm, for an iteration $k$ and a state $i$, we define \reviewedFirstRev{
    $y^k_i:=\sum_{j=2}^n\Delta_{ij}d_j^{\pi^k}$ and $z^k_i:=\alpha^{\pi^k}_i(\mu^k_{\min})$, where $d_j^{\pi^k}$, $\Delta_{ij}$ and $\alpha^{\pi^k}_i$ are as in \eqref{eq:mu_i_k_from_d}.
}
Equation~\eqref{eq:mu_i_k_from_d} can be rewritten as
\begin{align}
    \mu_i^k &= \mu^{k-1}_{\min} +\displaystyle\frac{z_i^{k-1}}{1-y_i^k}.
    \label{eq:mu^k_i_from_y}
\end{align}
The above equation can be used to compute $\mu_i^k$ and $\mu^k_{\min}$ easily from $y^k_i$ and $z^{k-1}_i$. 
Indeed, from the previous section, we have $\alpha^{\pi^k}_i(\mu^k_{\min}) = \alpha^{\pi^{k-1}}_i(\mu^{k-1}_{\min}) -(\mu^k_{\min} -\mu^{k-1}_{\min})(1 -\sum_{j=2}^n\Delta_{ij}d_j^{\pi^k})$ which translates into
\begin{align}
    z^k_i &= z^{k-1}_i -(\mu^k_{\min} -\mu^{k-1}_{\min})(1 -y^k_i). \label{eq:z^k-i}
\end{align}
%For a given policy $\pi$, $\vv^\pi$ is a linear function of $\lambda$ whose derivative is $\vd^\pi$. Hence, the value $z^k_i$ can be computed from $y^k_i$ and $z^{k-1}_i$ as:
%\begin{align}
%    z^k_i &= \Delta_{i}\vv^{\pi^k}(\mu^k_{\min}) \nonumber\\
%    &= \Delta_{i}\vv^{\pi^k}(\mu^{k-1}_{\min}) + \Delta_{i}\vdk(\mu^k_{\min} - \mu^{k-1}_{\min})\nonumber\\
%    &= \Delta_{i}\vv^{\pi^{k-1}}(\mu^{k-1}_{\min}) + (\mu^k_{\min}-\mu^{k-1}_{\min})y_i^k\nonumber\\
%    &= z_i^{k-1} +(\mu^k_{\min}-\mu^{k-1}_{\min})y_i^k\label{eq:z^k-i}.
%\end{align}
%where the third equality is true by Lemma~\ref{lem:mu^k_i}.
This shows that the critical values to compute are the variables $y^k_i$. In the remainder of this section, we show that the quantity $y^k_i$ can be computed efficiently by a recursive formula.

%$y_i^k =y_i^{k-1} +\displaystyle(1 -y_{\sigma^{k-1}}^{k-1})\frac{X_{i\sigma^{k-1}}^{k-1}}{1+X_{\sigma^{k-1}\sigma^{k-1}}^{k-1}}$

%%%%%%%%%%%%%%%%%%%%%%%%%%%%%%%%%%%%%%%%%%%%%%%%%%%%%%%%%%%%%%%%%%%%%%%%%%%%%%%%%%%%%%%%%%%%%%%%%%%
%%%%%%%%%%%%%%%%%%%%%%%%%%%%%%%%%%%%%%%%%%%%%%%%%%%%%%%%%%%%%%%%%%%%%%%%%%%%%%%%%%%%%%%%%%%%%%%%%%%
\subsection{Application of the Sherman-Morrison formula}
\label{ssec:sm_form}

To compute \reviewedFirstRev{$y^{k+1}_i := \sum_{j=2}^n\Delta_{ij}d_j^{\pi^{k+1}}$}, we need to compute the quantities $\vd^{\pi^{k+1}} := -(\mA^{\pi^{k+1}})^{-1}\vpi^{k+1}$.
This requires the inverse of ${\mA^{\pi^{k+1}}}$.
By definition of $\pi^k$, two policies $\pi^k$ and $\pi^{k+1}$ differ by exactly one state: $\vpi^{k+1}=\vpi^k - \ve_{\sigma^k}$ where $\ve_j$ denotes the column vector with a $1$ in $j$th coordinate and $0$'s elsewhere.
Also by definition of $\mA^\pi$ in \eqref{eq:def_A}, the two matrices $\mA^{\pi^k}$ and $\mA^{\pi^{k+1}}$ differ only at the row $\sigma^k$:
\begin{align}
    \mA^{\pi^{k+1}} = \mA^{\pi^k} + \ve_{\sigma^k} \mDelta_{\sigma^k}
    \label{eq:Ak_from_Ak-1}
\end{align}
where $\mDelta_{\sigma^k}$ is a row vector defined as in the previous section.

\reviewedFirstRev{
    One can efficiently compute the inverse of matrix $\mA^{\pi^{k+1}}$ from the one of $\mA^{\pi^k}$ by using the Sherman-Morrison formula, which says that if $\mA\in\real^{n\times n}$ is an invertible square matrix and $\vp, \vq\in\real^n$ are two column vectors, then the matrix $\mA+\vp\vq^\top$ is invertible if and only if $1+\vq^\top \mA^{-1}\vp\neq0$ and if $\mA+\vp\vq^\top$ is invertible, then:
}
\begin{align*}
    (\mA + \vp\vq^\top)^{-1}
    &=\mA^{-1} - \frac{\mA^{-1}\vp\vq^\top \mA^{-1}}{1+\vq^\top \mA^{-1}\vp}.
\end{align*}
Let $X^k_{ij} := \mDelta_i (\mA^{\pi^k})^{-1}\ve_j$. Following \eqref{eq:Ak_from_Ak-1}, we can apply the Sherman-Morrison formula with matrix $\mA^{\pi^k}$, and vectors $\vp=\ve_{\sigma^k}$ and $\vq^\top=\mDelta_{\sigma^k}$. After some simplification, we get:
\begin{align}
    X^{k+1}_{ij} &:= \mDelta_i (\mA^{\pi^{k+1}})^{-1}\ve_j =\mDelta_i (\mA^{\pi^k}+\ve_{\sigma^k}\mDelta_{\sigma^k})^{-1}\ve_j\nonumber\\
                 &= X^{k}_{ij} - \frac{X^{k}_{i\sigma^k} }{1+X^{k}_{\sigma^k \sigma^k}}X^{k}_{\sigma^k j} \label{eq:X^k+1-i} \\
    \text{In particular,}\ X^{k+1}_{i\sigma^k} &=\frac{X^{k}_{i\sigma^k}}{1+X^{k}_{\sigma^k \sigma^k}}.\nonumber
\end{align}
\reviewedFirstRev{
    Before computing $X^{k+1}_{ij}$, we need to verify that $\pi^{k+1}$ is unichain.
    With the help of Lemma~\ref{lem:invertible} and the Sherman-Morrison formula, this can be done easily: $\pi^{k+1}$ is unichain if and only if $1+X^k_{\sigma^k\sigma^k}\neq0$. 
}

For $y^{k+1}_i := \mDelta_{i}\vd^{\pi^{k+1}} = -\mDelta_i(\mA^{\pi^{k+1}})^{-1} \vpi^{k+1}$, we use $\vpi^{k+1}=\vpi^k - \ve_{\sigma^k}$ and apply the Sherman-Morrison formula to get:
\begin{align}
    y^{k+1}_i &= -\mDelta_i(\mA^{\pi^{k}}+\ve_{\sigma^k}\mDelta_{\sigma^k})^{-1} (\vpi^k-\ve_{\sigma^k})\nonumber\\
              &= -\mDelta_i (\mA^{\pi^{k}})^{-1}(\vpi^k-\ve_{\sigma^k})  + \frac{\mDelta_i (\mA^{\pi^{k}})^{-1} \ve_{\sigma^k}\mDelta_{\sigma^k}(\mA^{\pi^{k}})^{-1}}{1 + \mDelta_{\sigma^k} (\mA^{\pi^{k}})^{-1} \ve_{\sigma^k}} (\vpi_k-\ve_{\sigma^k})\nonumber\\
    &= y^k_i + X^k_{i\sigma^k} + \frac{X^k_{i\sigma^k}(-y^k_{\sigma^k} - X^{k}_{\sigma^k\sigma^k})}{1+X^{k}_{\sigma^k\sigma^k}}\nonumber\\
    &= y^k_i + \frac{X^k_{i\sigma^k}(1-y^k_{\sigma^k})}{1+X^{k}_{\sigma^k\sigma^k}} =y^k_i +(1-y^k_{\sigma^k})X^{k+1}_{i\sigma^k}\label{eq:y^k+1-i}
\end{align}

The above formula indicate how to compute $\vy^{k+1}$ from $\vy^k$. To complete this analysis, let us show that $\vy^1=\mathbf{0}$. For a given policy $\pi$, the vector $\vd^{\pi}$ satisfies the same equation as Equation~\eqref{eq:pre_lin_eq} but replacing $r^{\pi_i}_i-\lambda\pi_i$ by $-\pi_i$. This implies that for $\vpi=\vpi^1=[1\ \dots\ 1]^\top$, one has $\vd^\pi=[-1\ 0\ \dots\ 0]^\top$ as $d^\pi_1$ is the long-run average reward of \reviewedFirstRev{a Markov reward process} whose reward is negative one in all states and $d^\pi_2,\dots, d^\pi_n$ is the bias of this process.  This shows that for all $i$, one has
\reviewedFirstRev{
    $y^{1}_i:=\sum_{j=2}^n\Delta_{ij}d_j^{\pi^{1}}=0$.
    Moreover, by \eqref{eq:h is linear}, one has
    \begin{align}
        \valpha^{\pi^1}(\lambda)
        &=\vdelta -\lambda\vone +\mDelta(\mA^{\pi^1})^{-1}\vr^{\pi^1} -\lambda\underbrace{\mDelta(\mA^{\pi^1})^{-1}\vpi^1}_{=:\vy^1} \nonumber\\
        &=\vdelta -\lambda\vone +\mX^1\vr^{\pi^1}. \label{eq:b_pi_one}
    \end{align}
    Finally, for $\vpi^{n+1}=[0\ \dots\ 0]^\top$, one has
    \begin{equation}
        \label{eq:b_pi_n1}
        \valpha^{\pi^{n+1}}(\lambda)
        =\vdelta -\lambda\vone +\mDelta(\mA^{\pi^{n+1}})^{-1}\vr^{\pi^{n+1}}.
    \end{equation}

}

\subsection{Detailed algorithm}
\label{ssec:two_third_algo}

Equation~\eqref{eq:mu^k_i_from_y} shows how to compute $\mu^k_i$ from the values of $y^{k}_i$ and $z^{k-1}_i$ while \eqref{eq:y^k+1-i}, \eqref{eq:X^k+1-i} and \eqref{eq:z^k-i} show how to compute the values of $\vy$, $\vz$ and $\mX$ recursively in $k$. In order to compute $\mu^k_{\min}$ and $\sigma^k$, one only needs to compute the values $\mu^k_i$ for $i\in\pi^{k}$.  Once \reviewedFirstRev{
    %$\mu^k_{\min}=\min_{i\in\pi^k: \mu^k_i\ge\mu^{k-1}_{\min}}\mu^k_i$ is determined, if $\mu^{k-1}_{\min} <\mu^k_{\min}$, then Line~\ref{algo1:test1} of Algorithm~\ref{algo:whittle_informal} can be performed, based on \eqref{eq:z^k-i}, by checking if $z^{k}_i \ge 0$ for some $i\in[n]\setminus\pi^{k}$.
    $\mu^k_{\min}=\min_{i\in\pi^k}\mu^k_i$ is determined, if $\mu^{k-1}_{\min} <\mu^k_{\min}$, then Line~\ref{algo1:test1} of Algorithm~\ref{algo:whittle_informal} can be performed, based on \eqref{eq:z^k-i}, by checking if $z^{k}_i \ge 0$ for some $i\in[n]\setminus\pi^{k}$.
    %as in Line~\ref{algo1:sigma^k}, we need to verify the optimality of $\pi^k$ for penalty $\mu^k_{\min}$. That is, we check whether for any passive state $i\not\in\pi^k$, the ``rest'' action is worst than the ``activate'' action in state $i$ for penalty $\mu^k_{\min}$. If it is the case, then the arm is non-indexable. By Lemma~\ref{lem:test_optimality}, it is sufficient to check that for $i\in[n]\setminus\pi^{k}$:
}

\begin{algorithm}[hbtp]
    \begin{algorithmic}[1]
    %\DontPrintSemicolon
        \State Set $\pi^1:=[n]$, $k_0=1$, $\Delta_{i1}:=0, \Delta_{ij}:=P^1_{ij}-P^0_{ij}, \forall i\in[n], j\in\{2,\dots, n\}$, $\vy^1=\pmb{0}$ and $\mA^{\pi^1}$ is defined by \eqref{eq:def_A}. \label{algo2:init_beginning}
        \If{$\pi^1$ is multichain}
            \State \Return{the arm is multichain}
        \EndIf
        \State Set $\mX^1:=\mDelta(\mA^{\pi^1})^{-1}$  \label{algo2:line_inverse}
    \State Set $\vmu^1:=\vr^1-\vr^0 +\mX^1\vr^{1}$
    \State Let $\sigma^1:=\argmin_{i\in\pi^1}\mu_i^1$, $\lambda_{\sigma^1} = \mu^1_{\min} := \mu^1_{\sigma^1}$, and $\pi^{2} := \pi^{1}\setminus\{\sigma^{1}\}$ \label{line:def_mu1}
    \State Set $\vz^1=\vmu^1 -\mu^1_{\min}\vone$ \label{algo2:init_end}
    \For{$k=2$ {\bfseries to} $n$}
        \State Update\_X($k-1$) \label{line:update_X} \Comment{Here we call Subroutine~\ref{algo:update_X} or Subroutine~\ref{algo:FMM}}
        \State Set $\vy^{k}=\vy^{k-1} +(1 -y^{k-1}_{\sigma^{k-1}}) \mX^{k}_{:\sigma^{k-1}}$ \label{algo2:y^k}
        \For{$i\in\pi^k$}
        \State Set $\displaystyle\mu_i^k {:=}\begin{cases}
                \mu^{k-1}_{\min}, & \text{if } z^{k-1}_i=0 \\
                \mu^{k-1}_{\min} {+}\displaystyle\frac{z^{k-1}_i}{1{-}y^k_i}, & \text{if } z^{k-1}_i{>}0 \text{ and } 1{-}y^k_i{>}0 \\
                +\infty, & \text{otherwise}
\end{cases}$ \label{algo2:muki}
        \EndFor
        \State Let $\sigma^k:=\argmin_{i\in\pi^k}\mu_i^k$ and $\mu^k_{\min} := \mu^k_{\sigma^k}$ \label{line:def_muk}
        \State Set $\vz^k = \vz^{k-1} -(\mu^k_{\min}-\mu^{k-1}_{\min}) (\vone-\vy^k)$
        \If{$\mu^{k-1}_{\min}<\mu^k_{\min}$ and $z^{k}_i \ge 0$ for some $i\in[n]\setminus\pi^k$}
            \State \Return{the arm is not indexable}
        \EndIf
        \If{$\mu^k_{\min}=+\infty$}
            \State Set $\lambda_i:=+\infty$ for all $i\in\pi^k$
            \State \Return{the arm is indexable and the indices are $\{\lambda_i\}_{i\in[n]}$.}
        \EndIf
        \State Set $\lambda_{\sigma^k} = \mu^k_{\min}$ and $\pi^{k+1} := \pi^{k}\setminus\{\sigma^{k}\}$ % should update pi first because X uses pi
        \label{algo2:end_for_loop}
    \EndFor
    \If{$\pi^{n+1}$ is multichain}
        \State \Return{the arm is multichain}
    \EndIf
    \State \Return{the arm is indexable and the indices are $\{\lambda_{i}\}_{i\in[n]}$}.
    \end{algorithmic}
    \caption{Given a $n$-state arm, test indexability and compute Whittle indices (if indexable).}
    \label{algo:whittle_n3}
\end{algorithm}

%\SetAlgorithmName{Subroutine}{subroutine}{} % redefines algorithm names for the rest of the document

%\begin{algorithm}[ht]
%    \floatname{algorithm}{Subroutine}
%    \begin{algorithmic}[1]
%    \For{$\ell=1$ to $k$}
%    \If{\reviewedFirstRev{$1+X^{\ell}_{\sigma^{\ell}\sigma^{\ell}}=0$}}
%    \State \Return{\reviewedFirstRev{the arm is multichain}}
%    \EndIf
%        \For{$i\in[n]$} \Comment{or $i\in\pi^{\ell+1}$ if we do not test indexability.}
%            \State $X_{i\sigma^{k}}^{\ell+1} = X_{i\sigma^{k}}^{\ell} -X^{\ell}_{i\sigma^{\ell}}\displaystyle\frac{X^{\ell}_{\sigma^{\ell}\sigma^{k}}}{1+X^{\ell}_{\sigma^{\ell}\sigma^{\ell}}}$
%        \EndFor
%    \EndFor
%    \end{algorithmic}
%    \caption{Update\_X(k)}
%    \label{algo:update_X}
%\end{algorithm}
\begin{algorithm}[ht]
    \floatname{algorithm}{Subroutine}
    \begin{algorithmic}[1]
    \For{$\ell=1$ to $k-1$}
        \For{$i\in[n]$} \Comment{or $i\in\pi^{\ell+1}$ if we do not test indexability.}
            \State $X_{i\sigma^{k}}^{\ell+1} = X_{i\sigma^{k}}^{\ell} -X^{\ell+1}_{i\sigma^{\ell}}X^{\ell}_{\sigma^{\ell}\sigma^{k}}$
        \EndFor
    \EndFor
    \If{\reviewedFirstRev{$1+X^{k}_{\sigma^{k}\sigma^{k}}=0$}}
        \State \Return{\reviewedFirstRev{the arm is multichain}}
    \EndIf
    \For{$i\in[n]$} \Comment{or $i\in\pi^{k+1}$ if we do not test indexability.}
    \State $X_{i\sigma^{k}}^{k+1} = \displaystyle\frac{X^{k}_{i\sigma^{k}}}{1+X^{k}_{\sigma^{k}\sigma^{k}}}$
    \EndFor
    \end{algorithmic}
    \caption{Update\_X(k)}
    \label{algo:update_X}
\end{algorithm}

This leads to Algorithm~\ref{algo:whittle_n3} that can be decomposed as follows:
\begin{enumerate}
    \item In Lines~\ref{algo2:init_beginning} to \ref{algo2:init_end}, we initialize the various variables. The main complexity of this part is to compute the matrix $\mX^1$, which is equivalent to solving the linear system $\mX\mA^{\pi^1}=\mDelta$. It can be done by inverting the matrix $\mA^{\pi^1}$ and multiplying this by the matrix $\mDelta$. This can be done in a subcubic complexity by using for instance Strassen's algorithm \citep{strassen1969gaussian}.
    \item We then enter the main loop:
        \begin{itemize}
            \item We update the vectors $\vmu$, $\vz$ by using Equations \eqref{eq:mu^k_i_from_y} and \eqref{eq:z^k-i}, and test indexability. This costs $O(n)$ operations per iteration, thus $O(n^2)$ in total.
            \item We update the vector $\mX$ according to \eqref{eq:X^k+1-i}. The ``naive'' way to do so is to use Subroutine~\ref{algo:update_X}. At iteration $k$ this costs $2kn$ arithmetic operations if we test indexability, and $2\sum_{l=1}^{k}(n-l)$ if we do not test indexability. The total complexity of computing $\mX$ is $n^3 +O(n^2)$ arithmetic operations if we test indexability and $(2/3)n^3 +O(n^2)$ if we do not.
              A detailed study of the arithmetic complexity is provided in Appendix~\ref{apx:implementation}, where we also provide details on how to efficiently implement the algorithm, including how to optimize the cost of memory access.
            \item In Line \ref{algo2:y^k}, we update the vector $\vy$ by using Equation~\eqref{eq:y^k+1-i}, which costs $O(n)$ per iteration. 
        \end{itemize}
    \item Testing if $\pi^{n+1}$ is unichain can be done in $O(n^2)$ by Tarjan's strongly connected component algorithm.
\end{enumerate}
Hence, the total complexity of this algorithm is $n^3+o(n^3)$ if we test indexability and $(2/3)n^3+o(n^3)$ if we do not test indexability. Without testing the indexability, our algorithm has the same main complexity as \citep{nino2020fast}. However, the algorithm of \citep{nino2020fast} computes Whittle index only for an arm that is PCL-indexable. Hence we can claim that our algorithm is the first algorithm that computes Whittle index with cubic complexity for general indexable restless bandits. It is also the first algorithm \reviewedFirstRev{with cubic complexity} that tests non-indexability of restless bandits \reviewed{composed of unichain arms}.
\reviewedFirstRev{Note that the ties are broken arbitrarily for Lines~\ref{line:def_mu1} and \ref{line:def_muk} of Algorithm~\ref{algo:whittle_n3}. %Moreover, as mentioned above, the weakly communicating assumption can be verified in $O(n^2)$ by using Tarjan's strongly connected component algorithm for transition matrix $(\mP^0+\mP^1)/2$.
}

\textbf{Remark:} Equivalently, instead of calling Subroutine~\ref{algo:update_X} at Line~\ref{line:update_X}, one could do the following update for all $j\in\pi^{k+1}$ and $i\in[n]$ (or $i\in\pi^{k+1}$ if we do not test indexability):
%\begin{align}
%    X_{ij}^{k+1} = X_{ij}^{k} -X^{k}_{i\sigma^{k}}\displaystyle\frac{X^{k}_{\sigma^{k}j}}{1+X^{k}_{\sigma^{k}\sigma^{k}}}. \label{eq:update_X_naif}
%\end{align}
\begin{align}
    X_{ij}^{k+1} = X_{ij}^{k} -\displaystyle\frac{X^{k}_{i\sigma^{k}}}{1+X^{k}_{\sigma^{k}\sigma^{k}}}X^{k}_{\sigma^{k}j}. \label{eq:update_X_naif}
\end{align}
This iterative update is very close to the one used in \citep{akbarzadeh2020conditions,nino2020fast} for discounted restless bandit. This results in an algorithm that has the same total complexity as Subroutine~\ref{algo:update_X} (of $n^3+O(n^2)$ or $(2/3)n^3+O(n^2)$ with or without the indexability test) because both algorithms will have computed the same values of $X^{k}_{ij}$. The reason to use Subroutine~\ref{algo:update_X} is that, as we will see in the next section, not all values of $X^{k}_{ij}$ are needed at iteration $k$: in particular, for $\ell<k$, the computation of the value $X^\ell_{i\sigma^{k-1}}$ has no interest per say and is only useful because it allows to recursively compute $X^{k}_{i\sigma^{k-1}}$. In the section below, we show how to reduce the cost by avoiding the computation of $X^\ell_{i\sigma^{k-1}}$ when $\ell$ is much smaller than $k$. We comment more on the differences with \citep{akbarzadeh2020conditions,nino2020fast} in Appendix~\ref{apx:comparison} and in particular we explain why our approach can be tuned into a subcubic algorithm while \eqref{eq:update_X_naif} cannot.

%%%%%%%%%%%%%%%%%%%%%%%%%%%%%%%%%%%%%%%%%%%%%%%%%%%%%%%%%%%%%%%%%%%
%%%%%%%%%%%%%%%%%%%%%%%%%%%%%%%%%%%%%%%%%%%%%%%%%%%%%%%%%%%%%%%%%%%
%%%%%%%%%%%%%%%%%%%%%%%%%%%%%%%%%%%%%%%%%%%%%%%%%%%%%%%%%%%%%%%%%%%
\section{The subcubic algorithm}
\label{sec:subcubic algorithm}
%%%%%%%%%%%%%%%%%%%%%%%%%%%%%%%%%%%%%%%%%%%%%%%%%%%%%%%%%%%%%%%%%%%
%%%%%%%%%%%%%%%%%%%%%%%%%%%%%%%%%%%%%%%%%%%%%%%%%%%%%%%%%%%%%%%%%%%
%%%%%%%%%%%%%%%%%%%%%%%%%%%%%%%%%%%%%%%%%%%%%%%%%%%%%%%%%%%%%%%%%%%

%%%%%%%%%%%%%%%%%%%%%%%%%%%%%%%%%%%%%%%%%%%%%%%%%%%%%%%%%%%%%%%%%%%
%%%%%%%%%%%%%%%%%%%%%%%%%%%%%%%%%%%%%%%%%%%%%%%%%%%%%%%%%%%%%%%%%%%
\subsection{Main idea: recomputing \texorpdfstring{$\mX^{k}$ from $\mA^{\pi^k}$}{Xk from Apik} periodically}

The main computational burden of Algorithm~\ref{algo:whittle_n3} is concentrated on two lines: on Line~\ref{algo2:line_inverse} where we compute $\mX^1$ by solving a linear system, and on Line~\ref{line:update_X} where we compute the column vector $\mX^{k}_{:\sigma^{k-1}}$ from $\mX^1_{: \sigma^{k-1}}$. The remainder of the code runs in $O(n^2)$ operations and is therefore negligible for large matrices. In fact, the computation of $\mX^1$ is done by solving a linear system, which can be computed by using a subcubic algorithm (like Strassen~\citep{strassen1969gaussian}). In this section, we show how to optimize our algorithm by reducing the complexity of the update\_X() function, at the price of recomputing the full matrix $\mX^{k_0}$ from $\mA^{\pi^{k_0}}$ periodically.

At iteration $k$ of Algorithm~\ref{algo:whittle_n3}, the quantities $X^{k}_{i\sigma^{k-1}}$ are used at Line~\ref{algo2:y^k} to obtain the values $y^{k}_i$. The matrix $\mX^k$ is defined as $\mX^k:=\mDelta(\mA^{\pi^k})^{-1}$. It also satisfies Equation~\eqref{eq:X^k+1-i}, that is, for an iteration $\ell$ and states $i$ and $j$, we have: 
%\begin{align}
%    \label{eq:X^ell+1-i}
%    X^{\ell+1}_{ij} &= X^{\ell}_{ij} - \frac{X^{\ell}_{i\sigma^\ell} }{1+X^{\ell}_{{\sigma^\ell\sigma^\ell}}}X^{\ell}_{{\sigma^\ell} j}.
%\end{align}
\reviewedFirstRev{
    \begin{align}
        \label{eq:X^ell+1-i}
        X^{\ell+1}_{ij} &= X^{\ell}_{ij} - X^{\ell+1}_{i\sigma^\ell} X^{\ell}_{{\sigma^\ell} j}.
    \end{align}
}

The way Subroutine~\ref{algo:update_X} is implemented is to initialize $\mX^1:=\mDelta(\mA^{\pi^1})^{-1}$ and then use \eqref{eq:X^ell+1-i} recursively to compute the column vector $\mX^{k}_{: \sigma^{k-1}}$ from $\mX^1_{: \sigma^{k-1}}$ at each iteration.

Here, we propose an alternative formulation which consists in recomputing the whole matrix $\mX^{k}:=\mDelta(\mA^{\pi^k})^{-1}$ every $K$ iterations. In the meantime, we use \eqref{eq:X^ell+1-i} to compute the values $X^{\ell}_{i\sigma^{k-1}}$ for ${\ell\in\{k_0+1,\dots, k\}}$ where $k_0$ is the iteration at which we recomputed the whole matrix ${\mX^{k_0}:=\mDelta(\mA^{\pi^{k_0}})^{-1}}$. This can be implemented by replacing the call to Subroutine~\ref{algo:update_X} at Line~\ref{line:update_X} with a call to Subroutine~\ref{algo:FMM}.
\begin{algorithm}[ht]
    \floatname{algorithm}{Subroutine}
    \begin{algorithmic}[1]
    %\DontPrintSemicolon
    \If{$k$ is an iteration at which we recompute the whole $\mX^{k+1}$}
        \State Set $k_0 := k+1$\;
        \If{\reviewedFirstRev{$\mA^{\pi^{k_0}}$ is not invertible}}
        \State \Return{\reviewedFirstRev{the arm is multichain}}
        \EndIf
        \State Set $\mX^{k_0}:=\mDelta(\mA^{\pi^{k_0}})^{-1}$ \label{algoFMM:recompute}\;
    \Else
        \For{$\ell=k_0$ to $k-1$}
            \label{algoFMM:internal loop}
            \For{$i\in[n]$} \Comment{or $i\in\pi^{\ell+1}$ if we do not test indexability.}
                \State $X_{i\sigma^{k}}^{\ell+1} = X_{i\sigma^{k}}^{\ell} -X^{\ell+1}_{i\sigma^{\ell}}X^{\ell}_{\sigma^{\ell}\sigma^{k}}$ \;
            \EndFor
        \EndFor
        \If{\reviewedFirstRev{$1+X^{k}_{\sigma^{k}\sigma^{k}}=0$}}
        \State \Return{\reviewedFirstRev{the arm is multichain}}
        \EndIf
        \For{$i\in[n]$} \Comment{or $i\in\pi^{k+1}$ if we do not test indexability.}
        \State $X_{i\sigma^{k}}^{k+1} = \displaystyle\frac{X^{k}_{i\sigma^{k}}}{1+X^{k}_{\sigma^{k}\sigma^{k}}}$
        \EndFor
    \EndIf
    \end{algorithmic}
    \caption{Update\_X\_FMM(k)}
    \label{algo:FMM}
\end{algorithm}

To see why this can be more efficient, we illustrate in Figure~\ref{fig:fast_mm} the pairs $(\ell,\sigma^{k-1})$ for which we compute the vector $\mX^{\ell}_{: \sigma^{k-1}}$, either for Subroutine~\ref{algo:update_X} or Subroutine~\ref{algo:FMM}.  In each case, a vertical blue line indicates that we recompute the whole matrix $\mX^k$ by solving a linear system.  The gray zone corresponds to the values $(\ell,\sigma^{k-1})$ for which we compute $\mX^\ell_{: \sigma^{k-1}}$ using Equation \eqref{eq:X^ell+1-i} and the red squares represent the vector $\mX^{k}_{: \sigma^{k-1}}$ used at Line~\ref{algo2:y^k} of Algorithm~\ref{algo:whittle_n3}. For Subroutine~\ref{algo:update_X}, we do one matrix inversion at the beginning and then compute for all $(\ell,\sigma^{k-1})$ with $\ell\le k$ because the red square at value $(k,\sigma^{k-1})$ is computed starting from the vertical blue line at value $(1,\sigma^{k-1})$. For Subroutine~\ref{algo:FMM}, we do \reviewedFirstRev{$ \floor{n/K} $ } (here \reviewedFirstRev{$ \floor{n/K} =4$}) full recomputation of $\mX^k$, which correspond to the vertical blue lines. We gain in terms of operations because the surface of the gray zone to compute is divided by \reviewedFirstRev{$ \floor{n/K} $}.

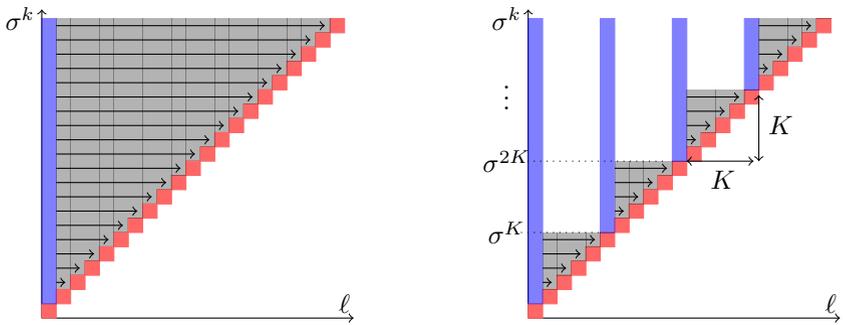
\begin{figure}[ht]
    \centering
    \begin{tabular}{cc}
        \begin{tikzpicture}[scale=0.19, shorten >=2pt]
            \draw (0,0) edge[->] (22,0);
            \draw (0,0) edge[->] (0,22);
            \node at (21,1) {$\ell$};
            \node at (-1.5,21) {$\sigma^k$};
            \foreach \i in {0,...,20} \draw[fill,red!60] (\i,\i) rectangle (\i+1,\i+1);
            \foreach \i in {0}{
               \draw[fill,blue,opacity=0.5] (\i,\i+1) rectangle (\i+1,21);
               \foreach \j in {1,...,19}{
                   \fill[black,opacity=0.3] (\i+\j,\i+\j+1) rectangle (\i+\j+1,\i+21);
               }
            }
            \foreach \i in {2,...,20}{
                \draw (1,\i+0.5) edge[->] (\i,\i+0.5);
            }
            %\node at (0,-1) {}; % to have the same vertical size as the right figure. 
            % \foreach \j in {1,...,6}{
            %     \foreach \k in {\j,...,6}{
            %         \draw (\j+0.5,\k+0.5) edge[->] (\j-0.7,\k+0.5) (\j+0.5,\k+0.5) edge[->] (\j-0.5,\k-0.5);
            %     }
            % }
            % \node at (6,9) {$\udots$};
            % \node at (3,9) {$\vdots$};
        \end{tikzpicture}
        &
        \begin{tikzpicture}[scale=0.19, shorten >=2pt]
            \draw (0,0) edge[->] (22,0);
            \draw (0,0) edge[->] (0,22);
            \node at (21,1) {$\ell$};
            \node at (-1.5,21) {$\sigma^k$};
            \foreach \i in {0,...,20} \draw[fill,red!60] (\i,\i) rectangle (\i+1,\i+1);
            \foreach \i in {0,5,10,15}{
               \draw[fill,blue,opacity=0.5] (\i,\i+1) rectangle (\i+1,21);
               \foreach \j in {1,...,5}{
                   \fill[black,opacity=0.3] (\i+\j,\i+\j+1) rectangle (\i+\j+1,\i+6);
                }
                \foreach \j in {2,...,5}{
                    \draw (\i+1,\i+\j+0.5) edge[->] (\i+\j,\i+\j+0.5);
                }
            }
            \draw (11,11) edge[<->] node[below] {$K$} (16,11);
            \draw (16,11) edge[<->] node[right] {$K$} (16,16);
            \node at (-1.5,6) {$\sigma^K$}; \draw[dotted] (-.5,6) -- (5.5,6);
            \node at (-1.5,11) {$\sigma^{2K}$}; \draw[dotted] (-.5,11) -- (10,11);
            \node at (-1.5,16) {$\vdots$};
        \end{tikzpicture} \\
        (a) Computation load of Subroutine~\ref{algo:update_X}.
        &(b) Computation load of Subroutine~\ref{algo:FMM}
    \end{tabular}
    \caption{Illustration of the improvement proposed by replacing Subroutine~\ref{algo:update_X} with Subroutine~\ref{algo:FMM} \reviewedFirstRev{when we do not test the indexability.} For Subroutine~\ref{algo:FMM}, we solve more linear systems (each vertical blue line corresponds to solving a linear system) but we reduce the gray zone to compute. A linear arrow corresponds to the internal loop of Line~\ref{algoFMM:internal loop} of Subroutine~\ref{algo:FMM}.}
    \label{fig:fast_mm}
\end{figure}

Note that the $y$-axis of Figure~\ref{fig:fast_mm} is ordered by increasing value of $\sigma^k$ (and not by increasing value of $k$). The value of $\sigma^k$ is computed at iteration $k$ but unknown before iteration $k$. This explains why in Subroutine~\ref{algo:FMM}, when we recompute the matrix $(X^{k}_{ij})$ at an iteration $k$ (vertical blue lines in Figure~\ref{fig:fast_mm}(b)), we recompute it for all $i,j$ and not just $i,j\in\{\sigma^{k}, \dots, \sigma^K\}$ (which are the only values that we will use): Indeed, $\sigma^{k}, \dots, \sigma^K$ are unknown at iteration $k$.

%%%%%%%%%%%%%%%%%%%%%%%%%%%%%%%%%%%%%%%%%%%%%%%%%%%%%%%%%%%%%%%%%%%%%%%%%%%%%%%%%%%%%%%%%%%%%%%%%%%
%%%%%%%%%%%%%%%%%%%%%%%%%%%%%%%%%%%%%%%%%%%%%%%%%%%%%%%%%%%%%%%%%%%%%%%%%%%%%%%%%%%%%%%%%%%%%%%%%%%
\subsection{A subcubic algorithm for Whittle index}
%\subsection{Background on fast matrix multiplication}

We now assume to have access to a subcubic matrix multiplication algorithm that satisfies the following property:
\begin{enumerate}[label=(FMM)]    
    \item \label{hypo:FMM} There exists an algorithm to multiply a matrix of size $n\times n$ by a matrix of size\footnote{We write $n^\gamma$ which is possibly non-integer. For the sake of simplicity, we write $n^\gamma$ and $n^{1-\gamma}$ but they should be understood as $\floor{n^\gamma}$ and $\floor{n^{1-\gamma}}$ respectively.} $n\times  n^\gamma $ that runs in $O(n^{\omega(\gamma)})$, where $\omega:[0,1]\to[2,3]$ is a non-decreasing function.
\end{enumerate}

Going back to Algorithm~\ref{algo:whittle_n3} where Line~\ref{line:update_X} is Subroutine~\ref{algo:FMM}, we now assume that we recompute the whole matrix $\mX^k$ every $O(n^\gamma)$ iterations. The new algorithm has a subcubic complexity:

\begin{theorem}
    \label{th:subcubic}
    \reviewedFirstRev{Given a $n$-state arm}, Algorithm~\ref{algo:whittle_n3} with Subroutine~\ref{algo:FMM} checks indexability and computes Whittle (and Gittins) index in time at least $\Omega(n^{2.5})$ and at most $O(n^{2.5286})$ when choosing $\gamma=0.5286$.
\end{theorem}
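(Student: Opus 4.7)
The plan is to bound the total cost of Algorithm~\ref{algo:whittle_n3} equipped with Subroutine~\ref{algo:FMM} and block size $K := n^\gamma$ as the sum of two contributions, and then choose $\gamma$ to balance them.

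First I would analyze the Sherman--Morrison branch of Subroutine~\ref{algo:FMM}. Within a single block of $K$ iterations starting at some $k_0$, the $j$th call executes $j$ rank-one column updates of length $n$, each costing $O(n)$ arithmetic operations; summing over $j = 0,\dots,K-1$ within a block gives $O(nK^2)$, and summing over the $n/K = n^{1-\gamma}$ blocks yields a total of $O(n^{2+\gamma})$ operations. The bookkeeping on $\vy^k$, $\vz^k$, $\vmu^k$ and the optional indexability test only add $O(n^2)$ and are absorbed.

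Next I would analyze the recomputation branch. A naive evaluation $\mX^{k_0} := \mDelta (\mA^{\pi^{k_0}})^{-1}$ costs $O(n^{\omega(1)})$ per block, which is too expensive if executed $n^{1-\gamma}$ times. Instead I would exploit the fact that, by iterated use of \eqref{eq:Ak_from_Ak-1}, $\mA^{\pi^{k_0}} = \mA^{\pi^{k_0 - K}} + \mE \mF^\top$ where $\mE, \mF$ have size $n \times K$. The Woodbury identity then expresses $(\mA^{\pi^{k_0}})^{-1}$, and after left-multiplying by $\mDelta$ also $\mX^{k_0}$, as $\mX^{k_0 - K}$ plus a correction whose matrix products all have one dimension equal to $K = n^\gamma$. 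By hypothesis \ref{hypo:FMM} each such product costs $O(n^{\omega(\gamma)})$, while the interior $K \times K$ inversion costs $O(K^\omega) = o(n^{\omega(\gamma)})$. Carrying $\mX^{k_0}$ and $(\mA^{\pi^{k_0}})^{-1}$ jointly through the blocks keeps the iteration going, so each recomputation costs $O(n^{\omega(\gamma)})$. The total recomputation cost is then $n^{1-\gamma} \cdot O(n^{\omega(\gamma)}) = O(n^{1-\gamma+\omega(\gamma)})$, plus one initial $O(n^{\omega(1)})$ inversion that is dominated.

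Summing the two contributions gives a total complexity of $O(n^{2+\gamma} + n^{1-\gamma+\omega(\gamma)})$. For the upper bound I would substitute the current best rectangular matrix multiplication bound $\omega(0.5286) \le 2.0572$ together with $\gamma = 0.5286$; both terms then equal $n^{2.5286}$. For the lower bound I would use only the trivial inequality $\omega(\gamma) \ge 2$, which is valid for every $\gamma \in [0,1]$, and obtain a total cost of at least $\Omega(n^{\max(2+\gamma,\, 3-\gamma)})$, minimized at $\gamma = 1/2$ and equal to $\Omega(n^{2.5})$.

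I expect the main obstacle to be making the Woodbury step rigorous: one must carry $\mX^{k_0}$ and $(\mA^{\pi^{k_0}})^{-1}$ jointly, arrange the order of multiplications so that every matrix product routes through dimension $K$ (so that \ref{hypo:FMM} applies rather than the cubic-like $\omega(1)$ regime), and handle the possibility that the interior $K \times K$ matrix $\mI + \mF^\top (\mA^{\pi^{k_0 - K}})^{-1} \mE$ is singular. By Lemma~\ref{lem:invertible} this singularity is precisely the multichain detection already built into the algorithm, so it requires no new ingredient beyond what Subroutine~\ref{algo:FMM} already does at every Sherman--Morrison step.
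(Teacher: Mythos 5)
Your proposal is correct and follows essentially the same route as the paper's proof: bound the gray-zone Sherman--Morrison updates by $n^{1-\gamma}\cdot O(n^{1+2\gamma})=O(n^{2+\gamma})$, bound each periodic recomputation by $O(n^{\omega(\gamma)})$ via the Woodbury identity (the paper's Lemma~\ref{lem:woodbury}), balance the two terms using the bound of \citep{gall2018improved} at $\gamma=0.5286$, and derive the $\Omega(n^{2.5})$ lower bound from $\omega(\gamma)\ge 2$. Your observation that one should propagate $\mX^{k_0}$ itself through the Woodbury correction (rather than recomputing $\mDelta(\mA^{\pi^{k_0}})^{-1}$ by a full $n\times n$ product, which would cost $O(n^{\omega(1)})$ per block) is a point the paper's proof glosses over, and is needed for the stated bound to hold.
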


We believe that Theorem~\ref{th:subcubic} is \reviewedFirstRev{the first theoretical result} that shows that Whittle index can be computed in subcubic time. As we show in Section~\ref{sec:discounted}, this algorithm can be directly extended to discounted index. As a byproduct, we also obtain the first subcubic algorithm to compute Gittins index.

\begin{proof}
    The algorithm starts by computing $\mX^1$ which can be done in $O( n^{\omega(1)})$. Then, there are $ n^{1-\gamma} $ times that we do:
    \begin{enumerate}
        \item We fill the ``gray'' mini matrices by using \eqref{eq:X^ell+1-i}. This amounts to three for loops of size $n$ (for $i$), $ n^\gamma $ (for $k$) and $ n^\gamma $ (for $\ell$). Hence, each small gray matrix costs $O( n^{1+2\gamma} )$.
        \item At the end of a cycle, we recompute the full inverse by updating $(\mA^{\pi^k})^{-1}$ from $(\mA^{\pi^{k- n^\gamma }})^{-1}$. As we show in Lemma~\ref{lem:woodbury} (stated below), this can be done in $O(  n^{\omega(\gamma)} )$.
\end{enumerate}
This implies that the algorithm has a complexity:
\begin{align*}
    O( n^{\omega(\gamma)} ) +  n^{1-\gamma}  \left(O( n^{1+2\gamma} )+O( n^{\omega(\gamma)} )\right)  = O( n^{\max\{2+\gamma, 1-\gamma + \omega(\gamma)\}} ).
\end{align*}

To compute the optimal value of $\gamma$ minimizing this expression requires the knowledge of the function 
$\omega(\gamma)$ which is  not known. The current state of the art  only gives  a lower bound ($\omega(\gamma) \geq 2$ ) and an upper bound described in  \citep{gall2018improved}.

It is shown in \citep[Table~3]{gall2018improved} that $\gamma=0.5286$ is the smallest currently known value of $\gamma$ for which  $\omega(\gamma)<1+2\gamma$. This implies that the complexity is  at most $O(n^{2.5286})$.

As for the lower bound,  $\omega(\gamma) \geq 2$ implies that the complexity of the
algorithm is at least  $\Omega(n^{2.5})$.
    \end{proof}
  
% \subsection{Technical lemma}

% In this section, we show how to compute $A^{\pi^k}$ from $A^{\pi^{k-n^{\gamma}}}$ in $O(n^{\omega(\gamma)})$ operations.
    In the next lemma, $\mB$ plays the role of $\mA^{\pi^k}$ and $\mA$ the role of $\mA^{\pi^{k- n^{\gamma} }}$. Note that as required in the lemma, exactly $ n^\gamma $ rows and columns are changed between the two.
\begin{lemma}
    \label{lem:woodbury}
    Assume~\ref{hypo:FMM}. Let $\mA$ be a square matrix whose inverse $\mA^{-1}$ has already been computed, and let $\mB$ be an invertible square matrix such that $\mA-\mB$ is of rank smaller than $ n^{\gamma} $. Then, it is possible to compute the inverse of $\mB$ in $O(n^{\omega(\gamma)})$. 
\end{lemma}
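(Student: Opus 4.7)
The plan is to apply the Woodbury matrix identity, which generalizes the Sherman-Morrison formula to rank-$r$ updates. Since $\mA - \mB$ has rank at most $r := \lceil n^{\gamma}\rceil$, I would first factor it as $\mA - \mB = \mU\mV^\top$ for two $n\times r$ matrices $\mU$ and $\mV$. Such a factorization is cheap to obtain: one only needs to identify $r$ linearly independent columns of $\mA-\mB$ (for instance via the rows and columns where $\mA$ and $\mB$ differ, which is how the lemma will be applied in the algorithm), giving a cost of $O(n^2 r) = O(n^{2+\gamma})$, which is well within the $O(n^{\omega(\gamma)})$ budget since $\omega(\gamma)\ge 2+\gamma$ already in the region of interest (the algorithm's overall complexity is precisely driven by this inequality).

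Applying Woodbury's identity then yields
\begin{equation*}
    \mB^{-1} \;=\; (\mA - \mU\mV^\top)^{-1} \;=\; \mA^{-1} + \mA^{-1}\mU\,(\mI_r - \mV^\top \mA^{-1}\mU)^{-1}\,\mV^\top \mA^{-1},
\end{equation*}
where the invertibility of the small inner matrix $\mI_r - \mV^\top \mA^{-1}\mU$ is guaranteed by the assumption that both $\mA$ and $\mB$ are invertible (this is exactly the non-degeneracy condition of the Woodbury identity).

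The next step is to bound the cost of each intermediate computation. Computing $\mA^{-1}\mU$ is an $n\times n$ times $n\times n^\gamma$ multiplication, so by assumption \ref{hypo:FMM} this costs $O(n^{\omega(\gamma)})$; transposing reduces $\mV^\top \mA^{-1}$ to the same cost. The inner product $\mV^\top(\mA^{-1}\mU)$ is an $n^\gamma\times n$ times $n\times n^\gamma$ multiplication, which is no more expensive than $O(n^{\omega(\gamma)})$. Inverting the $n^\gamma\times n^\gamma$ matrix $\mI_r - \mV^\top\mA^{-1}\mU$ by any method costs at most $O(n^{3\gamma})$, and at the chosen $\gamma = 0.5286$ we have $3\gamma < 2 \le \omega(\gamma)$, so this too fits. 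Finally, the three successive multiplications that rebuild $\mA^{-1}\mU\,(\cdots)^{-1}\mV^\top\mA^{-1}$ and the $n\times n$ addition to $\mA^{-1}$ each cost at most $O(n^{\omega(\gamma)})$. Summing yields the claimed overall bound $O(n^{\omega(\gamma)})$.

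The only real point to watch is to make sure that every intermediate product is kept in the ``tall-thin'' shape $n\times n^\gamma$ (or its transpose) so that assumption \ref{hypo:FMM} can be invoked at each step, and that the small inner inversion does not blow up the exponent; both are automatic in the regime of $\gamma$ used by Theorem~\ref{th:subcubic}.
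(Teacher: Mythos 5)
Your proof follows essentially the same route as the paper's: a rank-$n^\gamma$ Woodbury update of $\mA^{-1}$, with each rectangular product costed via \ref{hypo:FMM} and the small $n^\gamma\times n^\gamma$ inversion absorbed into the budget. One side remark is wrong, however: you justify the $O(n^{2+\gamma})$ cost of extracting the factorization $\mA-\mB=\mU\mV^\top$ by asserting $\omega(\gamma)\ge 2+\gamma$, but the opposite inequality holds in the regime the theorem uses (the whole point of choosing $\gamma=0.5286$ is that $\omega(\gamma)<1+2\gamma<2+\gamma$), so an $O(n^{2+\gamma})$ preprocessing step does \emph{not} fit inside the claimed $O(n^{\omega(\gamma)})$. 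This does not sink the argument, because — as you yourself note — in the algorithm the $n^\gamma$ modified rows are known explicitly, so $\mU$ and $\mV$ are available at no cost (the paper's proof simply takes the factorization as given and never addresses this point). On the other hand, your justification of the invertibility of the inner matrix $\mI_r-\mV^\top\mA^{-1}\mU$ via the invertibility of $\mA$ and $\mB$ is correct and is actually more careful than the paper's, which writes $\mC^{-1}$ and the inner inverse without comment.
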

\begin{proof}
    The matrix $\mB$ can be written as $\mB=\mA + \mU\mC\mV$ where $\mU$ is a $n\times n^\gamma $ matrix, $\mC$ is $ n^\gamma \times n^\gamma $ and $\mV$ is $ n^\gamma \times n$. The Sherman–Morrison–Woodbury formula~\citep{woodbury1950inverting} states that
    \begin{align*}
        \mB^{-1} = \left(\mA + \mU\mC\mV \right)^{-1} = \mA^{-1} - \mA^{-1}\mU \left(\mC^{-1} + \mV\mA^{-1}\mU \right)^{-1} \mV\mA^{-1}.
    \end{align*}
    This shows that $\mB^{-1}$ can be computed by:
    \begin{itemize}
        \item Computing $\mD:=\mA^{-1}\mU$ and $\mE:=\mV\mA^{-1}$: this takes $O(n^{\omega(\gamma)})$. 
        \item Computing $\mF:=\left(\mC^{-1} + \mV\mA^{-1}\mU \right)^{-1}$: as this is the inversion of a $ n^\gamma \times n^\gamma $ matrix, it can be done in $O(n^{\gamma\omega(1)})$ where $\gamma\omega(1)\le \omega(\gamma)$.
        \item Computing $\mG:=\mD\mF$ and then $\mG\mE$: this again takes $O(n^{\omega(\gamma)})$. 
    \end{itemize}
    Hence, computing $\mB^{-1}$ can be done in $O(n^{\omega(\gamma)})$ operations for the inversion and all   multiplications plus an additional $O(n^2)$ term for the subtraction and the addition. As $\omega(\gamma)\ge2$, this concludes the proof of the lemma.
\end{proof}

\subsection{The subcubic algorithm in practice}

The complexity of $O(n^{2.5286})$ given in Theorem \ref{th:subcubic} is mainly of theoretical interest. The value $\gamma = 0.5286$ is obtained by using the best upper bound  on $\omega(\gamma)$ known today which is based on the Coppersmith-Winograd algorithm and its variants. The Coppersmith-Winograd algorithm (or its variants) are, however, known as a \emph{galactic algorithm}: the hidden constant in the $O()$ is so large that their runtime is prohibitive for any reasonable value of $n$. Hence, the existence of these algorithms is of theoretical interest but has limited applicability.

This does not discard the practical improvement provided by Subroutine~\ref{algo:FMM} which is based on the mere fact that multiplying  two matrices (or inverting a matrix) is faster than three nested loops even for matrices of moderate size. To verify this, we launched  a detailed profiling of the code of Algorithm \ref{algo:whittle_n3} with the non-optimized Subroutine~\ref{algo:update_X}. It shows that for a problem of dimensions $5000$, the update of Line~\ref{line:update_X} takes more than 90\% of the computation time, the initialization of $\mX^1$ on Line~\ref{algo2:line_inverse} takes about $5\%$ of the time and the rest of the code takes less than $1\%$ of the running time.

Now, if inverting the full matrix takes about $5\%$ of the execution time, and updating the gray zone takes $95\%$, then by doing $5$ updates, one can hope to obtain an algorithm whose running time is roughly $5\times5 + 95/5\approx43\%$ the one of the original implementation. As we observe in Section~\ref{sec:numerical}, this is close to the gain that we obtain in practice. A general way to choose the best number of updates is used in the numerical section. It is based on the following reasoning. For large matrices (say $n\ge 10^3$), the fastest implementations of matrix multiplication and inversion are based on Strassen's algorithm~\citep{huang2016strassen,huang2018practical}. As we report in Appendix~\ref{apx:inversion}, the time to solve a linear system of size $n$ by using the default installation of \texttt{scipy} seems to run in $O(n^{2.8})$. By replacing the function $\omega(\gamma)$ used in Theorem \ref{th:subcubic} by a more practical bound ($\omega(\gamma)=2.8$), the best value for $\gamma$ becomes $\gamma = 0.9$. This indicates that our algorithm can be implemented in $O(n^{2.9})$ by doing $O(n^{0.1})$ recomputation of $\mX^k$ from $\mA^{\pi^k}$. Note that even for very large values of $n$ (like $n=15000$), $n^{0.1}$ remains quite small, \emph{e.g.}, $15000^{0.1}\approx 2.6$. In practice, we observe that updating $\mathrm{int}(2n^{0.1})$ times (the notation $\mathrm{int}(x)$ indicates that it is rounded to the closest integer) gives the best performance among all algorithms, as reported in the next section.

%%%%%%%%%%%%%%%%%%%%%%%%%%%%%%%%%%%%%%%%%%%%%%%%%%%%%%%%%%%%%%%%%%%
%%%%%%%%%%%%%%%%%%%%%%%%%%%%%%%%%%%%%%%%%%%%%%%%%%%%%%%%%%%%%%%%%%%
%%%%%%%%%%%%%%%%%%%%%%%%%%%%%%%%%%%%%%%%%%%%%%%%%%%%%%%%%%%%%%%%%%%
\section{Numerical experiments}
\label{sec:numerical}
%%%%%%%%%%%%%%%%%%%%%%%%%%%%%%%%%%%%%%%%%%%%%%%%%%%%%%%%%%%%%%%%%%%
%%%%%%%%%%%%%%%%%%%%%%%%%%%%%%%%%%%%%%%%%%%%%%%%%%%%%%%%%%%%%%%%%%%
%%%%%%%%%%%%%%%%%%%%%%%%%%%%%%%%%%%%%%%%%%%%%%%%%%%%%%%%%%%%%%%%%%%

%\KK{Mention that all examples are unichain or that our package is developed for unichain model?}

In complement to our theoretical analysis, we developed a python package that implements Algorithm~\ref{algo:whittle_n3} and gives the choice of using the variant of Subroutine~\ref{algo:update_X} or of Subroutine~\ref{algo:FMM} to do the ``update\_X()'' function. This package relies on three python libraries: \texttt{scipy} and \texttt{numpy} for matrix operations, and \texttt{numba} to compile the python code. To facilitate its usage, this package can be installed by using \texttt{pip install markovianbandit-pkg}.

All experiments were conducted on a laptop (Macbook Pro 2020) with an Intel Core i9 CPU at 2.3 GHz with 16GB of Memory using Python 3.6.9 :: Anaconda custom (64-bit) under macOS Big Sur version 11.6.2. The version of the packages are \texttt{scipy} version 1.5.4, \texttt{numpy} version 1.19.5 and \texttt{numba} version 0.53.1. The code of all experiments is available at \url{https://gitlab.inria.fr/markovianbandit/efficient-whittle-index-computation}.

In all of our experiments, \reviewed{the way we generated arms guarantees that they are almost surely unichain because all the elements on their diagonal as well as on the upper and lower diagonals are positive.}

\subsection{Time to compute Whittle indices}

To test the implementation of our algorithm, we randomly generate restless bandit arms with $n$ states where $n\in\{100,1000,\dots,15000\}$. In each case, both transition matrices are uniform probabilistic matrices: for each row of each matrix, we generate $n$ i.i.d. entries following the exponential distribution and divide the row by its sum. This means that all matrices are dense.
We use dense matrix since it is the worst case for computational interest. By running our algorithms on sparse matrix, we would expect to have faster running time.
Note that all tested matrices are indexable. This is coherent with \citep{gast2020exponential,nino2007dynamic} that report that for uniform matrices, the probability of finding a non-indexable example decreases very rapidly with the dimension $n$. Finally, reward vectors were generated from random Uniform[0,1) entries.

%\KK{Add sparse matrix in the experiment}

\begin{table}[ht]
    \caption{Running time (in seconds) of the variants of Algorithm~\ref{algo:whittle_n3}}
    \label{table:running_time}
    \begin{tabular}{|c|c|c||c|c|}
        \hline
        &\multicolumn{2}{c||}{$O(n^3)$ algorithm (Subroutine~\ref{algo:update_X})} & \multicolumn{2}{c|}{Subcubic algorithm (Subroutine~\ref{algo:FMM})}\\
        $n$    &  (a) With index. test &  (b) Without test  & (c) With index. test &  (d) Without test  \\
        \hline
        100 & 0.006 & 0.004 & 0.007 & 0.005 \\
        1000 &   0.2 &   0.2 &   0.2 &   0.2 \\
        2000 &   1.9 &   1.2 &   1.1 &   1.0 \\
        3000 &   7.2 &   5.0 &   3.2 &   2.6 \\
        4000 &    17 &    12 &     8 &     6 \\
        5000 &    34 &    25 &    16 &    12 \\
        6000 &    60 &    43 &    27 &    20 \\
        7000 &    95 &    68 &    42 &    33 \\
        8000 &   142 &    99 &    63 &    49 \\
        9000 &   199 &   141 &    92 &    70 \\
        10000 &   275 &   191 &   122 &    95 \\
        11000 &   361 &   257 &   164 &   133 \\
        12000 &   471 &   339 &   225 &   190 \\
        13000 &   620 &   428 &   286 &   243 \\
        14000 &   790 &   553 &   408 &   317 \\
        15000 &   965 &   685 &   501 &   403 \\
        \hline
    \end{tabular}
\end{table}
We record the runtime of the different variants of our algorithm and report the results in Table~\ref{table:running_time}. Note that these results present the whole execution time of the algorithm, including the initialization phase in which $\mX^1$ is computed. For each value of $n$, we run Algorithm~\ref{algo:whittle_n3} with four variants: 
\begin{itemize}
    \item The first two columns correspond to the $O(n^3)$ algorithm (that uses Subroutine~\ref{algo:update_X} for Line~\ref{line:update_X}), either (a) with the indexability test, or (b) without the indexability test. 
    \item The last two columns correspond to the subcubic algorithm (that uses Subroutine~\ref{algo:FMM} for Line~\ref{line:update_X} with int($2n^{0.1}$) updates), either (c) with the indexability test, or (d) without the indexability test.
\end{itemize}
Our numbers show that our algorithm can compute the Whittle index in less than one second for $n=1000$ states and slightly less than $7$ minutes for $n=15000$ states with variant (d). As expected, not doing the indexability test does improve the performance compared to doing the indexability test (here by a factor approximately $1/3$ for Subroutine~\ref{algo:update_X} and $1/5$ for Subroutine~\ref{algo:FMM}). More importantly, this table shows that the time when using the subcubic variant, Subroutine~\ref{algo:FMM}, diminishes the computation time by about 40\% to 50\% compared to when using Subroutine~\ref{algo:update_X}. Note that for $n=100$, using the Subroutine~\ref{algo:update_X} is slightly faster than using the Subroutine~\ref{algo:FMM} (while both takes only a few milliseconds). This indicates that the subcubic algorithm becomes interesting when $n$ is large enough (say $n\ge2000$).

To give a visual idea of how the various variants of the algorithms compare, we plot in Figure~\ref{fig:benchmark_algo_whittle_generic} the runtime of the four variants along with two variants of the algorithms of \cite{nino2020fast}: FPA-Matlab (the original matlab implementation), and FPA-Julia: a Julia's implementation of the algorithm provided by the authors. We choose to compare to this algorithm as it was the one with the smallest complexity up to now. The numbers for FPA-Matlab are the ones reported in \cite{nino2020fast} and they are comparable to the ones that we obtained on our machine with the same algorithm. FPA-Julia is significantly faster. Hence, we plot in Figure~\ref{fig:speedup} the runtime of each variant divided by the runtime of FPA-Julia. For large $n$, our best implementation is about $4$ to $6$ times faster than the best one of \citep{nino2020fast}. For instance, for $n=15000$, our implementation takes about $7$ minutes to compute the index (or $9$ minutes when checking indexability) whereas FPA-Julia takes $33$ minutes (and does not check indexability on the fly). In our implementation, not testing indexability reduces the computation time of $15-20\%$ for Subroutine~\ref{algo:FMM} or $25-30\%$ for Subroutine~\ref{algo:update_X} compared to the version that tests indexability.

It should be clear that the comparison of the computation times of our implementation versus the ones of FPA-Matlab or FPA-Julia has its limits, because we do not use the same programming language. The influence of the choice of programming language is clear when comparing FPA-Matlab and FPA-Julia: while both codes are similar, the compiled Julia code is about 5 times faster.  To obtain a fairer comparison with our algorithm, we tried to rewrite the algorithm of \cite{nino2020fast} in Python with Numba but our implementation was significantly slower than the one of FPA-Julia. We do not know if this is by lack of optimization of our code or because Julia is indeed faster. Regardless of the choice programming langages, our implementation has nevertheless a few advantages that can explain why it is faster: our algorithm has some technical advantage (a simpler internal loop, and the use of Subroutine~\ref{algo:FMM}), and our implementation is optimized to improve data access pattern (which consists mostly in sorting on the fly the array according to the permutation $\sigma$, see Appendix~\ref{apx:implementation}).

%Hence, although the numbers are
%We do not use the same programming language: t
%and we did not re-implement it. Hence, the computation was not done on the same machine, and not implemented in the same programming language. The characteristic of the machine described in \citep{nino2020fast} seems comparable if not better to ours, and our implementation is in python/numpy whereas the one of \citep{nino2020fast} is in matlab.  Second, it is written in \citep{nino2020fast} that the reported numbers do not ``count the initialization stage of computing the initial tableau''. There is no mention of the cost of the initialization stage in \citep{nino2020fast}. On the contrary, the numbers that we report in this paper include all initialization stages, which puts our running time at disadvantage. 

%\NG{I changed the numbers : we do not use the raw data from [29] but a Julia implementation of the algorithm of [29] provided by Nino-Mora.}
\begin{figure}[ht]
    \begin{tabular}{cc}
        \begin{subfigure}[t]{0.47\linewidth}
            \includegraphics[width=\linewidth]{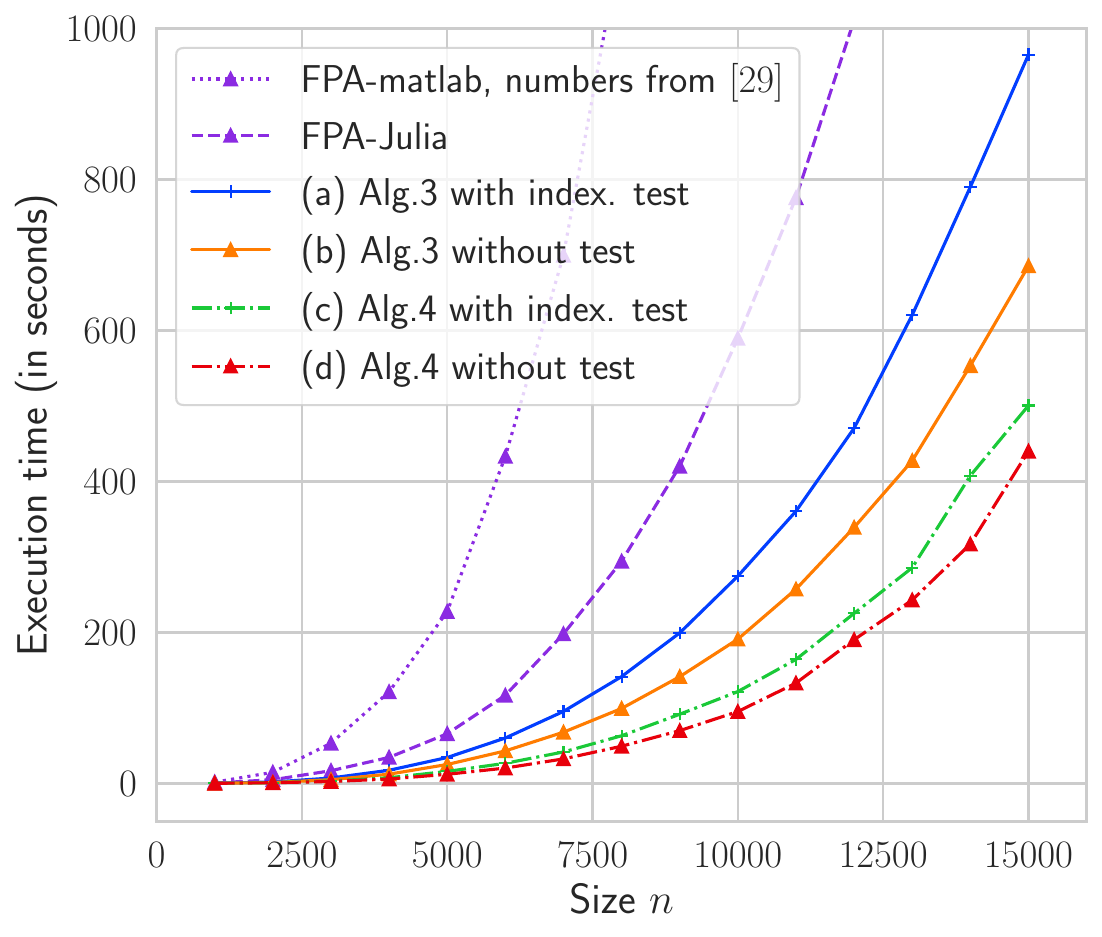}
            \caption{Runtime of our implementations as a function of the state size $n$.}
            \label{fig:benchmark_algo_whittle_generic}
        \end{subfigure}            
        &\begin{subfigure}[t]{0.47\linewidth}
            \includegraphics[width=\linewidth]{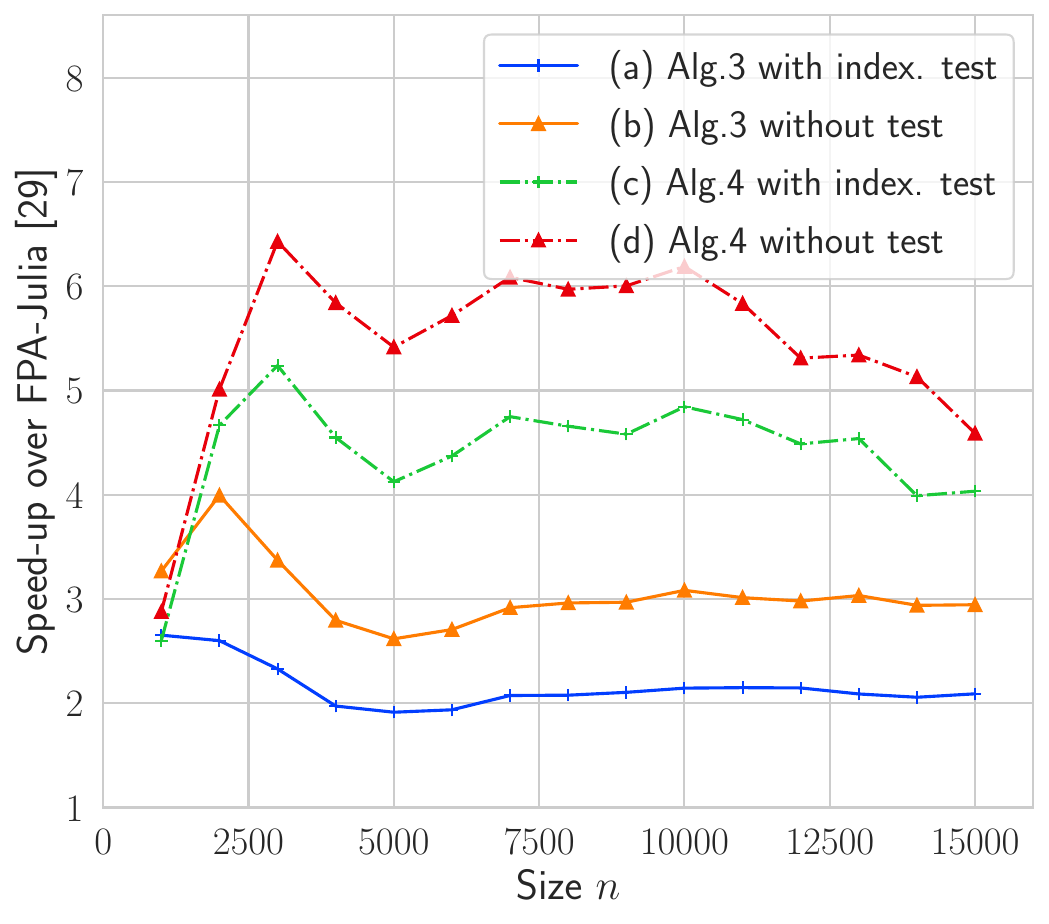}
            \caption{Speedup of each implementation over the data from \citep{nino2020fast} as a function of the state size $n$.}
            \label{fig:speedup}
        \end{subfigure}
    \end{tabular}
    \caption{Numerical result over 7 simulations: in each simulation, we run the algorithm over randomly generated RBs with the state size ranging over $\{1000,\dots,15000\}$. We plot the average runtime over 7 simulations. The solid lines represent the result of Subroutine~\ref{algo:update_X} and the dashed-dot lines represent the one of Subroutine~\ref{algo:FMM}. The marker ``$+$'' indicates that algorithms test indexability and the triangles indicates that algorithms do not test indexability.}
    %\KK{I will update the figure according to the reference number when we all do our pass on the paper.}
    \label{fig:numerical_result}
\end{figure}

\subsection{Statistics of indexable problems}

To the best of our knowledge, our algorithm provides the first indexability test that scales well with the dimension $n$. We used this to answer a very natural question: given a randomly generated arm, how likely is it to be indexable? This question was partially answered in \citep{nino2007dynamic} that shows that when generating dense arms, the probability of generating a non-indexable arm is close to $10^{-n}$ for $n\in\{3,\dots,7\}$. This suggests that most arms are indexable. Below, we answer two questions: what happens for larger values of $n$, and more importantly, what happens when the state transition matrices are not dense? 

To answer these questions, we consider randomly generated arms where the matrices $\mP^0$ and $\mP^1$ are $b$-diagonal matrices with $b$  non-null diagonals. In particular, $b=3$ corresponds to tridiagonal matrices, $b=5$ corresponds to pentadiagonal matrices and $b=7$ corresponds to septadiagonal matrices. We also compare with the classical case of dense matrices (which corresponds to $b=2n-1$).  For each model, the entries are generated from the exponential distribution for each row and we divide the row by the sum of generated entries for this row.   We vary $n$ from $3$ to $50$ and for each case, we generate $100000$ arms. We report in Table~\ref{tb:number_indexable} the number of indexable arms for each case. Note that pentadiagonal matrices are dense matrices for $n=3$ and septadiagonal matrices are dense matrices for $n=4$ and do not make sense for $n=3$, which is why no numbers are reported.

\begin{table}[ht]
    \caption{Number of indexable problems among $100\,000$ randomly generated problems.}
    %\vspace{-.5cm}
    \centering
    \label{tb:number_indexable}
    \begin{tabular}{|c|r|r|r|r|}
        \hline
        Problem size $n$ &  Tridiagonal &  5-diagonal &  7-diagonal &  Dense \\
        \hline
        3  &     98\,731 &        -- &        -- &   99\,883 \\
        4  &     95\,067 &     99\,655 &        -- &   99\,931 \\
        5  &     89\,198 &     99\,309 &     99\,902 &   99\,969 \\
        10 &     54\,129 &     90\,377 &     98\,914 &  100\,000 \\
%        20 &     17\,749 &     55\,267 &     86\,845 &  100\,000 \\
        30 &      7\,094 &     29\,699 &     66\,143 &  $100\,000$ \\
%        40 &      3\,356 &     15\,949 &     46\,478 &  $100\,000$ \\
        50 &      1\,823 &      9\,332 &     32\,069 &  $100\,000$ \\
        \hline
    \end{tabular}
\end{table}

Based on these results, we can assert that dense models are essentially always indexable which conforms with the data reported in \citep{nino2007dynamic}. The situation is, however, radically different for sparse models: the number of indexable problems decreases quickly with the number of states. For instance, there are only $1\,823$ indexable $50$-state problems among $100\,000$ generated tridiagonal models (\emph{i.e.} around $1.8\%$ are indexable). Note that a tridiagonal model is a birth-death Markov chain which is frequently used for queueing systems. Hence, it is very important to check the indexability of the problem because it is not a prevalent property for sparse models. This also calls for new efficient policies in restless multi-arm bandit problems that are not based on Whittle indices.

%%%%%%%%%%%%%%%%%%%%%%%%%%%%%%%%%%%%%%%%%%%%%%%%%%%%%%%%%%%%%%%%%%%
%%%%%%%%%%%%%%%%%%%%%%%%%%%%%%%%%%%%%%%%%%%%%%%%%%%%%%%%%%%%%%%%%%%
%%%%%%%%%%%%%%%%%%%%%%%%%%%%%%%%%%%%%%%%%%%%%%%%%%%%%%%%%%%%%%%%%%%
\section{Extension to the discounted case}
\label{sec:discounted}
%%%%%%%%%%%%%%%%%%%%%%%%%%%%%%%%%%%%%%%%%%%%%%%%%%%%%%%%%%%%%%%%%%%
%%%%%%%%%%%%%%%%%%%%%%%%%%%%%%%%%%%%%%%%%%%%%%%%%%%%%%%%%%%%%%%%%%%
%%%%%%%%%%%%%%%%%%%%%%%%%%%%%%%%%%%%%%%%%%%%%%%%%%%%%%%%%%%%%%%%%%%

The model described in Section~\ref{sec:bandits} corresponds to the definition of Whittle index for a time-average criterion, for which Whittle index is known to be asymptotically optimal \citep{weber1990index} for restless bandits. Yet, Whittle index can also be defined for the discounted case \citep{nino2020fast, akbarzadeh2020conditions}. Notably, the discounted Whittle index simplifies into Gittins index when the bandit is rested (\emph{i.e.}, when $\mP^0=\mI$ and $\vr^0=\boldsymbol{0}$). In this section, we show how to adapt our algorithm to the discounted case. As a by product, we obtain the first subcubic algorithm to compute Gittins index rested bandit.

\subsection{Discounted Whittle index}

We now consider a $\widx$-penalized MDP in which the instantaneous reward received at time $t\ge0$ is discounted by a factor $\beta^t$, where $\beta\in(0,1)$ is called the discount factor: when executing action $a$ in state $i$ at time $t\ge0$, the decision maker earns a reward $\beta^t (r^a_i - \lambda a)$. For a given policy $\pi$, we denote by $u^\pi_i(\lambda)$ the expected sum of discounted rewards earned by the decision maker when the MDP starts in state $i$ at time $0$. The vector $\vu^\pi(\lambda)=[u^\pi_1(\lambda)\ \dots\ u^\pi_n(\lambda)]^\top$ is called the value function of the policy $\pi$. From \citep{putermanMarkovDecisionProcesses1994}, it satisfies Bellman's equation, that is, for all state $i$ we have: 
\begin{align}
    \label{eq:u^pi_discounted}
    u^\pi_i(\lambda) = r_i^{\pi_i} - \lambda \pi_i + \beta \sum_{j=1}^n P^{\pi_i}_{ij} u^\pi_j(\lambda).
\end{align}
The above equation is a linear equation, whose solution is unique because $\beta<1$. It is given by:
\begin{align}
    \label{eq:h is linear discounted}
    \vu^\pi(\lambda) = (\mI - \beta \mP^\pi)^{-1}(\vr^{\pi} - \lambda \vpi).
\end{align}
For a given penalty $\lambda$ and a state $i$, we denote by $u^*_i(\lambda):=\max_\pi u^\pi_i(\lambda)$ be the optimal value of state $i$.  A policy $\pi$ is optimal for the penalty $\lambda$, \ie, $\pi\in\Pi^*(\lambda)$, if for all state $i\in[n]$, $u^*_i(\lambda)=u^\pi_i(\lambda)$. By~\citep{putermanMarkovDecisionProcesses1994} such a policy exists, $\lvert\Pi^*(\lambda)\rvert >0$.
\reviewedFirstRev{
    As mentioned in Section~\ref{sssec:penal_mdp}, the distinction between Bellman optimal and gain optimal disappears in discounted MDP in which we are concerned with maximizing the value function.
    Similarly to the time-average criterion studied before, a $\beta$-discounted RB is called indexable if for all penalty $\lambda<\lambda'$, all $\pi\in\Pi^*(\lambda)$ and $\pi'\in\Pi^*(\lambda')$, one has $\pi\supseteq\pi'$.
}
%Similarly to the time-average criterion studied before, a $\beta$-discounted RB is called indexable if for all state $i\in[n]$, there exists a critical penalty $\lambda_i$ such that the action "activate" is optimal in state $i$ for penalty $\widx$ if and only if $\lambda\le\lambda_i$. 

\subsection{Analogy between the time-average and the discounted versions}

Let $\pi$ be a policy and $i\in\pi$ be an active state.
\reviewedFirstRev{
    Similarly to average reward model studied before, the advantage of action activate over action rest in state $i$ right before following policy $\pi$ is given by,
    $\alpha^\pi_i(\lambda):=r^1_i -r^0_i -\lambda +\beta\sum_{j=1}^n (P^1_{ij} -P^1_{ij})u^\pi_j(\lambda)$.
    Then $\alpha^\pi_i(\lambda)=0$ if and only if $\lambda = \delta_i + \sum_{j=1}^n\tilde{\Delta}_{ij} u_j^\pi(\lambda)$. 
}
%By Equation~\eqref{eq:u^pi_discounted}, the two actions "rest" and "activate" are equivalent in state $i$ if and only if 
%\begin{align*}
%    r^1_i - \lambda + \beta \sum_{j=1}^n P^{1}_{ij} u^*_j(\lambda) = r^0_i + \beta \sum_{j=1}^n P^{0}_{ij} u^*_j(\lambda),
%\end{align*}
%which is equivalent to
%\begin{align}
%    \label{eq:mu^k_i discounted}
%    \lambda = \delta_i + \sum_{j=1}^n\tilde{\Delta}_{ij} u_j^\pi(\lambda),
%\end{align}
where $\delta_i$ is defined as for the average reward model and $\tilde{\mDelta}$ is such that for all\footnote{Note that the definition of $\tilde{\mDelta}$ is identical to $\mDelta$ except when $j=1$, for which $\Delta_{i1}:=0$ but $\tilde{\Delta}_{i1}:=\beta(P^1_{i1}-P^0_{i1})$.} states $i,j\in[n]$: $\tilde{\Delta}_{ij}:=\beta(P^1_{ij}-P^0_{ij})$.
%This Equation~\eqref{eq:mu^k_i discounted} is the same as its time-average analogue \eqref{eq:mu^k_i}, but replacing $\Delta$ by $\tilde{\Delta}$.

To finish the derivation of the algorithm, one should note that the value function $\vu(\lambda)$ plays the same role as the vector $\vv(\lambda)$ defined for the average reward model.
In particular, the definition of $\vu$ in Equation~\eqref{eq:h is linear discounted} is the analogue of the definition of $\vv$ in \eqref{eq:h is linear} up to the replacement of the matrix $\mA^\pi$ in \eqref{eq:h is linear} by the matrix $\mI-\beta \mP^\pi$.
This means that similarly to $\vv$, the value function $\vu(\widx)$ is affine in $\lambda$.

Hence, following the same development in Section~\ref{sec:formal_widx_algo}, we can modify Algorithm~\ref{algo:whittle_n3} to compute the discounted Whittle index by modifying only the initialization phase:
\begin{align*}
    \tilde{\mDelta}:=\beta(\mP^1-\mP^0) \text{ and } \mX^1:=\tilde{\mDelta}(\mI-\beta \mP^{\pi^1})^{-1}.
\end{align*}

Note that we still have $\vy^1=\pmb{0}$ because $(\mI-\beta \mP^{\pi^1})^{-1}\vpi^1=\frac1{1-\beta}\pmb{1}$ (value function in a $\beta$-discounted Markov reward process with reward equals to $1$ in all states) and $\tilde{\mDelta}\pmb{1}=\pmb{0}$. Also, if Subroutine~\ref{algo:FMM} is used, Line~\ref{algoFMM:recompute} should be changed to $\mX^{k_0}:=\tilde{\mDelta}(\mI-\beta \mP^{\pi^{k_0}})^{-1}$. Last but not least, in the discounted case, \reviewedFirstRev{we no longer need to check if the optimal policies are unichain} because the matrix $(\mI-\beta \mP^\pi)$ is invertible for any policy $\pi$ as long as $\beta<1$ (from Perron-Frobenius' Theorem).

%%%%%%%%%%%%%%%%%%%%%%%%%%%%%%%%%%%%%%%%%%%%%%%%%%%%%%%%%%%%%%%%%%%
%%%%%%%%%%%%%%%%%%%%%%%%%%%%%%%%%%%%%%%%%%%%%%%%%%%%%%%%%%%%%%%%%%%
\subsection{Gittins index}

The notion of ``restless'' bandit comes from the fact that even when the action ``rest'' is taken, the Markov chain can still change state and generate rewards. When this is not the case (\emph{i.e.}, when $\mP^0=\mI$ and $\vr^0=\boldsymbol{0}$), an arm is no longer restless and is simply called a Markovian bandit (or a \emph{rested} Markovian bandit if one wants to emphasize that it is not restless).

In a discounted rested bandit, the notion of Whittle index coincides with the notion of Gittins index (In fact, Whittle index was first introduced as a generalization of Gittins index to restless bandit in \citep{whittle1988restless}). In such a case, there is no notion of indexability: a discounted rested bandit is always indexable. Its index can be computed by Algorithm~\ref{algo:whittle_n3} without testing indexability. The best known algorithms to compute Gittins index runs in $(2/3) n^3 +O(n^2)$ \citep{chakravorty2014multi}. When using fast multiplication, our algorithm computes Gittins index in $O(n^{2.5286})$ which makes it the first algorithm to compute Gittins index in subcubic time.  

Note that when $\mP^1=\mI$, it is possible to compute $\mX^1=\tilde{\mDelta}/(1-\beta)$ without having to solve a linear system which means that, when $\mP^1=\mI$, our algorithm with the variant Subroutine~\ref{algo:update_X} has complexity $(2/3) n^3 +O(n^2)$.
This shows that our cubic algorithm can compute the Gittins index also in $(2/3) n^3 +O(n^2)$ instead of $(2/3)n^3 +o(n^3)$.
% by switching $(P^0,\vr^0)$ and $(P^1,\vr^1)$. 
%\KK{Add more arguments}

%%%%%%%%%%%%%%%%%%%%%%%%%%%%%%%%%%%%%%%%%%%%%%%%%%%%%%%%%%%%%%%%%%%
%%%%%%%%%%%%%%%%%%%%%%%%%%%%%%%%%%%%%%%%%%%%%%%%%%%%%%%%%%%%%%%%%%%
%%%%%%%%%%%%%%%%%%%%%%%%%%%%%%%%%%%%%%%%%%%%%%%%%%%%%%%%%%%%%%%%%%%
\section{Conclusion}
\label{sec:conclusion}
%%%%%%%%%%%%%%%%%%%%%%%%%%%%%%%%%%%%%%%%%%%%%%%%%%%%%%%%%%%%%%%%%%%
%%%%%%%%%%%%%%%%%%%%%%%%%%%%%%%%%%%%%%%%%%%%%%%%%%%%%%%%%%%%%%%%%%%
%%%%%%%%%%%%%%%%%%%%%%%%%%%%%%%%%%%%%%%%%%%%%%%%%%%%%%%%%%%%%%%%%%%
%In this paper, we present an algorithm that is efficient for detecting the non-indexability and computing Whittle index of all indexable finite-state restless bandits.
In this paper, we \reviewedFirstRev{propose a univocal definition of indexability} and present an algorithm that is efficient for detecting the non-indexability and computing the Whittle index of \reviewedFirstRev{ all indexable finite-state restless bandits whose arms are all unichain. With no assumptions on the structure of arms, this algorithm can still test the indexability and compute the Whittle index of some arms that are multichain and remains efficient if it is able to do so.}
Our algorithm is based on the efficient application of the Sherman-Morrison formula. This is a unified algorithm that works for both discounted and non-discounted restless bandits, and can be used for Gittins index computation.  We present a first version of our algorithm that runs in $n^3+o(n^3)$ arithmetic operations (or in $(2/3)n^3+o(n^3)$ if we do not test the indexability). So, we conclude that Whittle index is not harder to compute than Gittins index. The second version of our algorithm uses the fastest matrix multiplication method and has a complexity of $O(n^{2.5286})$. This makes it the first subcubic algorithm to compute Whittle index or Gittins index. We provide numerical simulations that show that our algorithm is very efficient in practice: it can test indexability and compute the index of a $n$-state restless bandit arm in less than one second for $n=1000$, and in a few minutes for $n=15000$. These numbers are provided for dense matrices. One might expect to have more efficient algorithms if the arm has a sparse structure. We leave this question for future work.

\backmatter

%\bmhead{Supplementary information}

%If your article has accompanying supplementary file/s please state so here. 
%
%Authors reporting data from electrophoretic gels and blots should supply the full unprocessed scans for key as part of their Supplementary information. This may be requested by the editorial team/s if it is missing.
%
%Please refer to Journal-level guidance for any specific requirements.

\bmhead{Acknowledgments} This work is supported by the French National Research Agency (ANR) through REFINO Project under Grant ANR-19-CE23-0015.

%Acknowledgments are not compulsory. Where included they should be brief. Grant or contribution numbers may be acknowledged.

%Please refer to Journal-level guidance for any specific requirements.

\section*{Declarations}

% Some journals require declarations to be submitted in a standardised format. Please check the Instructions for Authors of the journal to which you are submitting to see if you need to complete this section. If yes, your manuscript must contain the following sections under the heading `Declarations':

\begin{itemize}
    \item Funding: the French National Research Agency (ANR) through REFINO project under Grant ANR-19-CE23-0015
    \item Conflict of interest/Competing interests: Not applicable
    \item Ethics approval: No ethical concerns
% \item Consent to participate
% \item Consent for publication
    \item Availability of data and materials: Not applicable
    \item Code availability: \url{https://gitlab.inria.fr/markovianbandit/efficient-whittle-index-computation}
% \item Authors' contributions
\end{itemize}

% \noindent
% If any of the sections are not relevant to your manuscript, please include the heading and write `Not applicable' for that section. 

% %%===================================================%%
% %% For presentation purpose, we have included        %%
% %% \bigskip command. please ignore this.             %%
% %%===================================================%%
% \bigskip
% \begin{flushleft}%
% Editorial Policies for:

% \bigskip\noindent
% Springer journals and proceedings: \url{https://www.springer.com/gp/editorial-policies}

% \bigskip\noindent
% Nature Portfolio journals: \url{https://www.nature.com/nature-research/editorial-policies}

% \bigskip\noindent
% \textit{Scientific Reports}: \url{https://www.nature.com/srep/journal-policies/editorial-policies}

% \bigskip\noindent
% BMC journals: \url{https://www.biomedcentral.com/getpublished/editorial-policies}
% \end{flushleft}

\begin{appendices}

%%=============================================%%
%% For submissions to Nature Portfolio Journals %%
%% please use the heading ``Extended Data''.   %%
%%=============================================%%

%%=============================================================%%
%% Sample for another appendix section			       %%
%%=============================================================%%

%% \section{Example of another appendix section}\label{secA2}%
%% Appendices may be used for helpful, supporting or essential material that would otherwise 
%% clutter, break up or be distracting to the text. Appendices can consist of sections, figures, 
%% tables and equations etc.

\section{Examples and counterexamples}

In this section, we provide a few examples to illustrate the ambiguities in the classical definition of indexability, and to illustrate what can happen for some multichain arms. We also provide the parameters of arms presented in Figure~\ref{fig:illustrate_indexability}.

\subsection{Discussion on the definition of indexability}
\label{apx:discussion_index}

The classical notion of indexability used in the literature is to say that the optimal policy $\pi^*(\lambda)$ should be non-increasing in $\lambda$. Yet, we argue that this definition has two problems:
\begin{enumerate}
    \item What does ``increasing'' mean when $\pi^*(\lambda)$ is not unique? Two possibilities are: for all penalties $\lambda<\lambda'$:
    \begin{enumerate}
        \item[($\exists$)] there exist policies $\pi,\pi'$ with $\pi$ optimal for $\lambda$ and $\pi'$ optimal for $\lambda'$ such that $\pi\supseteq\pi'$;
        \item[($\forall$)] for all policies $\pi,\pi'$ such that  $\pi$ is optimal for $\lambda$ and $\pi'$ is optimal for $\lambda'$, we have $\pi\supseteq\pi'$.
    \end{enumerate}
    \item What notion of ``optimality'' should be used? Two possibilities are: 
    \begin{enumerate}
       \item[(GO)] ``optimal'' means gain optimal.
       \item[(BO)] ``optimal'' means Bellman optimal.
   \end{enumerate}
\end{enumerate}
The most problematic choice is the notion of increasingness: Interpretation $(\exists)$ is more permissive: For instance, consider an arm with two states and assume that the optimal policy is $\{1,2\}$ for $\lambda<0$ and is either $\{1\}$ or $\emptyset$ for $\lambda>0$. Interpretation $(\exists)$ says that the arm is indexable while interpretation ($\forall$) says that this arm is not indexable. If the arm is indexable, what should the index of state $1$ be? Any choice $\lambda_1\in[0,+\infty]$ seems reasonable. Saying that the arm is not indexable clarifies the situation. This is why we choose interpretation ($\forall$) in our paper.

In our paper, we choose the combination ($\forall$-BO) because we believe that, for a problem that has transient state, the notion of Bellman optimality is more meaningful than the notion of gain optimality. Also, our combination ($\forall$-BO) allows for more problems to be indexable compared to ($\forall$-GO) and is easier to characterize.

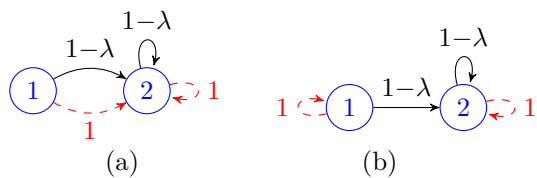
\begin{figure}[ht]
    \centering
    \begin{tabular}{ccc}
        \begin{minipage}{.25\linewidth}
            \begin{tikzpicture}[on grid, state/.style={circle,draw}, >= stealth', auto, prob/.style = {inner sep=1pt,font=\scriptsize}]
                \node[state,color=blue]  (A) {$2$};
                \node[state,color=blue]  (B) [left =1.5cm of A]   {$1$};
                \path[->]
                    (A) edge[loop above,color=black]  node{$1{-}\lambda$} (A)
                    (A) edge[loop right, color=red, dashed]     node{$1$} (A)
                    (B) edge[bend left, color=black]     node{$1{-}\lambda$} (A)
                    (B) edge[bend right, color=red, dashed]     node[below]{$1$} (A);
            \end{tikzpicture}
        \end{minipage}
        &
        \begin{minipage}{.25\linewidth}
            \begin{tikzpicture}[on grid, state/.style={circle,draw}, >= stealth', auto, prob/.style = {inner sep=1pt,font=\scriptsize}]
            \node[state,color=blue]  (A) {$2$};
            \node[state,color=blue]  (B) [left =1.5cm of A]   {$1$};
            \path[->]
                (A) edge[loop above,color=black]  node{$1{-}\lambda$} (A)
                (A) edge[loop right, color=red, dashed]     node{$1$} (A)
                (B) edge[color=black]     node{$1{-}\lambda$} (A)
	            (B) edge[loop left, color=red, dashed]     node[left]{$1$} (B);
            \end{tikzpicture}
        \end{minipage}\\
        % &
        % \begin{minipage}{.35\linewidth}
        %     \begin{tikzpicture}[state/.style={circle,draw}, >= stealth', auto, prob/.style = {inner sep=1pt,font=\scriptsize}]
        %         \node[state,color=blue] at (0,0) (A) {$1$};
        %         \node[state,color=blue] at (2,.5)  (B) {$2$};
        %         \node[state,color=blue] at (2,-.5)  (C) {$3$};
        %         \node[state,color=blue] at (4,0)  (D) {$4$};
        %     \path[->]
        %         (A) edge[color=black]  node[above]{$1{-}\lambda$} (B)
        %         (A) edge[color=red, dashed]     node[below]{$1$} (C)
	    %         (B) edge[color=red, dashed]     node[above]{$1$} (D)
        %         (C) edge[color=black]     node[below]{$1{-}\lambda$} (D)
        %         (B) edge[color=black, loop above] node{$-10-\lambda$} (B)
        %         (C) edge[color=red, dashed,loop below] node{$-10$} (C)
        %         (D) edge[color=red, dashed,loop above] node{$10$} (D)
        %         (D) edge[color=black, loop below] node{$-\lambda$} (D)
        %         ;
        %     \end{tikzpicture}
        % \end{minipage}        \\
        (a) & (b) %& (c)
    \end{tabular}

    \caption{Ambiguous examples: All transitions are deterministic and labels on transitions indicate rewards. Solid black arrows correspond to the action ``activate'' and dashed red arrows to the action ``rest''.
}
    \label{fig:ambiguous_example}
\end{figure}

We illustrate these different definitions in Figure~\ref{fig:ambiguous_example}. For example (a):
\begin{itemize}
    \item The gain optimal policies are $\{1,2\}$ and $\{2\}$ for $\lambda<0$, and $\{1\}$ and $\emptyset$ for $\lambda>0$: According to the interpretation~$(\exists)$, the problem should be indexable but the index for state $1$ is unclear. According to the interpretation~$(\forall)$, the problem should not be indexable.
    \item The Bellman optimal policy is $\{1,2\}$ for $\lambda<0$, and $\emptyset$ for $\lambda>0$. According to our definition, ($\forall$-BO), the problem is indexable and the indices are $\lambda_1=\lambda_2=0$.
\end{itemize}
For example (b), the Bellman optimal and gain optimal policies are identical and equal to the gain optimal policies of example (a). Hence, example (b) is not indexable according to our definition. The output of our algorithm for this problem is "multichain". 
%However, the intuition would suggest that the index for this problem should be $\lambda_1=\lambda_2=0$. This is what would have happened if we had used the notion of \emph{bias optimal} policy that is stronger than the notion of Bellman optimal policy. Yet, this would not solve the ambiguity of example (c) for which the index of state $1$ is not clear unless one would use an even  stronger notion of optimality (such as the notion of \emph{Blackwell optimality} discussed in \cite[Chapter~10]{putermanMarkovDecisionProcesses1994}). In this paper, we keep the notion of Bellman optimal and we do not use these stronger definition because computing a Blackwell-optimal policy is, to the best of our knowledge, not computable in $O(n^3)$ (the complexity of computing Blackwell-optimal policies is unknown to this date). Hence, in this paper we use the notion of Bellman optimality that we think is the best tradeoff between expressiveness and calculability.

Note that if the distinction between (BO) and (GO) disappears for discounted problems, the distinction between ($\forall$) and ($\exists$) remains.

{
%\color{darkblue}
\subsection{Possible outputs for multichain arms}
\label{apx:multichain2}

When running our algorithm on a multichain arm, it outputs "multichain" if one of the policies $\pi^k$ is multichain. This suggests that our algorithm will not necessarily output ``multichain'' for all multichain arms because it explores only a small subset of the policies. In Figure~\ref{fig:example_multichain2}, we provide two examples that illustrate this case. The two are multichain arms for which our algorithm is able to identify the indexability.

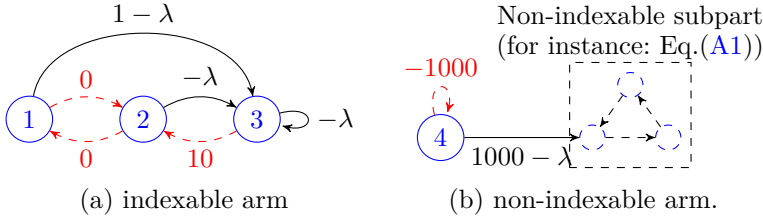
\begin{figure}[ht]
    \begin{tabular}{cc}
        \begin{tikzpicture}[on grid, state/.style={circle,draw}, >= stealth', auto, prob/.style = {inner sep=1pt,font=\scriptsize}]
            \node[state,color=blue]  (A) {$1$};
            \node[state,color=blue]  (B) [right =1.5cm of A]   {$2$};
            \node[state,color=blue]  (C) [right =1.5cm of B]   {$3$};
            \path[->] 
            (A) edge[bend left,color=red, dashed]  node{$0$} (B)
            (B) edge[bend left,color=red, dashed]  node{$0$} (A)
            (C) edge[bend left,color=red, dashed]  node{$10$} (B)
            (A) edge[bend left, color=black,out=80,in=100]     node{$1-\lambda$} (C)
            (B) edge[bend left, color=black]     node{$-\lambda$} (C)
            (C) edge[loop right, color=black]     node{$-\lambda$} (C);
        \end{tikzpicture}
        &
        \begin{tikzpicture}[on grid, state/.style={circle,draw}, >= stealth', auto, prob/.style = {inner sep=1pt,font=\scriptsize}]
            \node[state,color=blue] at (-0.5,-0.3) (A) {$4$};
            \node[state,dashed,color=blue] at (1.5,-0.3) (B) {};
            \node[state,dashed,color=blue] at (2.5,-0.3) (C) {};
            \node[state,dashed,color=blue] at (2,0.4) (D) {};
            \draw[dashed] (1.2,-.7) rectangle (2.8,.7);
            \node at (2,1.3) {Non-indexable subpart};
            \node at (2,0.9) {(for instance: Eq.\eqref{eq:exemple_non_idexable})};

            \path[->] 
            (A) edge[loop above,color=red, dashed]  node{$-1000$} (B)
            (B) edge[dashed] (C) (C) edge[dashed] (D) (D) edge[dashed] (B)
            (A) edge[color=black]  node[below]{$1000-\lambda$} (B);
        \end{tikzpicture}\\
        (a) indexable arm& (b) non-indexable arm.
        % \\
        % (a) the algorithm returns "indexable"
        % &(b) the algorithm returns\\
        % &"non-indexable"
    \end{tabular}
    \caption{Two examples of multichain arms for which our algorithm does not return ``multichain'' but returns ``indexable'' (a) or ``non-indexable'' (b).}
    \label{fig:example_multichain2}
\end{figure}

In the first example shown in Figure~\ref{fig:example_multichain2}(a), the arm is multichain because the policy $\{3\}$ has two recurrent classes: $\{1,2\}$ and $\{3\}$. Yet, this arm is indexable and the indices are $\{11,8,-10\}$. Our algorithm will output that this arm is indexable because it will explore the sequence of policies $\pi^1,\pi^2,\pi^3,\pi^4$, where
\begin{itemize}
    \item $\pi^1=\{1,2,3\}$ is the unique Bellman optimal policy for $\lambda<-10$;
    \item $\pi^2=\{1,2\}$ is the unique Bellman optimal policy for $\lambda\in(-10,8)$;
    \item $\pi^3=\{1\}$ is the unique Bellman optimal policy for $\lambda\in(8,11)$;
    \item $\pi^4=\emptyset$ is the unique Bellman optimal policy for $\lambda>11$. 
\end{itemize}
All these policies are unichain, and the policy $\{3\}$ will never be explored.  Hence, our algorithm will output "indexable" for this case and will compute the indices.

In the second example, shown in Figure~\ref{fig:example_multichain2}(b), we construct a non-indexable arm by taking the non-indexable $3$-state example shown in Figure~\ref{fig:illustrate_algo_nind} (parameters are given in \eqref{eq:exemple_non_idexable}) to which we add an extra state ``4''. For this state, the active action has a very high reward ($1000$) and leads to the non-indexable recurrent class. The passive action has a very low reward and stays in state $4$. Any policy that does not activate $4$ is multichain. The algorithm will start by exploring policies that activate the state $4$. As for the original example presented in Figure~\ref{fig:illustrate_algo_nind}, our algorithm will realize that the arm is non-indexable when exploring values around $\lambda\approx0.70$. The algorithm will stop and answer ``not indexable'' before  trying  the passive action for state $4$  because the active avantage for state $4$ is larger than $2000-\lambda$. The output of the algorithm is thus "non-indexable".
}

\subsection{Multichain arms and infinite indices}
\label{apx:multichain}

\begin{figure}[ht]
    \centering
    \begin{minipage}{.35\linewidth}
        \centering
        \begin{tikzpicture}[on grid, state/.style={circle,draw}, >= stealth', auto, prob/.style = {inner sep=1pt,font=\scriptsize}]
            \node[state,color=blue]  (A) {$2$};
            \node[state,color=blue]  (B) [left =1.5cm of A]   {$1$};
            \path[->]
            (A) edge[loop right,color=black]  node{$1-\lambda$} (A)
            (A) edge[loop above, color=red, dashed]     node{$1$} (A)
            (B) edge[color=black]     node{$1-\lambda$} (A)
            (B) edge[loop above, color=red, dashed]     node{$0$} (B);
        \end{tikzpicture}\\
        (a) Our algorithm returns ``indexable''.
    \end{minipage}\qquad 
    \begin{minipage}{.35\linewidth}
        \centering
        \begin{tikzpicture}[on grid, state/.style={circle,draw}, >= stealth', auto, prob/.style = {inner sep=1pt,font=\scriptsize}]
            \node[state,color=blue]  (A) {$2$};
            \node[state,color=blue]  (B) [left =1.5cm of A]   {$1$};
            \path[->]
            (A) edge[loop right,color=red, dashed]  node{$0$} (A)
            (A) edge[loop above, color=black]     node{$1-\lambda$} (A)
            (B) edge[color=red, dashed]     node{$0$} (A)
            (B) edge[loop above, color=black]     node{$-\lambda$} (B);
        \end{tikzpicture}\\
        (b) Our algorithm returns ``multichain''.
    \end{minipage}
    
    \caption{Example of an indexable multichain problem with infinite Whittle index. Transitions are deterministic and labels on edges indicate rewards (for the $\lambda$-penalized arm). Solid black transitions correspond to action ``activate''  and dashed red to the action ``rest''.}
    \label{fig:example_multichain}
\end{figure}
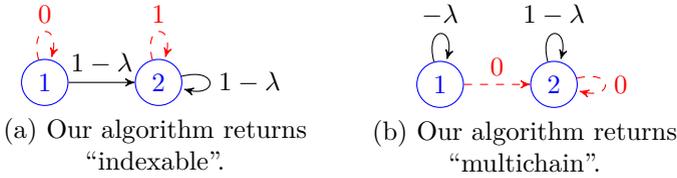 

Consider the two examples of Figure~\ref{fig:example_multichain}. The examples are multichain because policy $\emptyset$ has two irreducible classes for example (a) and policy $\{1,2\}$ has two irreducible classes for example (b). These two problems are indexable:
\begin{itemize}
    \item For (a), the Bellman optimal policy for $\lambda<0$ is $\{1,2\}$ and $\{1\}$ for $\lambda>0$. The indices are $\lambda_2=0$ and $\lambda_1=+\infty$.
    \item For (b), the  Bellman optimal policy is $\{2\}$ for $\lambda<0$  and $\{1,2\}$ for $\lambda>0$. The indices are $\lambda_1=0$ and $\lambda_2=-\infty$.
\end{itemize}
For the first example, our algorithm returns the correct indices because the constructed policies are $\pi^1:=\{1,2\}\supsetneq\pi_2:=\{1\}$ and they are both unichain.  For the second example, our algorithm will start with the policy $\{1,2\}$ and will stop by saying that this example is multichain.

\subsection{Parameters for the example of Figure~\ref{fig:illustrate_indexability}}
\label{apx:non_indexable_example}

The numerical data of the indexable arm presented in Figure~\ref{fig:illustrate_vf} is
\begin{equation*}
    \mP^0{=}\begin{bmatrix}
        0.363 & 0.503 & 0.134 \\
        0.082 & 0.754 & 0.164 \\
        0.246 & 0.029 & 0.724
    \end{bmatrix}
    \mP^1{=}\begin{bmatrix}
        0.172 & 0.175 & 0.653 \\
        0.055 & 0.931 & 0.014 \\
        0.155 & 0.627 & 0.218
    \end{bmatrix}
    \vr^1{=}\begin{bmatrix}
        0.441 \\
        0.803 \\
        0.426
    \end{bmatrix}
    \vr^0{=}\vzero
\end{equation*}

\noindent
The numerical data of the non-indexable arm presented in Figure~\ref{fig:illustrate_non_indexable} is

\begin{equation}
    \label{eq:exemple_non_idexable}
    \mP^0{=}\begin{bmatrix}
        0.005 & 0.793 & 0.202 \\
        0.027 & 0.558 & 0.415 \\
        0.736 & 0.249 & 0.015
    \end{bmatrix}
    \mP^1{=}\begin{bmatrix}
        0.718 & 0.254 & 0.028 \\
        0.347 & 0.097 & 0.556 \\
        0.015 & 0.956 & 0.029
    \end{bmatrix}
    \vr^1{=}\begin{bmatrix}
        0.699 \\
        0.362 \\
        0.715
    \end{bmatrix} \vr^0{=}\vzero
\end{equation}

\section{Technical lemmas}

\subsection{Unicity of Bellman optimal policy}
\label{apx:unicity_BO}

\subsubsection{Definition and notation}

We are given a MDP $\langle\gS,\gA,r,P\rangle$ with finite state and action spaces.
As shown in \cite[Chapter~9]{putermanMarkovDecisionProcesses1994}, for such a MDP, the optimal gain $\vg^*$ is a vector that satisfies the \emph{multichain optimality equations}: for each $i\in\gS$:
\begin{align}
    \max_{a\in\gA}{\Big(\sum_{j\in\gS}P^a_{ij}g^*_j-g^*_i\Big)} =0 \label{eq:gain_max}\\
    \max_{a\in\gA}{\Big(r^a_i -g^*_i +\sum_{j\in\gS}P^a_{ij}h_j-h_i\Big)} =0. \label{eq:bias_max}
\end{align}
%where $\gA_i\subseteq\gA$ is the set of actions that are available at state $i$, and $\vh$ is a bias vector.
This system uniquely determines the optimal gain that we denote by $\vg^*$.
However, vector $\vh$ is not uniquely determined by the system.
In the following, we denote by $H$ the set of bias vector $\vh$ such that $(\vg^*,\vh)$ is a solution of the optimality equations \eqref{eq:gain_max}--\eqref{eq:bias_max}. A policy $\pi$ is Bellman optimal if there exists a bias $\vh\in H$ such that policy $\pi$ attains the maximum \eqref{eq:bias_max}, \emph{i.e.}, for all $i\in\gS$:
\begin{align*}
    \pi_i \in \argmax_{a\in\gA}{\Big(r^a_i -g_i +\sum_{j\in\gS}P^a_{ij}h_j-h_i\Big)} = \argmax_{a\in\gA}{\Big(r^a_i +\sum_{j\in\gS}P^a_{ij}h_j\Big)}.
\end{align*}

For a given policy $\pi:\gS\mapsto\gA$, we denote by $\vr^\pi$ and $\mP^\pi$ the reward vector and state transition matrix under policy $\pi$: $r^\pi_i=r^{\pi_i}_{i}$ and $P^\pi_{ij} = P^{\pi_i}_{ij}$. Let $\vg^\pi,\vh^\pi\in\real^{\vert \gS\vert}$ be a solution of the following system:
\begin{align}
    \vg^\pi - \mP^{\pi}\vg^\pi &=\vzero \label{eq:gain_eval}\\
    \vr^{\pi}  - \vg^\pi+\mP^{\pi}\vh^\pi  -\vh^\pi &= \vzero \label{eq:bias_eval1}.
\end{align}
The vector $\vg^\pi$ is uniquely determined by this system of equations and is called the long-run average reward or gain of policy $\pi$. The vector $\vh^\pi$ is unique up to an element of the null space of $(\mI-\mP^\pi)$. Such a vector $\vh^\pi$ is called a bias of policy $\pi$.

If $(\vg^\pi, \vh^\pi)$ is a solution of \eqref{eq:gain_eval} and \eqref{eq:bias_eval1}, then the advantage of action $a$ over the action $\pi_i$ when the MDP is in state $i$ is given by:
\begin{align*}
    B^a_i(\vh^\pi) &:=r^a_i +\sum_{j\in\gS}P^a_{ij}h^\pi_j -g^{\pi}_i -h^\pi_i\\
    &= r^a_i - r^{\pi_i}_i +\sum_{j\in\gS}(P^a_{ij}-P^{\pi_i}_{ij})h^\pi_j
\end{align*}
%This advantage function is well-defined when the policy $\pi$ is unichain. 
% Hence, for $\vg^*$ and any $\vh\in H$, $B^a_i(\vh) =r^a_i -g^*_i +\sum_{j\in\gS}P^a_{ij}h_j-h_i$ is the advantage of action $a$ over the best action when the MDP is in state $i$. This quantity is always negative and null if the action $a$ is optimal, $B^a_i(\vh)\le0$ for all $\vh\in H$.

We recall the two notions of optimality:
\begin{itemize}
    \item A policy $\pi$ is gain optimal if its gain $\vg^\pi$ equals $\vg^*$.
    \item A policy $\pi$ is Bellman optimal if $\vg^*=\mP^\pi \vg^*$ and if there exists a bias $\vh^\pi$ that is a solution of \eqref{eq:bias_eval1} and also \eqref{eq:bias_max}.
\end{itemize}
Recall that a Bellman optimal policy is also gain optimal.

Note that by the definition the advantage $B(\cdot)$, a policy is Bellman optimal if $\vg^*=\mP^\pi \vg^*$ and if there exists $\vh\in H$ such that $B^{\pi_i}_i(\vh)=0$.

% We now give the formal problem.
% \begin{prob}
%     \label{prob:obj}
%     Given a MDP $<\gS,\gA,r,P>$, choose a deterministic stationary policy $\pi:\gS\mapsto\gA$ such that $\vg^\pi=\vg^*$.
% \end{prob}
% From \cite{putermanMarkovDecisionProcesses1994}, there exists a deterministic stationary policy $\pi:\gS\mapsto\gA$ that achieves $\vg^*$.
% We say that such policy is \emph{gain optimal}.
% For the following sections, we present some characterization of such policies.

\paragraph*{Useful notations}

Let $\pi:\gS\to\gA$ be a policy. We say that a state is recurrent for $\pi$ if it is recurrent for the Markov chain whose transition matrix is $\mP^\pi$. In other words, a state $i\in\gS$ is recurrent if when the chain starts in $i$ at time $0$, it almost surely visits the state $i$ at some time $t\ge1$.  We denote by $\gR^\pi$ the set of recurrent states of policy $\pi$. 

We also define $\bar{\mP}^\pi$ as the Cesaro limit of the sequence $\{(\mP^\pi)^t\}_{t=1}$:
\begin{align*}
    \bar{\mP}^\pi:=\lim_{T\to\infty}\frac1T\sum_{t=1}^T(\mP^\pi)^{t-1}.
\end{align*}
From \cite[Section~A.4 of Appendix~A]{putermanMarkovDecisionProcesses1994}, the matrix  $\bar{\mP}^\pi$ exists and has the following properties:
\begin{itemize}
    \item $\bar{\mP}^\pi$ is a stochastic matrix, and satisfies $\mP^\pi \bar{\mP}^\pi = \bar{\mP}^\pi\mP^\pi =\bar{\mP}^\pi$.
    \item For all state $i,j$, if $j\not\in\gR^\pi$, then $\bar{P}^\pi_{ij}=0$.
    \item If $\pi$ is unichain, then the rows of $\bar{\mP}^\pi$ are identical. 
\end{itemize}

\subsubsection{Characterization of gain optimal policies}

The following lemma characterizes gain optimal policy by showing that the policy must satisfies \eqref{eq:bias_max} on their recurrent states.
\begin{lemma}
    \label{lem:opt_pol}
    Let $\pi:\gS\mapsto\gA$ be a policy and recall that $\gR^\pi$ is the set of recurrent states of policy $\pi$.
    The three properties below are equivalent.
    \begin{enumerate}[label=(\roman*)]
        \item \label{it:opt_pol1} $\mP^\pi\vg^*=\vg^*$ and for all $\vh\in H$, $B^{\pi_i}_i(\vh)=0$ for all $i\in\gR^\pi$
        \item \label{it:opt_pol2} $\mP^\pi\vg^*=\vg^*$ and for some $\vh\in H$, $B^{\pi_i}_i(\vh)=0$ for all $i\in\gR^\pi$
        \item \label{it:opt_pol3} $\pi$ is gain optimal.
    \end{enumerate}
\end{lemma}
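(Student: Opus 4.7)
The plan is the cycle $(i)\Rightarrow(ii)\Rightarrow(iii)\Rightarrow(i)$. The implication $(i)\Rightarrow(ii)$ is immediate. The other two both rely on a single identity that I would derive first: if $\mP^\pi\vg^*=\vg^*$ and $\vb(\vh)$ denotes the vector with entries $b_i(\vh):=B^{\pi_i}_i(\vh)$, then
\[
\bar{\mP}^\pi \vb(\vh) \;=\; \vg^\pi - \vg^*.
\]
This is obtained from $\vb(\vh)=\vr^\pi-\vg^*+(\mP^\pi-\mI)\vh$ by left-multiplying by $\bar{\mP}^\pi$ and using $\bar{\mP}^\pi\mP^\pi=\bar{\mP}^\pi$ (which annihilates the $\vh$ term), $\vg^\pi=\bar{\mP}^\pi\vr^\pi$, and $\bar{\mP}^\pi\vg^*=\vg^*$ (the latter because $\mP^\pi\vg^*=\vg^*$ makes $\vg^*$ invariant under every Cesaro average of $\mP^\pi$).

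For $(ii)\Rightarrow(iii)$, I would pick the witness $\vh\in H$ from $(ii)$ with $b_i(\vh)=0$ for all $i\in\gR^\pi$. Since $\bar{P}^\pi_{ij}=0$ for $j\notin\gR^\pi$, each coordinate of $\bar{\mP}^\pi\vb(\vh)$ is a sum over $j\in\gR^\pi$ of zeros, hence $\bar{\mP}^\pi\vb(\vh)=\vzero$, and the identity gives $\vg^\pi=\vg^*$. For $(iii)\Rightarrow(i)$, gain optimality together with \eqref{eq:gain_eval} first yields $\mP^\pi\vg^*=\mP^\pi\vg^\pi=\vg^\pi=\vg^*$, so the identity is available for every $\vh\in H$ and produces $\bar{\mP}^\pi\vb(\vh)=\vzero$. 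Since $\vh\in H$ makes $\pi_i$ a sub-maximizer in \eqref{eq:bias_max}, one has $\vb(\vh)\le\vzero$ componentwise. For a recurrent state $i\in\gR^\pi$, the row $\bar{\mP}^\pi_{i,\cdot}$ is supported on the recurrent class $C(i)\subseteq\gR^\pi$ with strictly positive entries (the stationary distribution on an irreducible class), so the equation $\sum_{j\in C(i)}\bar{P}^\pi_{ij}b_j(\vh)=0$ with non-positive summands forces $b_j(\vh)=0$ for every $j\in C(i)$; letting $i$ range over $\gR^\pi$ covers every recurrent state, giving $(i)$.

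The main obstacle I anticipate is the final sign argument: it requires two distinct ingredients, namely the componentwise bound $\vb(\vh)\le\vzero$ (which follows from \eqref{eq:bias_max} plugged with the sub-optimal action $\pi_i$) and strict positivity of $\bar{\mP}^\pi$ on each recurrent class. Both are classical properties of the Cesaro average listed above, but combining them cleanly with the key identity is the substantive step; the remainder of the proof is linear-algebraic bookkeeping.
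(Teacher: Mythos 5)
Your proof is correct and follows essentially the same route as the paper's: both arguments hinge on left-multiplying $\vb(\vh)=\vr^\pi-\vg^*+(\mP^\pi-\mI)\vh$ by $\bar{\mP}^\pi$ to kill the bias term, identifying $\bar{\mP}^\pi\vr^\pi$ with the gain, and in the direction $(iii)\Rightarrow(i)$ combining $\vb(\vh)\le\vzero$ (from \eqref{eq:bias_max}) with positivity of $\bar{\mP}^\pi$ on recurrent states to force the advantage to vanish there. Packaging the computation as the single identity $\bar{\mP}^\pi\vb(\vh)=\vg^\pi-\vg^*$ is a tidy reformulation of what the paper does inline, and your handling of the recurrent classes in the last step is if anything slightly more explicit than the paper's.
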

\begin{proof}
    \ref{it:opt_pol1} $\Rightarrow$ \ref{it:opt_pol2} is trivial.

    \ref{it:opt_pol2} $\Rightarrow$ \ref{it:opt_pol3}: By definition of $B_i^a(\vh)$, we have $r^{\pi}_i-g^*_i = h_i-\sum_{j\in\gS} P^{\pi}_{ij}h_j$ for any recurrent state $i$ of $\pi$.  Multiply this with $\bar{P}^\pi_{ki}$ and sum over $i\in\gS$ (if $i$ is not recurrent, then $\bar{P}^\pi_{ki}=0$) gives
    \begin{align*}
        \sum_{i\in\gS} \bar{P}^\pi_{ki}(r^{\pi}_i-g^*_i) = \sum_{i\in\gS} \bar{P}^\pi_{ki}h_i - \underbrace{\sum_{i\in\gS} \bar{P}^\pi_{ki}\sum_{j\in\gS} P^{\pi}_{ij}h_j}_{=\sum_{j\in\gS} \bar{P}^\pi_{kj}h_j \text{ since $\bar{\mP}^\pi\mP^\pi=\bar{\mP}^\pi$.}}
        &= \vzero.
        %\sum_{i\in\gS} \bar{P}^\pi_{ki}h_i - \sum_{j\in\gS} h_j \sum_{i\in\gS} \bar{P}^\pi_{ki}P^{\pi}_{ij} &=\sum_{i\in\gS} \bar{P}^\pi_{ki}(r^{\pi}_i-g^*_i) \\
        %\sum_{i\in\gS} \bar{P}^\pi_{ki}h_i -\sum_{j\in\gS} \bar{P}^\pi_{kj}h_j &= \sum_{i\in\gS} \bar{P}^\pi_{ki}(r^{\pi}_i-g^*_i).
    \end{align*}
    By Theorem~8.2.6 of \cite{putermanMarkovDecisionProcesses1994}, the average reward of $\pi$ is $\bar{\mP}^\pi r^\pi$. The above equation shows that $\bar{\mP}^\pi\vr^\pi = \bar{\mP}^\pi\vg^*$. Moreover, the assumption  $\mP^\pi\vg^*=\vg^*$ implies that $\bar{\mP}^\pi\vg^*=\vg^*$ which in turn implies that $\bar{\mP}^\pi\vr^\pi=\vg^*$. This shows that the average reward of $\pi$ is $\vg^*$ and therefore $\pi$ is gain optimal.
    % So, we have $\mP^\pi\vg^*=\vg^*$ and $\bar{\mP}^\pi(\vr^\pi-\vg^*)=\vzero$.

    \ref{it:opt_pol3} $\Rightarrow$ \ref{it:opt_pol1}: If $\pi$ is gain optimal, then $\mP^\pi \vg^*=\vg^*$ and $\bar{\mP}^\pi(\vr^\pi-\vg^*)=\vzero$. The latter rewrites as $\sum_{i\in\gS}\bar{P}^\pi_{ki}(r^{\pi}_i-g^*_i) =0$ for all state $k$. Let $\vh\in H$ be an optimal bias.  For all state $k$, we have    
    \begin{align}
        \label{eq:apx_proof_H}
        \sum_{i\in\gS}\bar{P}^\pi_{ki}B^{\pi_i}_i(\vh) = \sum_{i\in\gS}\bar{P}^\pi_{ki}(r^{\pi}_i-g^*_i +\sum_{j\in\gS}P^{\pi}_{ij}h_j -h_i) &=0.
    \end{align}
    As $\vh$ satisfied \eqref{eq:bias_max}, for all action $a$, we have $B^{a}_i(\vh) \le0$ for all states $i\in\gS$ and in particular $B^{\pi_i}_i(\vh) \le0$. This shows that for any state $i$ such that $\bar{P}^\pi_{ki}>0$, one must have $B^{\pi_i}_i(\vh) =0$. Such state $i$ are the recurrent states of $\pi$. This shows that $B^{\pi_i}_i(\vh) =0$ for all $i\in\gR^\pi$.
\end{proof}

\subsubsection{Characterization of Bellman optimal policies}

The previous lemma shows that a policy is gain optimal if and only if the actions for the recurrent states of the policy satisfy \eqref{eq:bias_max}.
The following lemma shows the relationship between two policies that are unichain and satisfy \eqref{eq:bias_max} on all states.

\begin{lemma}
    \label{lem:equi_bias}
    Suppose that two policies $\pi$ and $\theta$ are Bellman optimal, unichain and have at least one common recurrent state: $\gR^\pi\cap\gR^\theta\neq\emptyset$.
    
    Then for any $\vh^\pi$ and $\vh^\theta$ solutions of \eqref{eq:bias_eval1} for $\pi$ and $\theta$ respectively, there exists a constant $c$ such that for all state $i$: $h^\pi_i -h^\theta_i =c$. Moreover, in this case, ${B_i^{\theta_i}(\vh^\pi)=B_i^{\pi_i}(\vh^\theta)=0}$ for all $i$.
\end{lemma}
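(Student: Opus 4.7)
The plan is to reduce the statement to the claim that $\vh^\pi - \vh^\theta$ is a constant vector; once this is established, both equalities $B^{\theta_i}_i(\vh^\pi) = B^{\pi_i}_i(\vh^\theta) = 0$ follow from the shift-invariance of the advantage functional, since $B^{\theta_i}_i(\vh^\theta) = 0$ trivially (action $\theta_i$ has no advantage over itself under $\theta$) and analogously for $B^{\pi_i}_i(\vh^\pi)$. As a preparatory step, I would note that since $\pi$ is Bellman optimal and unichain, any solution of \eqref{eq:bias_eval1} for $\pi$ differs from an optimal bias in $H$ by a constant vector and hence lies in $H$ itself; the same holds for $\theta$. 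Applying Lemma~\ref{lem:opt_pol} then yields two pieces of information I will reuse: $B^{\theta_i}_i(\vh^\pi) \le 0$ for every $i$ with equality on $\gR^\theta$, and symmetrically $B^{\pi_i}_i(\vh^\theta) \le 0$ with equality on $\gR^\pi$.

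The key algebraic step is to subtract the two bias equations. From \eqref{eq:bias_eval1} applied to $\pi$, one has $(\mI - \mP^\pi)\vh^\pi = \vr^\pi - \vg^*$, while a direct rewriting gives $(\mI - \mP^\pi)\vh^\theta = \vr^\pi - \vg^* - \vb^\pi$, where I set $b^\pi_i := B^{\pi_i}_i(\vh^\theta)$. Subtracting yields
\[(\mI - \mP^\pi)(\vh^\pi - \vh^\theta) = \vb^\pi,\]
whose right-hand side is non-positive and vanishes on $\gR^\pi$. Restricting this identity to the invariant set $\gR^\pi$, on which $\mP^\pi$ is an irreducible stochastic matrix (by unichain-ness of $\pi$), the null space of the restriction of $\mI-\mP^\pi$ consists of constant vectors, so $\vh^\pi - \vh^\theta$ is constant on $\gR^\pi$. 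The symmetric argument gives constancy on $\gR^\theta$, and the hypothesis $\gR^\pi \cap \gR^\theta \neq \emptyset$ forces the two constants to agree; call the common value $c$.

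The step I expect to be the main obstacle is extending this constancy from $\gR^\pi \cup \gR^\theta$ to the entire state space, since on transient states $\vb^\pi$ and $\vb^\theta$ are only known to be non-positive, not zero. For this, I would set $\vd := \vh^\pi - \vh^\theta - c\vone$, which vanishes on $\gR^\pi \cup \gR^\theta$ and still satisfies $(\mI - \mP^\pi)\vd \le \vzero$. Iterating gives $\vd \le (\mP^\pi)^t \vd$ for every $t \ge 0$; passing to the Cesaro limit $\bar{\mP}^\pi$, whose mass is supported on $\gR^\pi$ where $\vd$ vanishes, yields $\vd \le \vzero$. The symmetric computation with $\theta$ in place of $\pi$ (applied to $-\vd$, using that $(\mI-\mP^\theta)\vone=\vzero$) gives $\vd \ge \vzero$, so $\vd = \vzero$ and $\vh^\pi = \vh^\theta + c\vone$. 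The concluding step then reduces to the trivial observation recorded in the first paragraph.
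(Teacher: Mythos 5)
Your proof is correct and rests on the same two pillars as the paper's: the sign information on the cross-advantages $B^{\pi_i}_i(\vh^\theta)\le 0$ (with equality on $\gR^\pi$) coming from $\vh^\theta\in H$ and Lemma~\ref{lem:opt_pol}, and the Cesaro-limit sandwich $\vd\le\bar{\mP}^\pi\vd$ combined with unichain-ness and the common recurrent state to pin down the constant. The only difference is organizational — you first establish constancy on each recurrent class and then extend to transient states, whereas the paper does both at once via the inequalities $c^\theta\le h^\pi_i-h^\theta_i\le c^\pi$ — and your preliminary remark that every solution of \eqref{eq:bias_eval1} for a unichain Bellman optimal policy lies in $H$ is a small point the paper leaves implicit.
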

\begin{proof}
    Since $\pi$ and $\theta$ are Bellman optimal, $\vh^\pi,\vh^\theta\in H$.
    In consequence, we have
    \begin{align*}
        \vh^\pi \ge \vr^\theta -\vg^* +\mP^\theta\vh^\pi.
    \end{align*}
    By Lemma~\ref{lem:opt_pol}~\ref{it:opt_pol1}, the above inequality is an equality for all $i\in\gR^\theta$ because $\theta$ is gain optimal.

    As $\vh^\theta$ satisfies \eqref{eq:bias_eval1}, we have
    \begin{align*}
        \vh^\theta -\vh^\pi &\le \vr^\theta -\vg^* +\mP^\theta\vh^\theta -(\vr^\theta -\vg^* +\mP^\theta\vh^\pi) = \mP^\theta(\vh^\theta -\vh^\pi),
    \end{align*}
    with equality for all state $i\in\gR^\theta$. This shows that for all $t$, $\vh^\theta -\vh^\pi\le (\mP^\theta)^t(\vh^\theta -\vh^\pi)$ which implies that $\vh^\theta -\vh^\pi\le \bar{\mP}^\theta(\vh^\theta -\vh^\pi)$ with equality for all states $i\in\gR^\theta$. Similarly, $\vh^\pi -\vh^\theta \le \bar{\mP}^\pi(\vh^\pi -\vh^\theta)$ with equality for any state $i\in\gR^\pi$.

    Let $c^\pi_i=\sum_{j\in\gS}\bar{P}^\pi_{ij}(h_j^\pi-h_j^\theta)$ and $c^\theta_i=\sum_{j\in\gS}\bar{P}^\theta_{ij}(h_j^\pi-h_j^\theta)$. By what we have just shown, for all state $i$, we have
    \begin{align*}
        c^\theta_i \underbrace{\le}_{\text{equality if $i\in\gR^\theta$}} h_i^\pi-h_i^\theta \underbrace{\le}_{\text{equality if $i\in\gR^\pi$}} c^\pi
    \end{align*}
    As both policies are unichain, $c^\pi_i$ and $c^\theta_i$ do not depend on $i$.  Moreover, if there exists $i\in\gR^\theta\cap\gR^\pi$, we have $c^\pi_i=c^\theta_i =: c$. In consequence,  $h_i^\pi-h_i^\theta=c$ for all state $i$. 
\end{proof}

\subsubsection{Unicity of Bellman optimal policy}
\label{ssec:unicity}

% Given a MDP, we give a criterion to verify when a Bellman optimal policy is unique.
% \begin{lemma}
%     \label{lem:unique_imply_unichain}
%     Given a MDP, if there is a single Bellman optimal policy, then that policy induces a \emph{unichain} Markov chain.
% \end{lemma}
% \begin{proof}
% This is because if a Bellman optimal policy $\pi$ is multichain, then there exists at least two unichain Bellman optimal policies whose recurrent class is a recurrent class of $\pi$.
% \end{proof}

\begin{lemma}
    \label{lem:unicity_BO}
    Let $\pi$ be a Bellman optimal policy that is unichain. If $\pi$ is not the unique Bellman optimal policy, then there exists a state $i$ and an action $a\neq\pi_i$ such that $B_i^a(\vh^\pi)=0$.
\end{lemma}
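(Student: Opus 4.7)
The plan is to prove the statement directly by assuming a second Bellman optimal policy $\theta \neq \pi$ and producing a state $i$ together with an action $a=\theta_i\neq\pi_i$ such that $B_i^a(\vh^\pi)=0$. Since $\pi$ is Bellman optimal, the bias $\vh^\pi$ lies in $H$; and since a Bellman optimal policy is always gain optimal, Lemma~\ref{lem:opt_pol}\ref{it:opt_pol1} can be applied to $\theta$ with the bias $\vh^\pi$. This immediately yields $B_i^{\theta_i}(\vh^\pi)=0$ for every $i\in\gR^\theta$.

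From here I would split into two cases. If there already exists $i\in\gR^\theta$ with $\theta_i\neq\pi_i$, the proof is finished by taking $a=\theta_i$. The delicate case is when $\theta$ and $\pi$ agree on all of $\gR^\theta$ but differ at some transient state of $\theta$. In this case the plan is to show that $\theta$ itself must be unichain with $\gR^\theta=\gR^\pi$. The key observation will be that, for any recurrent class $C$ of $\theta$, the identity $\theta=\pi$ on $C$ forces the rows of $\mP^\pi$ and $\mP^\theta$ indexed by $C$ to coincide, so $C$ is closed and communicating under $\mP^\pi$ as well, and therefore a recurrent class of $\pi$; the unichain hypothesis on $\pi$ then pins down $C=\gR^\pi$.

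Once $\theta$ has been shown to be unichain with $\gR^\pi\cap\gR^\theta=\gR^\pi\neq\emptyset$, Lemma~\ref{lem:equi_bias} applied to the pair $(\pi,\theta)$ upgrades the identity $B_i^{\theta_i}(\vh^\pi)=0$ from the recurrent states of $\theta$ to every state $i\in\gS$. Since $\theta\neq\pi$, some state $i$ witnesses $\theta_i\neq\pi_i$, and the action $a=\theta_i$ then satisfies the required equality.

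The main obstacle is the middle step, namely the argument that whenever $\theta$ agrees with $\pi$ on $\gR^\theta$, the policy $\theta$ must share its unique recurrent class with $\pi$. This is precisely where the unichain assumption on $\pi$ enters the proof, and it is what allows one to lift $B_i^{\theta_i}(\vh^\pi)=0$ off of $\gR^\theta$ via Lemma~\ref{lem:equi_bias}; without this promotion to a unichain companion of $\pi$, the bias $\vh^\theta$ would not be comparable to $\vh^\pi$ up to a constant and the argument would break down.
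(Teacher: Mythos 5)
Your proof is correct and follows essentially the same route as the paper's: take a second Bellman optimal policy $\theta\neq\pi$, apply Lemma~\ref{lem:opt_pol} with the bias $\vh^\pi$ to get $B_i^{\theta_i}(\vh^\pi)=0$ on $\gR^\theta$, finish immediately if $\theta$ and $\pi$ already differ there, and otherwise invoke Lemma~\ref{lem:equi_bias} to extend the identity to all states. Your explicit argument that the agreement of $\theta$ and $\pi$ on $\gR^\theta$ forces $\gR^\theta=\gR^\pi$ and makes $\theta$ unichain supplies the detail that the paper's proof only asserts.
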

\begin{proof}
    % 1. If $\pi$ is the unique Bellman optimal policy, then for all $i$ and $a\neq\pi_i$, $B_i^a(\vh^\pi) <0$.
    % First of all, $\pi$ is unichain.
    % From \cite{putermanMarkovDecisionProcesses1994}, $\vh^\pi$ is uniquely determined up to a constant vector.
    % That is, any $\vh\in H$ is a translation of $\vh^\pi$ by a constant vector.
    % In consequence, $B^a_i(\vh^\pi)$ is uniquely determined.
    % Moreover, $\vg^*=g^*\vone$ where $g^*\in\real$.
    % Since $\vh^\pi\in H$, $B^a_i(\vh^\pi) \le0$ for all $i$ and $a\in\gA$.
    % Suppose that there exists a state $i$ and action $a\neq\pi_i$ such that $B^a_i(\vh^\pi)=0$.
    % Then Definition~\ref{defn:BO_policy} implies that the policy $\theta$ with $\theta_i=a$ and $\theta_j =\pi_j$ for all $j\neq i$ is Bellman optimal.
    % This contradicts the unicity of $\pi$.

    Let $\theta\neq\pi$ be another Bellman optimal policy. Since $\theta$ is gain optimal and $\vh^\pi\in H$, Lemma~\ref{lem:opt_pol} implies that $B_i^{\theta_i}(\vh^\pi) =0$ for all $i\in\gR^\theta$. If there exists $i\in\gR^\theta$ such that $\theta_i\neq\pi_i$, then the proof is concluded.  Otherwise, $\theta_i=\pi_i$ for all $i\in\gR^\theta$. This show that $\pi$ and $\theta$ coincide for all recurrent states of $\theta$ and that $\gR^\theta=\gR^\pi$. Moreover, as $\pi$ is unichain, $\theta$ is also unichain. Hence, Lemma~\ref{lem:equi_bias} implies that $B_i^{\theta_i}(\vh^\pi)=0$ for all $i$. Since $\theta\neq\pi$, there exists at least one state $i\in\gS$ such that $\theta_i\neq\pi_i$.
\end{proof}

\subsection{Unichain property}

\begin{lemma}
\label{lem:invertible}
Given a two-action MDP $\langle [n], \{0,1\}, r, P\rangle$, let $\mP^\pi$ be the transition matrix under a Bellman optimal policy $\pi$.
\reviewedFirstRev{Policy $\pi$ is \emph{unichain} if and only if the matrix}
\begin{align*}
    \mA^{\pi}
        = \left[\begin{array}{ccccc}
                1 & - P_{12}^{\pi} & \dots & -P_{1n}^{\pi}\\
                1 & 1-P_{22}^{\pi} & \dots & -P_{2n}^{\pi}\\
                \vdots\\
                1 &  -P_{n2}^{\pi} & \dots &1-P_{nn}^{\pi}
    \end{array}\right]
\end{align*}
is invertible.
\end{lemma}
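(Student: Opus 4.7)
The plan is to interpret the homogeneous linear system $\mA^\pi \vx = \vzero$ as a Poisson equation in the Markov chain $\mP^\pi$ with zero reward. Writing $\vx = (g, h_2, \ldots, h_n)^\top$ and setting $h_1 := 0$, a direct inspection of the definition of $\mA^\pi$ shows that $\mA^\pi \vx = \vzero$ is equivalent to
\begin{equation*}
    g \vone + \vh = \mP^\pi \vh, \qquad h_1 = 0.
\end{equation*}
Hence the dimension of $\ker \mA^\pi$ equals the dimension of the space of pairs $(g, \vh)$ satisfying this system.

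For the direction ``unichain $\Rightarrow$ invertible'', pick any stationary distribution $\vmu^\top$ of $\mP^\pi$ (so $\vmu^\top \mP^\pi = \vmu^\top$, $\vmu^\top \vone = 1$). Left-multiplying the equality $g \vone + \vh = \mP^\pi \vh$ by $\vmu^\top$ yields $g + \vmu^\top \vh = \vmu^\top \vh$, hence $g = 0$. This reduces the system to $\vh \in \ker(\mI - \mP^\pi)$. When $\pi$ is unichain, a standard Markov chain result (\emph{e.g.} \cite[Appendix A.4]{putermanMarkovDecisionProcesses1994}) gives $\ker(\mI - \mP^\pi) = \real \vone$, and the anchor $h_1 = 0$ forces $\vh = \vzero$. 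So $\vx = \vzero$ and $\mA^\pi$ is invertible.

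For the contrapositive direction ``multichain $\Rightarrow$ non-invertible'', suppose $\mP^\pi$ has $c \ge 2$ recurrent classes $R_1, \ldots, R_c$. For each $k$, let $p^k_i$ denote the probability that, starting from state $i$, the chain is eventually absorbed into $R_k$. The vectors $\vp^1, \ldots, \vp^c$ all lie in $\ker(\mI - \mP^\pi)$, they sum to $\vone$, and they are linearly independent, so $\dim \ker(\mI - \mP^\pi) \ge c \ge 2$. Consequently there exists $\vh \in \ker(\mI - \mP^\pi)$ that is not collinear with $\vone$; subtracting $h_1 \vone$ (which is itself in the kernel) yields a nonzero $\vh$ with $h_1 = 0$ and $\mP^\pi \vh = \vh$. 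Taking $(g, \vh) = (0, \vh)$ produces a nonzero element of $\ker \mA^\pi$, so $\mA^\pi$ is singular.

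The only nontrivial ingredient is the Markov chain fact that $\dim \ker(\mI - \mP^\pi)$ equals the number of recurrent classes of $\mP^\pi$; once this is cited, the rest is straightforward linear algebra. The reformulation of $\mA^\pi \vx = \vzero$ as a Poisson-type equation is the key step that makes both implications fall into place simultaneously.
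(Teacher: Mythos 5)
Your proof is correct, but it goes through the \emph{right} null space of $\mA^\pi$, whereas the paper works with the \emph{left} null space. The paper takes a row vector $\vu$ with $\vu^\top\mA^\pi=\vzero$, shows that the resulting equations are equivalent to $\vu^\top\mP^\pi=\vu^\top$ together with $\sum_i u_i=0$, and then invokes the fact that the space of left fixed vectors of $\mP^\pi$ is one-dimensional (spanned by a stationary distribution, whose entries sum to $1$) exactly when $\pi$ is unichain. You instead read $\mA^\pi\vx=\vzero$ as the homogeneous Poisson equation $g\vone+\vh=\mP^\pi\vh$ with the anchor $h_1=0$, kill $g$ by pairing against a stationary distribution, and reduce invertibility to the statement that $\dim\ker(\mI-\mP^\pi)$ equals the number of recurrent classes, which you certify in the multichain case via absorption probabilities. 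The two arguments are dual: the paper's key external fact concerns left eigenvectors (stationary distributions of recurrent classes), yours concerns right eigenvectors (harmonic functions). Your version has the advantage of explaining the lemma in the terms the algorithm actually uses --- $\mA^\pi$ is invertible precisely because the gain--bias system \eqref{eq:bellman} with $h_1^\pi=0$ has a unique solution for unichain $\pi$ --- and of exhibiting an explicit singular direction in the multichain case; the paper's version is marginally shorter because the constraint $\sum_i u_i=0$ immediately rules out genuine stationary distributions without needing to anchor a bias. Both correctly note that the ``Bellman optimal'' hypothesis in the statement is never used.
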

\begin{proof}
    $\mA^\pi$ is not invertible if there exists a column vector $\bff{u}\neq\bff{0}$ such that $\bff{u}^\top \mA^\pi=\bff{0}$.
    We prove that such $\bff{u}$ does not exist when \reviewedFirstRev{policy $\pi$} is unichain.
    Let $\bff{u}\in\real^n$ be an arbitrary vector such that $\bff{u}^\top \mA^\pi=\bff{0}$.
    Then, we have
    \begin{align*}
        \begin{cases}
            \sum_{i=1}^{n}u_i &= 0 \\
            u_i-\sum_{j=1}^{n}u_jP_{ji}^\pi &= 0, \text{ for } 2\le i\le n
        \end{cases}
    \end{align*}
    Combining the above equation with $\sum_j P^\pi_{ji}=1$, we get:
    \begin{align*}
        u_1 &= -\sum_{i=2}^n u_i
        =-\sum_{i=2}^n\sum_{j=1}^n u_j P^\pi_{ji}
        =-\sum_{j=1}^n u_j (1-P^\pi_{j1})
        = \sum_{j=1}^n u_j P^\pi_{j1},
    \end{align*}
    where we used that $\sum_{i=1}^{n}u_i=0$ to obtain the last equality.  
    % \begin{align*}
    %     0 &= \sum_{i=1}^{n}u_i\\
    %     &= u_1 +\sum_{i=2}^{n}u_i \\
    %     &= u_1 +\sum_{i=2}^{n}\sum_{j=1}^{n}u_jP_{ji}^\pi = u_1 +\sum_{j=1}^{n}u_j\sum_{i=2}^{n}P_{ji} = u_1 +\sum_{j=1}^{n}u_j(1-P_{j1}^\pi) \\
    %     &= u_1 -\sum_{j=1}^{n}u_jP_{j1}^\pi
    % \end{align*}
    This shows that 
    \begin{align*}
        \begin{cases}
            \sum_{i=1}^{n}u_i &= 0 \\
            \bff{u}^\top\mP^\pi &=\bff{u}^T
        \end{cases}
    \end{align*}
    \reviewedFirstRev{The set of vector $\vu$ such that $\bff{u}^\top\mP^\pi$ is a vector space. It is of dimension $1$ if and only if $\pi$ is unichain, in which case the vector} $\vu$ verifying $\bff{u}^\top\mP^\pi=\bff{u}^\top$ are multiples of a stationary distribution under policy $\pi$ \citep{putermanMarkovDecisionProcesses1994}. \reviewedFirstRev{Thus, if the policy $\pi$ induces a unichain Markov chain, then } $\sum_{i=1}^{n}u_i = 0$ implies $\vu = \vzero$. \reviewedFirstRev{If policy $\pi$ is not unichain, there exists $\vu\ne\vzero$ such that $\sum_{i=1}^{n}u_i = 0$.}
%    However, such vector is a distribution, \emph{i.e.}, $\sum_{i=1}^{n}u_i=1$.
\end{proof}

\section{Implementations}
\label{apx:implementation}

\subsection{Arithmetic complexity of Subroutine~\ref{algo:update_X} and memory usage}

Recall that in Subroutine~\ref{algo:update_X}, we compute the values $X^{\ell}_{ij}$ by doing the update (for all iteration $k$, for all $\ell=1$ to $k$ and for all $i\in[n]$ or all $i\in\pi^{\ell+1}$ if we do not test indexability):
\begin{align}
    \label{eq:apx_update}
    X_{i\sigma^{k}}^{\ell+1} = X_{i\sigma^{k}}^{\ell} -\displaystyle\frac{X^{\ell}_{i\sigma^{\ell}}}{1+X^{\ell}_{\sigma^{\ell}\sigma^{\ell}}}X^{\ell}_{\sigma^{\ell}\sigma^{k}}
\end{align}
%\begin{align}
%    %\label{eq:apx_update}
%    X_{i\sigma^{k}}^{\ell+1} = X_{i\sigma^{k}}^{\ell} -X^{\ell+1}_{i\sigma^{\ell}} X^{\ell}_{\sigma^{\ell}\sigma^{k}}     
%\end{align}

If we test indexability, there are $\sum_{k=1}^nk n = n^3/2+O(n^2)$ such updates. If we do not test indexability, there are $\sum_{k=1}^n \sum_{\ell=1}^k (n-\ell) = n^3/3 + O(n^2)$ such updates.  Below, we show each update of Equation~\eqref{eq:apx_update} can be done in two arithmetic operations (one addition and one multiplication), which leads to the complexity of $n^3+O(n^2)$ (or $(2/3)n^3+O(n^2)$) arithmetic operations for the computation of all the needed $X^k_{ij}$.  We also show how to reduce the memory size to $O(n^2)$.

Let $W_{i\ell} := X^\ell_{i \sigma^\ell}/(1+X^\ell_{\sigma^\ell\sigma^\ell})$ and $V_i :=X^\ell_{i\sigma^k}$. Using this, Equation~\eqref{eq:apx_update} can be rewritten as: 
\begin{align}
    \label{eq:apx_update_V}
    V_{i} = V_{i} - W_{i\ell} V_{\sigma^{\ell}}.
\end{align}

This results in the following loop at iteration $k$:
\begin{itemize}
    \item Initialize $V_{i}$ from $X^1_{i\sigma^k}$. 
    \item For all $\ell\in\{1,\dots, k-1\}$, and all $i\in[n]$ (or $i\in\pi^{\ell+1}$), apply \eqref{eq:apx_update_V}. 
    \item Compute $W_{ik}= V_{i}/(1+V_{\sigma^k})$
    %\item For $\ell=k$, and all $i\in[n]$ (or $i\in\pi^{\ell+1}$), apply \eqref{eq:apx_update_V}.
\end{itemize}
Note that the value of $\mV$ is not necessary for iteration $k$ (only the values of $W_{i\ell}$ are needed). This shows that the algorithm can be implemented with a memory $O(n^2)$.% Moreover, each update of \eqref{eq:apx_update} can be implemented by $2$ arithmetic operations (one multiplication and one subtraction). 

\subsection{Speedup when not checking the indexability: First found go last} 
%TODO:continue here
When the indexability is not tested, the update \eqref{eq:apx_update_V} is computed for all $i\in\pi^{\ell+1}$. This creates inefficiencies (due to  inefficient cache usage) because the elements $V_{i}$ are not accessed sequentially.

To speedup the memory accesses, our solution is to sort the items during the execution of the algorithm. At iteration $k$, the algorithm computes $\sigma^{k}$. When this is done, our implementation switches all quantities in positions $\sigma^{k}$ and $n-k+1$. These quantities are $\delta, y, z, \mW$ and $\mX$. For instance, once $\sigma^1$ is found, we know that the state at position $n$ is state $n$ and we do the following switches:
\begin{align*}
    \delta_{\sigma^1}, \delta_{n} &\rightarrow \delta_{n}, \delta_{\sigma^1} \\
    y^1_{\sigma^1}, y^1_{n} &\rightarrow y^1_{n}, y^1_{\sigma^1} \\
    z^1_{\sigma^1}, z^1_{n} &\rightarrow z^1_{n}, z^1_{\sigma^1} \\
    \mW_{\sigma^1 :}, \mW_{n :} &\rightarrow \mW_{n:}, \mW_{\sigma^1:} \\
    \text{ and } \mX_{\sigma^1 :}, \mX_{n :} &\rightarrow \mX_{n :}, \mX_{\sigma^1 :}.
\end{align*}

To do so, we need an array to store all states such that at iteration $k$, the first $n-k$ states of the array are the active states. We will need to track the position of each state in such array.

\section{Analysis of the experimental time to solve a linear system}
\label{apx:inversion}

In this section, we report in Figure~\ref{fig:benchmark_inverse} the time taken by the default implementation to solve a linear system of the form $\mA\mX=\mB$ where $\mA$ and $\mB$ are two square matrices. To obtain this figure, we generated random (full) matrices where each entry is between $0$ and $1$ and use the function  \texttt{scipy.linalg.solve} from the library \texttt{scipy}. The reported numbers suggest that the complexity of the solver is closer to $O(n^{2.8})$ than to $O(n^3)$, although we agree that the difference between the $O(n^{2.8})$ and the $O(n^3)$ curves is small. Note that this is in accordance with the papers \citep{huang2016strassen,huang2018practical} that claim that the fastest implementations of matrix multiplication and inversion are based on Strassen's algorithm and should therefore be in $O(n^{2.8})$. 

\begin{figure}[ht]
    \centering
    \begin{tabular}{cc}
        \begin{minipage}{0.5\linewidth}
            \includegraphics[width=\linewidth]{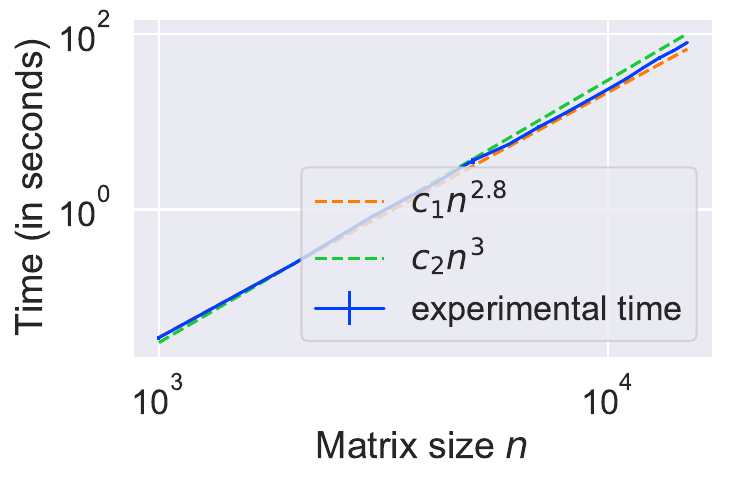}
        \end{minipage}
        %\includegraphics[width=0.4\linewidth]{execution_time_inversion}
        %&\includegraphics[width=0.4\linewidth]{execution_time_multiplication}\\
        &\begin{minipage}{0.45\linewidth}
            \begin{tabular}{|c|c|}
                \hline
                $n$ & Time (in second)\\\hline
                1000 & $ 0.03\pm0.02$ \\
                2000 & $ 0.24\pm0.01$ \\
                4000 & $ 1.83\pm0.05$ \\
                5000 & $ 3.6\pm0.3$ \\
                7000 & $ 8.7\pm0.5$ \\
                10000 & $23.7\pm0.6$ \\
                13000 & $54.0\pm2.5$ \\
                15000 & $80.8\pm1.8$ \\                
                \hline
            \end{tabular}
        \end{minipage}
    \end{tabular}
    \caption{Time taken of the default implementations \texttt{scipy.linalg.solve} of scipy to solve a linear system $\mA\mX=\mB$ where $\mA$ and $\mB$ are two square $n\times n$ matrices.}
    \label{fig:benchmark_inverse}
\end{figure}

%%%%%%%%%%%%%%%%%%%%%%%%%%%%%%%%%%%%%%%%%%%%%%%%%%%%%%%%%%%%%%%%%%%%%%%%%%%%%%%%%%%%%%%%%%%%%%%%%%%
%%%%%%%%%%%%%%%%%%%%%%%%%%%%%%%%%%%%%%%%%%%%%%%%%%%%%%%%%%%%%%%%%%%%%%%%%%%%%%%%%%%%%%%%%%%%%%%%%%%
\section{Detailed comparison with \texorpdfstring{\citep{akbarzadeh2020conditions} and \citep{nino2020fast}}{Akbarzadeh et al. 2020 and Nino Mora 2020.}}
\label{apx:comparison}

In this section, we compare our algorithm with two main related works for finite-state restless bandits problem.

\subsection{Comparison with \texorpdfstring{\citep{akbarzadeh2020conditions}}{Akbarzadeh2020 et al. 2022}}
The paper presents an  algorithm that computes Whittle indices  in $O(n^3)$ (no explicit constant before $n^3$ is given) for all indexable problems. Despite following a different approach, our algorithm for computing Whittle index can be viewed as a refinement of this work. Let us recall once again that our approach also allows one to check the indexability of general restless bandits.

In the following, we show how we can refine the work of \citep{akbarzadeh2020conditions} to obtain an algorithm that is exactly the same as ours.
%Let $\Phi^\pi$ be a square matrix and $D^{\pi}$ and $N^{\pi}$ be two vectors defined as in \citep{akbarzadeh2020conditions},
Let $D^{\pi}$ and $N^{\pi}$ be two vectors defined as in \citep{akbarzadeh2020conditions} (we use the same notation,  $D^{\pi}$ and $N^{\pi}$, as the cited paper),
\begin{align*}
    %\Phi^\pi = (I-\beta P^\pi)^{-1}, \quad
    D^{\pi} =(1-\beta)(\mI-\beta \mP^\pi)^{-1}\vr^{\pi}, \quad\text{and}\quad
    N^{\pi} =(1-\beta)(\mI-\beta \mP^\pi)^{-1}\vpi.
\end{align*}
Then, we have $D^{\pi} -\widx N^{\pi}=(1-\beta)\vu^{\pi}(\widx)$ where $\vu^{\pi}(\widx)$ is defined as in \eqref{eq:h is linear discounted}. In our proposition, at each iteration $k$, we compute $\mu_i^k$ by Line~\ref{algo2:muki}. Instead, it is defined in \citep{akbarzadeh2020conditions} by two steps:
\begin{enumerate}
\item for all state $j\in[n]$ such that $N_j^{\pi^k\setminus\{i\}}{\neq} N_j^{\pi^k}$, one needs to compute $\mu^k_{ij}{=}\displaystyle\frac{D_j^{\pi^k\setminus\{i\}} -D_j^{\pi^k}}{ N_j^{\pi^k\setminus\{i\}} -N_j^{\pi^k}}$ 
\item \label{it:mu^k_i*} compute $\mu^k_i=\displaystyle\argmin_{j\in[n]:N_j^{\pi^k\setminus\{i\}}\neq N_j^{\pi^k}} \mu^k_{ij}$.
\end{enumerate}
From \citep[Theorem 2]{akbarzadeh2020conditions}, in an indexable problem, for state $\sigma^k$, there exists a state $j\in[n]$ such that $N_j^{\pi^k\setminus\{\sigma^k\}}\neq N_j^{\pi^k}$. Now, suppose that for any active state $i\in\pi^k$, there exists $j\in[n]$ such that $N_j^{\pi^k\setminus\{i\}}\neq N_j^{\pi^k}$. Using the Sherman-Morrison formula, we have\footnote{the expression of $D^{\pi\setminus\{i\}}$ and $N^{\pi\setminus\{i\}}$ given by Equation~$(18)$ in \citep{akbarzadeh2020conditions} are erroneous.}
\begin{align*}
    &D^{\pi^k\setminus\{i\}} -D^{\pi^k} = -\frac{(1-\beta)\delta_i +\tilde{\mDelta}_iD^{\pi^k}}{1+\tilde{\mDelta}_i[(\mI-\beta \mP^{\pi^k})^{-1}]_{: i}}[(\mI-\beta \mP^{\pi^k})^{-1}]_{: i}, \\
    &\quad \text{and}\quad N^{\pi^k\setminus\{i\}} -N^{\pi^k} = -\frac{(1-\beta) +\tilde{\mDelta}_iN^{\pi^k}}{1+\tilde{\mDelta}_i[(\mI-\beta \mP^{\pi^k})^{-1}]_{: i}}[(\mI-\beta \mP^{\pi^k})^{-1}]_{: i}.
\end{align*}
Then, for any $j\in[n]$ such that $N_j^{\pi^k\setminus\{i\}}\neq N_j^{\pi^k}$, $\mu^k_{ij}=\displaystyle\frac{(1-\beta)\delta_i +\tilde{\mDelta}_iD^{\pi^k}}{(1-\beta) +\tilde{\mDelta}_iN^{\pi^k}}$ which does not depend on $j$.
Then, we simply have ${\mu^k_i=\displaystyle\frac{(1-\beta)\delta_i +\tilde{\mDelta}_iD^{\pi^k}}{(1-\beta) +\tilde{\mDelta}_iN^{\pi^k}}}$. %=\frac{\tilde{\Delta}_i((1-\beta)\vu^{\pi^k}(\mu^k_i)+\mu_i^kN^{\pi^k})}{\tilde{\Delta}_iN^{\pi^k}}$.
Also, we have 
\begin{align*}
    \tilde{\mDelta}_iN^{\pi^k}&=(1-\beta)\tilde{\mDelta}_i(\mI-\beta \mP^{\pi^k})^{-1}\vpi^k=-(1-\beta)y^k_i \quad\text{and}\\
    \tilde{\mDelta}_iD^{\pi^k}&={(1-\beta)\tilde{\mDelta}_i\vu^{\pi^k}(\mu^{k-1}_{\min}) +\mu^{k-1}_{\min}\tilde{\mDelta}_i N^{\pi^k}} =(1-\beta)z^{k-1}_i -(1-\beta)\mu^{k-1}_{\min}y^k_i.
\end{align*}
So, replacing these terms in $\mu^k_i$, we get the formula in Equation~\eqref{eq:mu^k_i_from_y} of our work.

Note that the algorithm of \citep{akbarzadeh2020conditions} was only developed for the discounted case.  Our approach for the time-average reward case is different because we use the active advantage function defined in \eqref{eq:advantage} instead of working with the expected discounted total reward $D^{\pi^k}$ and total number of activations $N^{\pi^k}$ under policy $\pi^k$. 
\reviewedFirstRev{Note that the counterpart of $D^{\pi^k}$ in undiscounted MDP is the average reward $g^{\pi^k}$ and as we have seen in Appendix~\ref{apx:discussion_index}, utilizing average reward optimality is not rich enough for undiscounted MDPs with transient states.}
In addition, our code is also optimized to avoid unnecessary computation and to reduce memory usage. Finally, the way we do the update of our matrix $\mX$ makes it possible to obtain a subcubic algorithm whereas their approach does not (see also below).

\subsection{Comparison with the algorithm of \texorpdfstring{\citep{nino2020fast}}{Nino Mora 2020}}
The algorithm \citep{nino2020fast} has the best complexity up to date for discounted restless bandit. There is a square matrix $\bff{A}$ that plays a similar role as the square matrix $\mX$ in our proposed algorithm. The most costly operations in the algorithm of \citep{nino2020fast} is to update their matrix $\bff{A}$ at each iteration and it is done by Equation~\eqref{eq:update_X_naif} that we recall here (using the same notation $\mA$ as the cited paper):
\begin{align}
    \label{eq:15}
    \text{for } i,j\in\pi^k,\ \bff{A}^{k+1}_{ij}= \bff{A}^{k}_{ij} -\frac{\bff{A}^{k}_{i\sigma^k}}{\bff{A}^{k}_{\sigma^k\sigma^k}}\bff{A}^{k}_{\sigma^kj}.
\end{align}
This incurs a total complexity of $(2/3)n^3+O(n^2)$ arithmetic operations.
As mentioned in Section~\ref{ssec:two_third_algo}, if we updated $\mX^{k+1}$ as given by \eqref{eq:update_X_naif}, our algorithm would also have a $(2/3)n^3+O(n^2)$ complexity but this version of update cannot be optimized by using  fast matrix multiplication.

\subsection{Their approach cannot be directly transformed into a subcubic algorithm}
\label{apx:eq_15}

In addition to all the previously cited differences, one of the major contribution of our algorithm with respect to \citep{akbarzadeh2020conditions,nino2020fast} is that the most advanced version of our algorithm runs in a subcubic time. The approach\footnote{Equation~(18) of \citep{akbarzadeh2020conditions}, which is central to their algorithm is the same as the above equation \eqref{eq:15}.} used  in \citep{akbarzadeh2020conditions,nino2020fast} is to update the full matrix $\mX^{k+1}$ at iteration $k$, by using \eqref{eq:15}.  This idea is represented in Figure~\ref{fig:apx_fast_mm}(a): for a given $\ell$, their algorithm compute $\mX^\ell_{:\sigma^k}$ for all $k$ (i.e., the full vertical lines represented by arrows).  Our first Subroutine~\ref{algo:update_X} uses an horizontal approach based on \eqref{eq:X^ell+1-i}, which we recall here:
\begin{align*}
    %\label{APX_eq:X^ell+1-i}
    X^{\ell+1}_{i\sigma^k} &= X^{\ell}_{i\sigma^k} - \frac{X^{\ell}_{i\sigma^\ell}}{1+X^{\ell}_{{\sigma^\ell\sigma^\ell}}} X^{\ell}_{{\sigma^\ell} \sigma^k}.
\end{align*}
At iteration $k+1$, we use $\mX^{1}_{: \sigma^k}$ to compute all values of $\mX^\ell_{:\sigma^k}$ up to $\ell=k+1$.  This is represented in Figure~\ref{fig:apx_fast_mm}(b).  Our approach can be used to obtain the subcubic algorithm illustrated in Figure~\ref{fig:apx_fast_mm}(c) by using subcubic algorithms for multiplication.

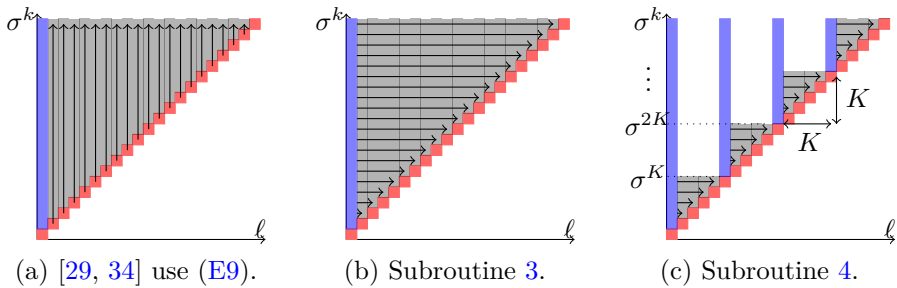
\begin{figure}[ht]
    \centering
    \begin{tabular}{ccc}
        \begin{tikzpicture}[scale=0.14, shorten >=2pt]
            \draw (0,0) edge[->] (22,0);
            \draw (0,0) edge[->] (0,22);
            \node at (21,1) {$\ell$};
            \node at (-1.5,21) {$\sigma^k$};
            \foreach \i in {0,...,20} \draw[fill,red!60] (\i,\i) rectangle (\i+1,\i+1);
            \foreach \i in {0}{
               \draw[fill,blue,opacity=0.5] (\i,\i+1) rectangle (\i+1,21);
               \foreach \j in {1,...,19}{
                   \fill[black,opacity=0.3] (\i+\j,\i+\j+1) rectangle (\i+\j+1,\i+21);
               }
            }
            \foreach \i in {1,...,19}{
                \draw (\i+0.5,\i+0.5) edge[->] (\i+0.5,21);
            }
            %\node at (0,-1) {}; % to have the same vertical size as the right figure. 
            % \foreach \j in {1,...,6}{
            %     \foreach \k in {\j,...,6}{
            %         \draw (\j+0.5,\k+0.5) edge[->] (\j-0.7,\k+0.5) (\j+0.5,\k+0.5) edge[->] (\j-0.5,\k-0.5);
            %     }
            % }
            % \node at (6,9) {$\udots$};
            % \node at (3,9) {$\vdots$};
        \end{tikzpicture}
        &
        \begin{tikzpicture}[scale=0.14, shorten >=2pt]
            \draw (0,0) edge[->] (22,0);
            \draw (0,0) edge[->] (0,22);
            \node at (21,1) {$\ell$};
            \node at (-1.5,21) {$\sigma^k$};
            \foreach \i in {0,...,20} \draw[fill,red!60] (\i,\i) rectangle (\i+1,\i+1);
            \foreach \i in {0}{
               \draw[fill,blue,opacity=0.5] (\i,\i+1) rectangle (\i+1,21);
               \foreach \j in {1,...,19}{
                   \fill[black,opacity=0.3] (\i+\j,\i+\j+1) rectangle (\i+\j+1,\i+21);
               }
            }
            \foreach \i in {2,...,20}{
                \draw (1,\i+0.5) edge[->] (\i,\i+0.5);
            }
            %\node at (0,-1) {}; % to have the same vertical size as the right figure. 
            % \foreach \j in {1,...,6}{
            %     \foreach \k in {\j,...,6}{
            %         \draw (\j+0.5,\k+0.5) edge[->] (\j-0.7,\k+0.5) (\j+0.5,\k+0.5) edge[->] (\j-0.5,\k-0.5);
            %     }
            % }
            % \node at (6,9) {$\udots$};
            % \node at (3,9) {$\vdots$};
        \end{tikzpicture}
        &
        \begin{tikzpicture}[scale=0.14, shorten >=2pt]
            \draw (0,0) edge[->] (22,0);
            \draw (0,0) edge[->] (0,22);
            \node at (21,1) {$\ell$};
            \node at (-1.7,21) {$\sigma^k$};
            \foreach \i in {0,...,20} \draw[fill,red!60] (\i,\i) rectangle (\i+1,\i+1);
            \foreach \i in {0,5,10,15}{
               \draw[fill,blue,opacity=0.5] (\i,\i+1) rectangle (\i+1,21);
               \foreach \j in {1,...,5}{
                   \fill[black,opacity=0.3] (\i+\j,\i+\j+1) rectangle (\i+\j+1,\i+6);
                }
                \foreach \j in {2,...,5}{
                    \draw (\i+1,\i+\j+0.5) edge[->] (\i+\j,\i+\j+0.5);
                }
            }
            \draw (11,11) edge[<->] node[below] {$K$} (16,11);
            \draw (16,11) edge[<->] node[right] {$K$} (16,16);
            \node at (-1.7,6) {$\sigma^K$}; \draw[dotted] (-.5,6) -- (5.5,6);
            \node at (-1.7,11) {$\sigma^{2K}$}; \draw[dotted] (-.5,11) -- (10,11);
            \node at (-1.7,16) {$\vdots$};
        \end{tikzpicture} \\
        (a) \citep{akbarzadeh2020conditions,nino2020fast} use \eqref{eq:15}.
        &(b) Subroutine~\ref{algo:update_X}.
        &(c) Subroutine~\ref{algo:FMM}.
    \end{tabular}
    \caption{Comparison of the computation load of \eqref{eq:15} used in \citep{akbarzadeh2020conditions,nino2020fast} with the one of Subroutine~\ref{algo:update_X} and Subroutine~\ref{algo:FMM}. }
    \label{fig:apx_fast_mm}
\end{figure}

This  leads to the next fundamental  question: {why should the computation of Whittle index be harder than matrix inversion (or multiplication)?} To us, the main difference is that when computing Whittle indices, the permutation $\sigma$ is not known a priori but discovered as the algorithm progresses: $\sigma^k$ is only known at iteration $k$. Hence, while all terms of the matrices $X^k_{ij}$ are not needed, it is difficult to know a priori which ones are needed and which ones are not. Hence, a simple divide and conquer algorithm cannot be used. This is why when recomputing $\mX^{k+1}$ in Subroutine~\ref{algo:FMM}, we recompute the whole matrix (the vertical blue line) and not just the part that  will be used to compute the gray zone: we do not know \emph{a priori} what part of $X^{k+1}_{ij}$ will be useful or not.

\end{appendices}

%%===========================================================================================%%
%% If you are submitting to one of the Nature Portfolio journals, using the eJP submission   %%
%% system, please include the references within the manuscript file itself. You may do this  %%
%% by copying the reference list from your .bbl file, paste it into the main manuscript .tex %%
%% file, and delete the associated \verb+\bibliography+ commands.                            %%
%%===========================================================================================%%

\bibliography{refWhittleIndex.bib}% common bib file

%% if required, the content of .bbl file can be included here once bbl is generated
%%\input sn-article.bbl

%% Default %%
%%\input sn-sample-bib.tex%

\end{document}